\documentclass[nopreprintline,preprint,11pt,letterpaper]{elsarticle}
\usepackage[T1]{fontenc}
\usepackage[margin=1in]{geometry}
\usepackage{amsmath,amssymb,amsthm,mathtools,booktabs,enumitem,array,graphicx,makecell}
\usepackage[colorlinks=true,allcolors=blue]{hyperref}
\setlist{leftmargin=*,itemsep=2pt,topsep=4pt}
\newtheorem{theorem}{Theorem}[section]
\newtheorem{lemma}[theorem]{Lemma}
\newtheorem{corollary}[theorem]{Corollary}
\newtheorem{definition}[theorem]{Definition}
\newtheorem{proposition}[theorem]{Proposition}
\newtheorem{remark}[theorem]{Remark}
\numberwithin{equation}{section}

\newcommand{\dist}{\operatorname{dist}}
\newcommand{\poly}{\operatorname{poly}}
\newcommand{\one}{\mathbf 1}

\newcommand{\dr}{\delta_{\rm resp}}
\newcommand{\es}{\varepsilon_{\rm cov}}
\newcommand{\ep}{\varepsilon_{\rm orient}}
\newcommand{\pl}[1]{\mathsf{#1}}
\newcommand{\alphaor}{\alpha_{\rm or}}
\journal{Information and Computation}
\hypersetup{pdftitle={Exact Local Optimality Does Not Compose: The Complexity of Chronological Realization},pdfauthor={Yixin Zhao}}
\begin{document}
\begin{frontmatter}
\title{Exact Local Optimality Does Not Compose: The Complexity of Chronological Realization}
\author[baqis,iop,ucas]{Yixin Zhao\corref{cor1}}
\ead{zhaoyixin22@mails.ucas.ac.cn}
\cortext[cor1]{Corresponding author}
\address[baqis]{Beijing Key Laboratory of Fault-Tolerant Quantum Computing, Beijing Academy of Quantum Information Sciences, Beijing 100193, China}
\address[iop]{Beijing National Laboratory for Condensed Matter Physics, Institute of Physics, Chinese Academy of Sciences, Beijing 100190, China}
\address[ucas]{University of Chinese Academy of Sciences, Beijing 100190, China}
\begin{abstract}
We study a controlled, normalized multi-menu positive-realization problem.
A normalized realization reproduces declared root--word responses using common
row-stochastic transition matrices and a single terminal effect. We compare
three realization complexities: independent local realizations, a static
shared carrier with independent query effects, and chronological shared
realizations. We construct response families for which the local and static
optimum widths both equal $k$, isolating the additional cost imposed by
chronological consistency. An explicit payload--delay family has exact width
$k$ locally and statically but requires exact width $k(L+1)$ under shared
chronological coupling, yielding an unbounded multiplicative separation. For
explicitly listed rational menus, exact shared realizability is
$\exists\mathbb{R}$-complete, while the promise problem of distinguishing zero
defect from defect at least inverse-polynomial is
$\mathsf{PromiseNP}$-complete. These finite-menu hardness results hold with
five control letters and a single Boolean terminal effect. For regular response
families generated by a geometric compiler, rank-tight realizability over the
full infinite language is equivalent to realizability on a polynomial-size
finite core. Consequently, exact rank-tight realizability for the compiled
instances has an $\exists\mathbb{R}$ upper bound, complementing a strongly
bounded-rational $\mathsf{PromiseNP}$-hardness result. 
\end{abstract}
\begin{keyword}
positive realization \sep controlled stochastic systems \sep probabilistic automata \sep
chronological realization \sep state complexity \sep existential theory of the reals \sep
invariant simplices
\end{keyword}
\end{frontmatter}
\section{Introduction and the normalized realization model}
\label{sec:introduction}

A classical probabilistic automaton~\citep{rabin1963probabilistic} consists of
an initial distribution, a family of row-stochastic transition matrices indexed
by an input alphabet, and a terminal evaluation vector. The acceptance
probability of a word is obtained by sequentially multiplying the initial
distribution by the corresponding transition matrices and contracting with
the terminal vector. The minimal-state realization of a prescribed word
function forms a central problem in probabilistic automata
theory~\citep{paz1971probabilistic}.

We study a controlled positive-realization formulation in which the primitive
data are specified by \emph{response menus}: multiple initial roots, control
words, and target response probabilities for declared root--word pairs. The
goal is to construct a unified normalized state space with common row-stochastic
update operators and a single terminal effect that reproduces all prescribed
responses. This setting isolates a structural question: under what conditions
does local predictive sufficiency compose into one compact dynamical system?
A family of controlled experiments may admit low-dimensional local models and
a compact static factorization, yet fail to admit a shared dynamical model at
the same state budget. The minimal state count required for a shared normalized
realization across all menus defines the \emph{chronological shared realization
complexity} (CRC, denoted $C_{\mathrm{seq}}$ in
Section~\ref{sec:controlled-models}). We compare it with two natural baselines:
the state complexity of independent local realizations and the carrier
dimension of a static shared representation with independently chosen query
effects.

This comparison connects to predictive state representations (PSRs), which
characterize controlled stochastic dynamics through predictions of observable
future tests~\citep{littman2001predictive,boots2011closing}, and to spectral and
observable-operator learning, where finite response tables support
latent-dynamics recovery under structural and conditioning
assumptions~\citep{hsu2012spectral,golowich2022observable,
golowich2023filter,liu2023omle}. Our focus is the structural complexity of
exact shared realization: given the response profiles, what additional state
space and computational effort are intrinsically required by chronological
consistency under one common evolution? This formulation addresses the dynamic
realization question independently of finite-sample statistical estimation and
downstream planning~\citep{papadimitriou1987complexity}.

\subsection{Positive realization and probabilistic automata}
\label{sec:positioning}

Positive realization concerns the construction of state-space representations
compatible with positivity constraints~\citep{benvenuti2004tutorial}.
Probabilistic automata formalize this principle in a discrete controlled
setting with nonnegative transitions and word-indexed acceptance
probabilities~\citep{rabin1963probabilistic,paz1971probabilistic}. Classical
automaton equivalence and minimization typically compare a given automaton with
another explicit model~\citep{tzeng1992equivalence,kiefer2014probabilistic,
blondel2000survey}, whereas the primary data here are specified as response
menus. The cross-menu formulation requires common primitive transitions and a
shared terminal effect across all experiments. Local, static, and chronological
widths therefore quantify the separate state costs of fitting individual menus,
sharing a normalized carrier, and enforcing one common dynamic evolution.

The geometry of minimal-state realization is governed by nested polytopes.
Restricted nonnegative matrix factorization is interreducible with the
minimal-covering problem for labeled Markov chains~\citep{chistikov2016restricted}.
Equivalently, given nested polytopes
$\mathcal R_{\rm root}\subseteq\mathcal P$, one seeks an intermediate polytope
$\mathcal Q$ with the fewest vertices such that
$\mathcal R_{\rm root}\subseteq\mathcal Q\subseteq\mathcal P$. This result contrasts with
Bancilhon's 1974 positive answer to the corresponding minimal-covering question~\citep{bancilhon1974minimal}, which was later shown to be false~\citep{chistikov2016restricted}.
In two dimensions, the corresponding minimal nested polygon problem is solvable
in polynomial time~\citep{aggarwal1989nested}. In general dimensions,
nonnegative matrix factorization exhibits irrationality phenomena: rational
input matrices may admit only irrational minimal factors
~\citep{chistikov2017irrationality}.

This algebraic background motivates the real-algebraic framework used here.
Rank-tight chronological realization
(Theorem~\ref{thm:rank-tight-closure}) has an analogous nested structure: the
root responses must lie in a simplex $T$ contained in the response polytope
$\mathcal P$. The additional requirement is dynamical. Every primitive control
letter $c$ induces an affine map $F_c$ on the response space, and the simplex
must satisfy $F_c(T)\subseteq T$ for every $c\in\Gamma$. Restricted
factorization corresponds to a static response table without shared dynamics,
whereas chronological realization requires a common invariant simplex under a
finitely generated semigroup of stochastic affine maps. The computational
consequences of this invariance are the central focus of the paper. Probabilistic
automaton equivalence is decidable in polynomial time
~\citep{tzeng1992equivalence}, while exact chronological realizability of
explicitly listed rational menus is $\exists\mathbb{R}$-complete and its robust
promise variant is $\mathsf{PromiseNP}$-complete
(Theorem~\ref{thm:finite-menu-completeness}).

Nonnegative factorization characterizes the corresponding static geometry
~\citep{vavasis2009nmf,arora2012nmf,gillis2012geometric,moitra2016almost}.
A normalized static factorization assigns an independent probability row to each
root and an independent bounded column to each tagged query. Chronological
realization constrains these columns to lie on the common pullback orbit of a
single terminal effect under shared transition operators. Equal local and static
optima are therefore imposed as a structural promise on the constructed hard
families: both baseline models attain the target budget, isolating chronological
compatibility as the additional shared constraint.

\subsection{Normalized realizations and three complexity notions}
\label{sec:controlled-models}

Let $\Gamma$ be a finite control alphabet and let $H$ be a finite root set.
A row probability vector of dimension $d$ belongs to the probability simplex
$\Delta_d:=\{p\in\mathbb{R}_{\ge 0}^d : \sum_{j=1}^d p_j=1\}$.
A normalized $d$-state realization is a tuple
$\mathcal R=\bigl((p_h)_{h\in H},(M_a)_{a\in\Gamma},e\bigr)$,
where each root distribution $p_h\in\Delta_d$ is a row probability vector, each
control operator $M_a\in\mathbb{R}^{d\times d}$ is row-stochastic, and the terminal
effect is a column vector $e\in[0,1]^d$. For any control sequence
$u=a_1\cdots a_t\in\Gamma^*$, let $M_u:=M_{a_1}\cdots M_{a_t}$ with
$M_\varepsilon:=I_d$. The induced response is given by
\[
R_{\mathcal R}(h,u):=p_hM_ue.
\]
All matrix products adhere to this row-stochastic convention.

Let $\mathfrak M=\{\mathcal U_\lambda : \lambda\in\Lambda\}$ be a nonempty family of
response menus over the shared root set $H$. Each menu $\mathcal U_\lambda$
specifies a set of declared root--word pairs $D_\lambda\subseteq H\times\Gamma^*$
associated with target response values $y_\lambda(h,u)\in[0,1]$. For each
$\lambda\in\Lambda$, write
$W_\lambda:=\{u\in\Gamma^*:\exists h\in H\text{ with }(h,u)\in D_\lambda\}$.
For any tolerance $\epsilon\ge0$, the local, static, and chronological shared state
complexities are defined respectively by
\begin{align*}
C_{\mathrm{loc},\epsilon}(\mathfrak M)
&:=
\sup_{\lambda\in\Lambda}\ \min\Bigl\{d\ge1:\ \exists\ \text{a normalized $d$-state realization }\mathcal R_\lambda\\
&\hspace{6em}\text{such that }\bigl|R_{\mathcal R_\lambda}(h,u)-y_\lambda(h,u)\bigr|\le\epsilon
\quad\text{for all }(h,u)\in D_\lambda\Bigr\},
\\[0.4em]
C_{\mathrm{stat},\epsilon}(\mathfrak M)
&:=
\min\Bigl\{d\ge1:\ \exists\ (p_h)_{h\in H}\subseteq\Delta_d,\ 
(e_{\lambda,u})_{\lambda\in\Lambda,\;u\in W_\lambda}\subseteq[0,1]^d\\
&\hspace{6em}\text{such that }\bigl|p_he_{\lambda,u}-y_\lambda(h,u)\bigr|\le\epsilon
\quad\text{for all }\lambda\in\Lambda,\ (h,u)\in D_\lambda\Bigr\},
\\[0.4em]
C_{\mathrm{seq},\epsilon}(\mathfrak M)
&:=
\min\Bigl\{d\ge1:\ \exists\ \text{a normalized $d$-state realization }\mathcal R\\
&\hspace{6em}\text{such that }\bigl|R_{\mathcal R}(h,u)-y_\lambda(h,u)\bigr|\le\epsilon
\quad\text{for all }\lambda\in\Lambda,\ (h,u)\in D_\lambda\Bigr\}.
\end{align*}

The local complexity optimizes a separate dynamic model for each individual menu.
The static complexity shares the root distributions but assigns an independently
chosen effect to every tagged query $(\lambda,u)$, relaxing dynamic coupling. The
chronological shared complexity enforces a single common family of transition
matrices and one terminal effect across all menus. We adopt the convention
$\min\emptyset:=+\infty$. Because the state dimensions are integer-valued, the
supremum in $C_{\mathrm{loc},\epsilon}$ is attained as a maximum whenever the local
state counts admit a uniform finite upper bound.

\begin{definition}[Chronological shared realization complexity]
\label{def:crc-quantity}
The \emph{chronological shared realization complexity} (CRC) of a menu family
$\mathfrak M$ at tolerance $\epsilon$ is $C_{\mathrm{seq},\epsilon}(\mathfrak M)$.
The \emph{CRC decision problem} asks, for a given family $\mathfrak M$, tolerance
$\epsilon$, and state budget $K$, whether $C_{\mathrm{seq},\epsilon}(\mathfrak M)\le K$.
The quantities $C_{\mathrm{loc},\epsilon}$ and $C_{\mathrm{stat},\epsilon}$ serve as
the corresponding local and static baseline dimensions.
\end{definition}

For exact realization ($\epsilon=0$), we write
\[
C_{\mathrm{loc}}:=C_{\mathrm{loc},0},
\qquad
C_{\mathrm{stat}}:=C_{\mathrm{stat},0},
\qquad
C_{\mathrm{seq}}:=C_{\mathrm{seq},0}.
\]
By construction,
\[
C_{\mathrm{loc}}\le C_{\mathrm{seq}}
\qquad\text{and}\qquad
C_{\mathrm{stat}}\le C_{\mathrm{seq}},
\]
whereas $C_{\mathrm{loc}}$ and $C_{\mathrm{stat}}$ are in general incomparable. The
hard families constructed in this work satisfy $C_{\mathrm{loc}}=C_{\mathrm{stat}}=K$
in the rank-tight regime. An empty feasible set is assigned value $+\infty$ for each
complexity measure.

\begin{definition}[Minimax defect]
\label{def:crc}
For a finite declared instance $I=(H,D(I),y_I)$ and a budget $K\in\mathbb Z_{\ge1}$,
the \emph{minimax defect} is defined by
\[
J_K(I):=
\min_{\substack{1\le d\le K\\ \mathcal R\ \text{normalized}}}
\max_{(h,u)\in D(I)}
\bigl|R_{\mathcal R}(h,u)-y_I(h,u)\bigr|,
\]
with the convention $\min\emptyset:=+\infty$. When $I$ explicitly lists a finite set
of declared root--word pairs from a menu family $\mathfrak M$, exact realizability on
those listed pairs is characterized by
\[
C_{\mathrm{seq},0}(\mathfrak M)\le K
\quad\Longleftrightarrow\quad
J_K(I)=0.
\]
\end{definition}

The minimax defect provides uniform pointwise control over every declared root--word
experiment. For each fixed dimension $d$, the candidate root distributions,
row-stochastic transition matrices, and terminal effect form a compact polytope.
Because $D(I)$ is finite and each response $R_{\mathcal R}(h,u)$ depends polynomially
on these parameters, the maximum response error is continuous. Taking the minimum over
the finitely many discrete dimensions $1\le d\le K$ therefore attains $J_K(I)$. In
particular, any sequence of realizations within budget $K$ whose defects converge to
zero possesses a subsequence of a fixed dimension converging to an exact normalized
realization.

\subsection{Encoding, defects, and conventions}
\label{sec:conventions}

The complexity class $\exists\mathbb{R}$ consists of all decision problems that
are polynomial-time many-one reducible to the existential theory of the reals,
namely deciding the feasibility of a finite system of polynomial equalities and
inequalities over real variables with rational coefficients~\citep{canny1988pspace,renegar1992real,basu2006algorithms}.
By introducing auxiliary variables, polynomial equations can be standardly
rewritten with degree at most two. The class satisfies
$\mathsf{NP}\subseteq\exists\mathbb{R}\subseteq\mathsf{PSPACE}$ and contains
standard formulations of continuous equilibrium and geometric decision
problems, including Nash equilibria and Brouwer fixed-point problems
~\citep{schaefer2017fixedpoints}, the art gallery problem
~\citep{abrahamsen2018artgallery}, and restricted nonnegative matrix
factorization~\citep{chistikov2016restricted}, whose minimal factors may
exhibit irrational coordinates even for rational inputs
~\citep{chistikov2017irrationality}. Chronological realization of explicitly
listed menus falls naturally into this algebraic framework with dynamic
constraints: the unknowns are the entries of finitely many row-stochastic
transition matrices and a terminal effect, while the polynomial relations
encode the word-evaluation equations defined in Section~\ref{sec:controlled-models}.

A promise problem is defined over disjoint $\mathrm{YES}$ and $\mathrm{NO}$
instances. Membership in $\mathsf{PromiseNP}$ means that a polynomial-time
verifier admits a polynomial-size certificate for every $\mathrm{YES}$ instance
and accepts no certificate for any $\mathrm{NO}$ instance. All reductions
constructed in this paper preserve both sides of the promise.

Throughout the paper, the target tolerance is denoted by $\epsilon$, an
arbitrary candidate realization's pointwise response defect is denoted by
$\dr$, and the empty word is $\varepsilon$. Geometric covering error is written
as $\es$, a source promise gap is $\gamma_{\mathrm{src}}$, and the fixed planar
orientation residual is $\ep$. Distances on the source and compiled response
spaces are measured in the $\ell_\infty$ norm via $\dist_\infty$, whereas the
fixed-dimensional orientation analysis employs the Euclidean metric
$\dist_2$. For points, $\dist_\infty(x,y):=\lVert x-y\rVert_\infty$ and
$\dist_2(x,y):=\lVert x-y\rVert_2$; for nonempty sets, the directed distance
$\dist_\infty(A,B):=\sup_{a\in A}\inf_{b\in B}\lVert a-b\rVert_\infty$
is used, with $\dist_2$ defined analogously. Stochastic distributions are
compared under the $\ell_1$ norm, which is twice the total variation distance
for probability distributions. The point mass at a latent state $z$ is written
as the row vector $\delta_z$, to be distinguished from the scalar response
error $\dr$.

Parser letters are typeset in sans-serif font
($\pl0,\pl1,\#,\pl P,\pl T$) and denote formal input alphabet symbols. In
contrast, italic capitals $P$ and $T$ represent a polytope and a simplex,
respectively, while standard Arabic numerals $0$ and $1$ denote scalars.
The macro $\one$ denotes an all-ones vector, whereas $\mathbf{1}[\cdot]$ denotes
an indicator value.
Table~\ref{tab:notation} summarizes the recurrent notation used throughout the
paper; each symbol is additionally defined at its first occurrence. All
bit-length and rationality bounds refer to standard binary encodings of rational
data.

\begin{table}[htbp]
\centering\small
\renewcommand{\arraystretch}{1.15}
\begin{tabular}{@{}>{\raggedright\arraybackslash}p{.28\linewidth}>{\raggedright\arraybackslash}p{.68\linewidth}@{}}
\toprule
Symbol & Description \\
\midrule
$d$ & Latent dimension of a candidate realization \\
$D$, $k=D+1$ & Affine dimension of the geometric source; payload state count \\
$q$, $K=kq$ & Number of parser sectors; total compiled state budget \\
$K_0$ & Payload state dimension before compilation (Section~\ref{sec:finite-menu-classification}) \\
$p$, $q_F$, $Q_F=\max\{1,q_F\}$ & Variable count, clause count, and positive clause-count cap \\
$D_\lambda$, $D(I)$ & Declared root--word pair sets, distinct from the geometric dimension $D$ \\
$m$, $n$ & Number of source control operators; response coordinates \\
$L$ & Phase-code length of parser sectors; delay depth in Section~\ref{sec:payload-delay} \\
$\omega_i=\pl P c_i\#$ & Complete control macro encoding source action $i$ \\
$N$, $N_{\mathrm{src}}$, $N_{\mathrm{out}}$ & Encoding lengths of finite menus, geometric source, and compiled instances \\
$N_F=\max\{2,p+q_F\}$, $N_{\mathrm{bool}}$ & Combinatorial size parameters for the two source families \\
$\dr$, $\es$, $\ep$ & Pointwise response defect, covering error, and planar orientation residual \\
$\eta$, $\rho=1-\eta$ & Contraction parameters of the affine source maps \\
$\beta$, $H_P$, $C_{\rm slack}$ & Anchor-minor singular value, Hoffman repair modulus, and slack-embedding scale \\
$\alphaor$, $\alpha_{\mathrm{rob}}$, $N_0=\lceil1/\alphaor\rceil$ & Orientation exponent, robust-transfer exponent, and reciprocal of $\alphaor$ \\
\bottomrule
\end{tabular}
\caption{Summary of recurrent notation used across sections.}
\label{tab:notation}
\end{table}

\subsection{Main Results and Proof Techniques}
\label{sec:proof-roadmap}
Our results establish state and computational separations across three distinct
settings, summarized in Table~\ref{tab:chronology-barrier-summary}. First, an
explicit payload--delay construction demonstrates an exact state-dimension
separation without relying on complexity assumptions. Second, rank-tight geometric
closure characterizes normalized realization as simultaneous simplex invariance
under affine stochastic operators. Third, based on this principle, two independent
reductions for finite menus employ identity calibration: continuous parameters
encode exact real-algebraic feasibility, whereas Boolean enforcement combined with
stochastic contraction establishes an inverse-polynomial defect gap. Finally, a
geometric compiler transforms bounded Intermediate Simplex instances into regular
response families, showing that realizability across an infinite regular language
reduces to a polynomial-size finite core, which yields an $\exists\mathbb{R}$ upper
bound alongside strong bounded-rational promise hardness.

\begin{table}[htbp]
\centering\small
\renewcommand{\arraystretch}{1.2}
\begin{tabular}{@{}p{.20\linewidth}p{.38\linewidth}p{.36\linewidth}@{}}
\toprule
Setting & Core Mechanism & Theoretical Conclusion \\
\midrule
Payload--delay &
Binary suffixes separate payload and delay &
$C_{\mathrm{seq}}=k(L+1)$; $C_{\mathrm{loc}}=C_{\mathrm{stat}}=k$ \\
Finite menus, exact &
Identity rigidity and stochastic arithmetic &
$\exists\mathbb{R}$-completeness \\
Finite menus, robust &
Boolean isolation and total-variation contraction &
$\mathsf{PromiseNP}$-completeness \\
Regular geometric family &
Invariant-simplex recovery and finite-core closure &
Strong promise hardness; exact $\exists\mathbb{R}$ upper bound \\
\bottomrule
\end{tabular}
\caption{Overview of the main results across the three realization settings within the normalized stochastic model.}
\label{tab:chronology-barrier-summary}
\end{table}

The technical development is organized as follows:
\begin{itemize}
    \item Section~\ref{sec:payload-delay} provides the complete proof of the payload--delay state separation.
    \item Section~\ref{sec:rank-tight-geometry} establishes the rank-tight geometric closure principle linking dynamic realizations with invariant simplices.
    \item Section~\ref{sec:finite-menu-classification} classifies \emph{explicitly listed finite menus}: given a finite rational response table, exact shared realizability is $\exists\mathbb{R}$-complete and the zero-versus-$\epsilon$ promise problem is $\mathsf{PromiseNP}$-complete, with corresponding membership proofs. The underlying reduction exploits the rigidity of an identity calibration table, which fixes the payload basis and isolates continuous (ETR--INV) or Boolean parameters within a partially specified letter.
    \item Sections~\ref{sec:hardness}--\ref{sec:finite-core-upper-bound} address \emph{infinite regular families} generated via the geometric compiler. For response menus indexed by a regular control language, rank-tight realizability over the full language is proven equivalent to realizability on a polynomial-size finite core. This equivalence establishes membership in $\exists\mathbb{R}$, alongside strong bounded-rational $\mathsf{PromiseNP}$-hardness for the defect promise gap.
    \item The fixed-parameter boundary is recorded in the remark following Theorem~\ref{thm:headline}; Section~\ref{sec:discussion} poses the remaining open structural questions and records the phase-code substitution principle for the parser interface.
\end{itemize}
The finite-menu and regular-family reductions are technically self-contained: the former relies on identity-table rigidity, whereas the latter combines interior anchors, invariant simplices, and quantitative polyhedral repair.

\section{Payload--delay separation}
\label{sec:payload-delay}
\label{sec:payload-full-proof}

This section isolates the intrinsic state dimension required by chronological
coupling. Payload-probing suffixes distinguish the root states at each delay
horizon, while delay-probing suffixes identify the stopping horizon independently
of the active root. The exact $0/1$ endpoint responses enforce disjoint support
constraints across both coordinates, forcing the latent state space to realize
the full product structure.

\begin{theorem}[Chronology-specific payload--delay separation]
\label{thm:main-product-separation}
For every $k\ge1$ and $L\ge1$, over the root set
$H=\mathbb{Z}_k$ and control alphabet $\Gamma=\{a,b,c\}$, optionally padded
by two inert self-loop letters to embed it into the five-letter interface used
below, define the menu family
\[
\mathfrak{M}_{k,L}:=\{V_t,D_t:0\le t\le L\},
\qquad
V_t:=\{a^t b^j c:0\le j<k\},
\qquad
D_t:=\{a^t\}.
\]
For each root $i\in\mathbb{Z}_k$, prescribe the target responses
$y_i(a^t b^j c):=\mathbf{1}[\,i+j\equiv0\pmod{k}\,]$ and
$y_i(a^t):=\mathbf{1}[\,t=L\,]$. Then the classical realization
complexities satisfy
\[
C_{\mathrm{loc}}(\mathfrak{M}_{k,L})
=
C_{\mathrm{stat}}(\mathfrak{M}_{k,L})
=
k,
\qquad
C_{\mathrm{seq}}(\mathfrak{M}_{k,L})=k(L+1).
\]
The static equality holds even for the union table with one effect per word.
Consequently, shared chronological coupling incurs an additive overhead of
$kL$ states and an unbounded multiplicative separation by a factor of $L+1$
over the local and static baselines, with a fixed alphabet and a single
Boolean terminal effect.
\end{theorem}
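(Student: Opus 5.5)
The plan is to prove the three equalities separately. For each one I give an explicit construction for the upper bound and a support or rank argument for the lower bound. The lower bound on $C_{\mathrm{seq}}$ is the only step that needs real care.

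\emph{Local and static widths.} The delay menu $D_t$ alone is realized with one state, using $e=\mathbf 1[t=L]$ and identity transitions. For $V_t$, take states $\mathbb Z_k$ with $p_i=\delta_i$. Let $a$ and $c$ act as the identity, let $b$ be the cyclic shift $i\mapsto i+1$, and set $e=\delta_0^{\top}$. Then $p_iM_{a^tb^jc}e=\mathbf 1[i+j\equiv0]$. For the static width, keep $p_i=\delta_i$ and choose one effect per word: $e_{a^tb^jc}=\delta_{-j}^{\top}$ and $e_{a^t}=\mathbf 1[t=L]\,\one$. This remains valid for the union table with one effect per word, because the effect depends only on the word. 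Both widths are therefore at most $k$.

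For the matching lower bound, fix $t$. In any realization, static or local, the $k\times k$ table $\bigl(p_i\,e_{a^tb^jc}\bigr)_{i,j}$ is a permutation matrix. It factors through $\mathbb R^d$, so its rank $k$ forces $d\ge k$. This gives $C_{\mathrm{loc}}=C_{\mathrm{stat}}=k$.

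\emph{Chronological upper bound.} Use the state set $\mathbb Z_k\times\{0,\dots,L\}$ with $p_i=\delta_{(i,0)}$. Let $a$ send $(i,s)\mapsto(i,\min\{s+1,L\})$, and let $b$ send $(i,s)\mapsto(i+1,s)$. Let $c$ send $(0,s)\mapsto(0,L)$ and $(x,s)\mapsto(x,0)$ for $x\ne0$. Take $e=\mathbf 1[s=L]$. All maps are deterministic, hence row-stochastic, and $e$ is Boolean.

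Check the responses. For $t\le L$, the word $a^t$ leads to $(i,t)$, which gives response $\mathbf 1[t=L]$. The word $a^tb^jc$ leads to $(i+j,t)$ and then to layer $L$ exactly when $i+j\equiv0$, and to layer $0\neq L$ otherwise; this uses $L\ge1$. So $C_{\mathrm{seq}}\le k(L+1)$.

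\emph{Chronological lower bound.} This is the crux. Let $x_{i,t}:=p_iM_a^t\in\Delta_d$ for $0\le t\le L$. The claim is that the $k(L+1)$ supports $\operatorname{supp}x_{i,t}$ are nonempty and pairwise disjoint, which gives $d\ge k(L+1)$. The mechanism throughout is this: if $f\in[0,1]^d$ and $x f=1$, then $\operatorname{supp}x\subseteq\{f=1\}$; if $x'f=0$, then $\operatorname{supp}x'\subseteq\{f=0\}$.

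The case of the same horizon $t$ with $i\neq i'$ uses $f=M_{b^{j}c}e$ with $j\equiv -i$. We have $x_{i,t}f=1$ and $x_{i',t}f=0$.

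The case $t<t'$, with $i,i'$ arbitrary, uses $g=M_a^{L-t'}e$. Here $x_{i',t'}g=y(a^L)=1$. On the other side, $x_{i,t}g=y(a^{t+L-t'})=0$, since $t+L-t'<L$ and this word is declared in $D_{t+L-t'}$.

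These two cases cover all pairs, so the proof is complete. The multiplicative and additive overhead statements then follow immediately.
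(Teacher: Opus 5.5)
Your proposal is correct and follows the paper's route. It uses the same $k$-state basis witnesses and the same $k(L+1)$-state product witness, and for the shared lower bound it uses the same two separating suffix families: $b^jc$ for equal horizons and $a^{L-t'}$ for different horizons. The differences are only cosmetic: you use the permutation-matrix rank bound for the local lower bound as well as the static one (the paper argues locally via distinct support states), and you work with the full supports of $p_iM_a^t$ rather than with states chosen along a positive-probability path, which is if anything slightly cleaner.
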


\begin{proof}[Proof of Theorem~\ref{thm:main-product-separation}]
Index the roots by $i\in\mathbb{Z}_k$. The prescribed responses are
$y_i(a^t)=\mathbf{1}[t=L]$ and
$y_i(a^t b^j c)=\mathbf{1}[i+j\equiv0\pmod{k}]$.

For the local upper bound, each $V_t$ is realized on the $k$ basis states
$s_i$, with $a$ and $c$ acting identically, $b$ cyclically shifting the index,
and a one-hot terminal effect. Each $D_t$ has a one-state realization.
Taking the maximum over $V_t$ and $D_t$ gives a local dimension at most $k$.
The same basis states and independently chosen query effects give the
independent-query static upper bound for the union table.

For the local lower bound, fix any $V_t$ and choose a state of positive
probability after $a^t$ from each root. If two such selected states coincided,
a common suffix $b^j c$ for which the prescribed responses differ would produce
both zero and one from the same state, which is impossible. Thus every local
realization of $V_t$ has at least $k$ states. The suffix submatrix on the roots
$i$ and the queries $c,bc,\ldots,b^{k-1}c$ is a $k\times k$ permutation matrix.
Any independent-query static factorization $\mathbf R=XE$ with shared roots
therefore has
$d\ge\operatorname{rank}(\mathbf R)=k$.

For the shared upper bound, use states $s_{i,t}$ with
$i\in\mathbb{Z}_k$ and $0\le t\le L$, root $i$ initialized at $s_{i,0}$,
and deterministic updates
\[
a:s_{i,t}\mapsto s_{i,\min(t+1,L)},
\qquad
b:s_{i,t}\mapsto s_{i+1,t},
\]
together with
\[
c:s_{i,t}\mapsto s_{i,0}\quad(i\ne0),
\qquad
c:s_{0,t}\mapsto s_{0,L}.
\]
The common terminal effect is
$e(s_{i,t})=\mathbf{1}[t=L]$. These transitions realize all prescribed
responses using $k(L+1)$ states.

For the matching lower bound, let $p_{i,t}$ be the state distribution after
$a^t$ from root $i$. Row-stochasticity permits a positive-probability path
$v_{i,0}\to\cdots\to v_{i,L}$ in the transition graph. If
$v_{i,t}=v_{i',t}$ for $i\ne i'$, choose
$j\in\{0,\ldots,k-1\}$ with $j\equiv-i\pmod{k}$. The suffix $b^j c$ then gives
responses one and zero from the same state, a contradiction.

If $t<t'$ and $v_{i,t}=v_{i',t'}$, apply the common suffix
$a^{L-t'}$. From the first occurrence the total exponent is
$t+L-t'<L$, so the response is zero; from the second occurrence the total
exponent is $L$, so the response is one. This is again a contradiction.
Because the selected prefix path has positive probability and the corresponding
response is exactly zero or one, every continuation with positive probability
must have the corresponding terminal-effect value. Thus all
$k(L+1)$ selected path states are distinct, and every shared stochastic
realization has dimension at least $k(L+1)$.
\end{proof}

The two lower-bound suffixes serve different roles. The words $b^j c$ separate
payloads without reading the delay, while powers of $a$ separate delays without
changing the active payload. Their Cartesian interaction forces the product
state set even though no local menu exposes that product. With independent
effects, the delay columns can be installed directly on the same $k$-state root
carrier, whereas chronological coupling must retain the elapsed time before
those columns become reachable.

\section{Response geometry and rank-tight closure}
\label{sec:rank-tight-geometry}

The response table admits a natural convex geometric interpretation. Each latent
state induces a response row over a calibrated query set, and each root response
is a convex combination of these latent profiles. At the minimal carrier width
compatible with the affine dimension, the latent response polytope collapses to a
simplex, transforming the problem of sharing dynamic transitions into the
simultaneous invariance of that simplex. The static counterpart corresponds to
the nested-polytope characterization underlying minimal probabilistic automata
and restricted nonnegative matrix factorization~\citep{aggarwal1989nested,chistikov2016restricted}:
an intermediate polytope with minimal vertices is sought between the convex hull
of the root rows and the outer response polytope. Theorem~\ref{thm:rank-tight-closure}
supplements this static containment with the dynamic invariance conditions that
characterize rank-tight chronological realization in this calibrated setting.

\begin{definition}[Calibrated specification]
\label{def:calibrated}
Let $Y\subseteq\Gamma^*$ be a finite query set containing the empty word
$\varepsilon$, with every pair $(h,v)\in H\times Y$ declared. Write
$r_h(v):=y(h,v)$ for the corresponding root response row. Let
$H_0\subseteq H$ be a set of $k$ roots whose response rows form an affine basis
of
$\mathcal A:=\operatorname{aff}\{r_h:h\in H\}$. The specification is
\emph{calibrated of rank $k$} if the root rows have ordinary linear rank $k$
and affine dimension $k-1$. For each primitive control letter $c\in\Gamma$, all
pairs $(h,cv)$ with $h\in H_0$ and $v\in Y$ are declared; their successor
profiles $r_{h,c}(v):=y(h,cv)$ belong to $\mathcal A$ and uniquely determine
the affine map $F_c:\mathcal A\to\mathcal A$ satisfying
$F_c(r_h)=r_{h,c}$ for $h\in H_0$. For a general control word
$u=a_1\cdots a_t\in\Gamma^*$, let
$F_u:=F_{a_t}\circ\cdots\circ F_{a_1}$, with
$F_\varepsilon:=\operatorname{id}_{\mathcal A}$. Consistency requires that
every declared target satisfy $y(h,u)=(F_u(r_h))(\varepsilon)$. Write
$\mathcal P:=\mathcal A\cap[0,1]^Y$. Rank-tightness requires attaining a
normalized carrier width of exactly $k$, rather than merely matching the
linear algebraic rank $k$ of the response table.
\end{definition}

In a rank-tight $k$-state realization, the $k$ latent response rows are
affinely independent: the root rows span an affine subspace of dimension
$k-1$, and a convex factorization through at most $k$ latent points cannot have
a strictly smaller affine span. Their convex hull is therefore a
$(k-1)$-simplex $T\subseteq\mathcal P$ with affine hull $\mathcal A$. Shared
stochastic transitions preserve this simplex; conversely, simplex invariance
provides valid barycentric coordinates for the transition rows. The following
theorem formalizes this bidirectional equivalence.

\begin{theorem}[Rank-tight chronological closure]
\label{thm:rank-tight-closure}
Let $\mathcal P$ and $\{F_c\}_{c\in\Gamma}$ be given as in
Definition~\ref{def:calibrated}. A calibrated rank-$k$ specification admits a
normalized classical shared realization on exactly $k$ latent states if and only
if there exists a $(k-1)$-simplex $T$ such that
\begin{equation}
\left\{r_h:h\in H\right\}\subseteq T\subseteq\mathcal P,
\qquad
F_c(T)\subseteq T\quad\text{for every primitive }c\in\Gamma.
\label{eq:rank-tight-closure}
\end{equation}

For the root response table on the query set $Y$, a rank-tight static
realization is equivalent to the existence of a $(k-1)$-simplex satisfying
the static containment
$\{r_h:h\in H\}\subseteq T\subseteq\mathcal P$, without the dynamic
invariance clauses. The converse uses the vertex coordinates of $T$ as the
independently chosen query effects.
\end{theorem}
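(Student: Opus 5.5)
Both directions go through the latent response rows. Given a normalized realization $((p_h),(M_c),e)$ on $k$ states, define for each latent state $z$ the row $w_z\in\mathbb R^Y$ by $w_z(v):=\delta_z M_v e$. Then $w_z\in[0,1]^Y$, and for every root $r_h=\sum_z (p_h)_z w_z$. In the other direction, given $T=\operatorname{conv}\{w_1,\dots,w_k\}$, I use the vertices $w_z$ as latent states, barycentric coordinates of $r_h$ as $p_h$, barycentric coordinates of $F_c(w_z)$ as row $z$ of $M_c$, and $e(z):=w_z(\varepsilon)$. The static clause is the same argument with the query effects $e_v(z):=w_z(v)$ and no dynamics.

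\emph{Necessity.} From a $k$-state realization, set $T:=\operatorname{conv}\{w_z\}$. By the remark preceding the theorem, the roots span an affine space of dimension $k-1$ inside $\operatorname{aff}\{w_z\}$. With only $k$ points, this forces the $w_z$ to be affinely independent with $\operatorname{aff}\{w_z\}=\mathcal A$. Hence $T$ is a $(k-1)$-simplex with $r_h\in T\subseteq\mathcal A\cap[0,1]^Y=\mathcal P$.

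For invariance I need $F_c(w_z)=\delta_z M_c$-weighted combinations of the $w$'s, i.e. $F_c(w_z)=\sum_{z'}(M_c)_{zz'}w_{z'}$. I would define the linear map $G_c:x\mapsto$ (row of $xM_c$ mapped to responses). More concretely, write $\Phi(x)(v):=xM_ve$ for $x\in\operatorname{aff}\{\delta_z\}$. Then $\Phi$ is an affine bijection from the standard simplex onto $T$, using affine independence. Define $\tilde F_c:=\Phi\circ(\cdot\,M_c)\circ\Phi^{-1}$ on $\mathcal A$. For $h\in H_0$, $\tilde F_c(r_h)=\Phi(p_hM_c)$, whose $v$-coordinate is $p_hM_cM_ve=y(h,cv)=r_{h,c}(v)$. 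Since $H_0$ is an affine basis of $\mathcal A$, $\tilde F_c=F_c$. Therefore $F_c(T)=\Phi(\Delta_kM_c)\subseteq\Phi(\Delta_k)=T$.

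\emph{Sufficiency.} Given $T$ with vertices $w_1,\dots,w_k$, let $\beta:\mathcal A\to\mathbb R^k$ be the barycentric coordinate map, which is affine and bijective. Set $p_h:=\beta(r_h)\in\Delta_k$, let row $z$ of $M_c$ be $\beta(F_c(w_z))\in\Delta_k$ (row-stochastic by invariance), and set $e(z):=w_z(\varepsilon)\in[0,1]$ since $T\subseteq\mathcal P$.

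By affineness, $\beta(F_c(x))=\beta(x)M_c$ for all $x\in\mathcal A$: both sides are affine in $x$ and agree on the vertices. Induction on $u$ gives $\beta(F_u(x))=\beta(x)M_u$. Hence $p_hM_ue=\sum_z\beta(F_u(r_h))_z\,w_z(\varepsilon)=F_u(r_h)(\varepsilon)=y(h,u)$ by the consistency requirement. This reproduces every declared target with $k$ states.

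\emph{Static clause.} A rank-tight static factorization $r_h(v)=p_he_v$ gives $w_z(v):=e_v(z)$, and the same affine-independence argument shows the $w_z$ span a simplex $T$ between the roots and $\mathcal P$. Conversely, the barycentric $p_h$ together with the vertex coordinates $e_v(z):=w_z(v)$ realize the table, as in the final sentence of the statement.

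\emph{Main obstacle.} The delicate step is the identification $\tilde F_c=F_c$ in necessity, together with the affine-independence claim. The issue is that $\Phi$ must be defined using all words $M_v$, $v\in Y$, while only $H_0$-successors are declared. I would handle this by defining $\Phi$ purely via $Y$-responses and using that $F_c$ is uniquely determined by $H_0$, as stipulated in Definition~\ref{def:calibrated}.
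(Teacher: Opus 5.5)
Your proposal is correct and follows essentially the same route as the paper. In the forward direction you take the latent response rows as the vertices of $T$ and conjugate each $M_c$ through the affine isomorphism $\Phi$ onto $\mathcal A$, identifying the result with $F_c$ on the calibration basis via the declared pairs $(h,cv)$, $h\in H_0$; in the converse you build $p_h$, $M_c$, $e$ from barycentric coordinates and conclude by induction together with calibration consistency. Your identity $\beta(F_c(x))=\beta(x)M_c$ and the check $\Phi(p_hM_c)=r_{h,c}$ only spell out steps the paper states in one line. The ``obstacle'' you flag is not a real one: $\Phi$ is built from the realization's own matrices, and only the declared $H_0$-successor data are needed to identify the conjugated map with $F_c$.
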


\begin{proof}
$(\Longrightarrow)$ In a rank-tight realization of dimension $k$, the $k$
latent response rows are affinely independent, and their convex hull forms a
simplex $T\subseteq\mathcal P$ containing all root response profiles
$\{r_h:h\in H\}$. Because these root responses contain an affine basis of
$\mathcal A$, the affine hull of $T$ is precisely $\mathcal A$. Evaluating the
latent rows at the distinguished empty-word coordinate
$\varepsilon\in Y$ yields the terminal evaluation effect
$e\in[0,1]^k$. Conjugating each row-stochastic update through the affine
isomorphism connecting state distributions to response rows induces an affine
transformation on $\mathcal A$; this map agrees with $F_c$ on the calibration
basis and maps $T$ into itself. Hence $F_c(T)\subseteq T$.

$(\Longleftarrow)$ Conversely, let $v_1,\ldots,v_k$ be the vertices of a simplex
$T$ satisfying~\eqref{eq:rank-tight-closure}. Because the vertices are
affinely independent, every point in $T$ admits unique barycentric coordinates.
Expressing each root response row $r_h\in T$ in these coordinates defines a
unique initial distribution $p_h\in\Delta_k$. For each primitive letter
$c\in\Gamma$, the invariance condition $F_c(T)\subseteq T$ guarantees that
$F_c(v_j)\in T$ for each vertex $v_j$. Its unique barycentric decomposition
$F_c(v_j)=\sum_{\ell=1}^k M_c(j,\ell)v_\ell$ consists of nonnegative
coefficients summing to one, defining a valid row-stochastic transition matrix
$M_c\in\mathbb R^{k\times k}$.

Define the terminal effect $e\in[0,1]^k$ by evaluating each vertex $v_j$ at the
empty-word coordinate $\varepsilon\in Y$, which is valid since
$T\subseteq\mathcal P\subseteq[0,1]^Y$. By the calibration consistency
condition and induction on word length, the constructed tuple
$\bigl((p_h)_{h\in H},(M_c)_{c\in\Gamma},e\bigr)$ reproduces every declared
target response exactly.
\end{proof}

For a word $u=a_1\cdots a_t$, states evolve by sequential right multiplication
in the order $a_1,\ldots,a_t$, whereas effects are pulled back in reverse order
starting with $M_{a_t}e$. Consequently, the induced affine action on response
profiles is $F_u=F_{a_t}\circ\cdots\circ F_{a_1}$. This composition order is
used throughout the closure argument. Whereas the static problem uses the
simplex merely to witness convex representability, chronological realization
imposes the additional requirement of simultaneous invariance under every
primitive generator. This distinction is exploited in two complementary
settings below: via identity-basis rigidity in the finite-menu reductions
(Section~\ref{sec:finite-menu-classification}), and via interior anchors in the
geometric compiler (Section~\ref{sec:hardness}).

\section{Finite-Menu Classification}
\label{sec:finite-menu-classification}

This section considers instances where the input explicitly specifies rational
target responses for a finite set of declared root--word pairs. Let $N$ denote
the total binary encoding length of the finite-menu data and the state budget
$K$, where $K$ is encoded in unary or is polynomially bounded in $N$, and let
$\ell$ denote the maximum length of any declared word. For the robust defect
problem, an additional rational tolerance $0<\epsilon<1$ is provided, with
encoding length polynomially bounded in $N$ (so that
$\log(1/\epsilon)=\mathcal O(\operatorname{poly}(N))$). Pointwise response
defect $J_K(I)$ is measured according to Definition~\ref{def:crc}, evaluated
strictly across the declared root--word pairs without imposing implicit
constraints on unlisted words.

The computational classification relies on promise-preserving polynomial-time
many-one reductions: $\mathrm{YES}$ instances map to zero defect,
$J_K(I)=0$, whereas $\mathrm{NO}$ instances are separated by the prescribed
gap, $J_K(I)\ge\epsilon$. The constructed reductions produce
\emph{strongly bounded-rational} instances, meaning that every generated
numerator and denominator has length $\mathcal O(\log N)$ and therefore
polynomially bounded magnitude. The robust upper bound accommodates arbitrary
rational tolerances, whereas the hardness reduction establishes an
inverse-polynomial defect gap.

Following compilation through the total five-letter parser described in
Section~\ref{sec:literal-parser}, every menu contains the phase-signature block
and payload identity calibration. Menus for fully specified source maps may
add post-control phase and payload probes; the ETR--INV and Boolean reductions
below use partially specified source maps and list only the calibration block
and the source tests that define their constraints. Letting $q$ denote the
number of parser sectors, the compiled state budget is given by $K=K_0q$,
where $K_0$ denotes the uncompiled payload dimension.

\begin{theorem}[Finite-menu completeness]
\label{thm:finite-menu-completeness}
Under the encoding assumptions above, exact classical shared realizability
with $d\le K$ for explicitly listed rational finite menus is
$\exists\mathbb{R}$-complete. Furthermore, the promise problem with input
$(I,K,\epsilon)$,
\begin{equation}
J_K(I)=0
\qquad\text{versus}\qquad
J_K(I)\ge\epsilon,
\label{eq:finite-menu-promise}
\end{equation}
is $\mathsf{PromiseNP}$-complete. Its hardness already holds for an
inverse-polynomial choice of $\epsilon$.

Both lower bounds hold for strongly bounded-rational instances over the
fixed five-letter alphabet $\{\pl0,\pl1,\#,\pl{P},\pl{T}\}$, with a single Boolean terminal
effect, and under the structural promises that every declared
calibration-complete menu and the independent-query static relaxation have exact
minimum dimension $K$.
\end{theorem}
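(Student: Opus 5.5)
The proof splits into four parts: two membership arguments and two reductions. Both reductions share one gadget, an identity calibration block compiled through the five-letter parser.

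\textbf{Membership in $\exists\mathbb{R}$.} Since $K$ is polynomially bounded, I guess the dimension $d\le K$. I then introduce real variables for the entries of each $p_h$, each $M_a$ for $a$ in the five-letter alphabet, and $e$. The constraints are nonnegativity, the row-sum equations, and $0\le e\le 1$. For every declared pair $(h,u)$ I add auxiliary vectors $x_{h,u'}$ along the prefixes $u'$ of $u$, with $x_{h,\varepsilon}=p_h$ and $x_{h,u'a}=x_{h,u'}M_a$. This keeps all equations quadratic and of total size polynomial in $N\cdot\ell$. The final constraint is $x_{h,u}e=y(h,u)$. The disjunction over $d$ is absorbed by padding with unused coordinates, since a smaller realization embeds into dimension $K$.

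\textbf{Membership of the promise problem in $\mathsf{PromiseNP}$.} A YES instance has an exact real realization, but the certificate must be rational and short. I would take a rational realization of polynomial bit length whose defect is at most $\epsilon/2$, and verify its defect exactly by rational arithmetic in polynomial time. Existence follows from quantitative real-algebraic bounds. Every connected component of the exact feasible set contains a point whose coordinates are algebraic, with degree and height singly exponential in $N$. Rounding such a point to $\mathrm{poly}(N)\cdot\log(1/\epsilon)$ bits perturbs each response by at most $\ell\cdot 2^{-\mathrm{bits}}$, because products of stochastic matrices are $1$-Lipschitz in $\ell_\infty$. Rounding must preserve row-stochasticity, so I repair each row by moving the rounding error onto its largest entry. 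The verifier accepts iff the defect is at most $\epsilon/2$. On NO instances every realization has defect at least $\epsilon$, so nothing is accepted.

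\textbf{$\exists\mathbb{R}$-hardness.} I reduce from ETR--INV: variables in $[1/2,2]$ with constraints $x+y=z$, $xy=1$, and $x=1$. The key is the identity calibration. A menu forces the $K_0$ payload basis roots, under identity-type words, to have exact $0/1$ responses forming a permutation table. By the disjoint-support argument of Theorem~\ref{thm:main-product-separation}, any $K$-state realization must then place each basis root on a single latent state, up to relabeling. So the basis is rigid, and $M$ restricted to that basis is read off directly from responses. One parser letter is left partially specified, and its transition probabilities encode the rescaled variables. Stochastic arithmetic gadgets express the constraints: composing letters multiplies transition entries, and convex mixing implements affine relations. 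Each constraint becomes a declared root--word response with rational target, expressing an equation in the unknown entries. The parser of Section~\ref{sec:literal-parser} supplies the sectoring, with $K=K_0q$. The structural promises, that local and static dimensions equal $K$, follow from the calibration block alone: it is a rank-$K$ permutation table, exactly realizable locally.

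\textbf{$\mathsf{PromiseNP}$-hardness.} I reduce from 3SAT in the same framework. Each variable is a partially specified transition entry, and extra declared responses force each entry to be $0$ or $1$ via a gadget of the form $x(1-x)=0$, realized as the probability of a two-step path. Clauses become declared responses requiring one of three paths to have probability at least a fixed amount, encoded as exact values using slack entries. On satisfiable inputs the defect is zero. On unsatisfiable inputs I need robustness. If the defect is $\delta$, then each entry is within $O(\sqrt{\delta})$ of $\{0,1\}$. Rounding then yields an assignment, and the clause responses change by at most $\ell\cdot O(\sqrt{\delta})$ by total-variation contraction, which contradicts unsatisfiability once $\delta<\epsilon=1/\mathrm{poly}(N)$. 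Declared words have constant length and all numbers are small, so the instances are strongly bounded-rational.

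\textbf{Main obstacle.} Making identity-calibration rigidity robust is the hardest step. With approximate responses the basis states are only nearly disjoint, so I must show that an approximate realization is $\mathrm{poly}(\delta)$-close to one respecting the rigid basis. I would first prove a stability lemma: near-$0/1$ responses on a permutation table force each basis distribution to carry mass $1-O(K\delta)$ on a distinct latent state. I would then propagate this bound through the constant-length gadget words.
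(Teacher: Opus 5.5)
Your membership arguments and both reductions follow the paper's architecture. That includes prefix variables for $\exists\mathbb{R}$, rounding plus total-variation contraction for $\mathsf{PromiseNP}$, and identity-calibration rigidity with one partially specified map carrying the unknowns. It also includes path-probability arithmetic, a Booleanity test $b_i(1-b_i)$, and a robust rigidity lemma propagated through a constant number of gates.

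The genuine gap is the structural promise, which is part of the statement. You say the local threshold ``follows from the calibration block alone.'' For the static relaxation that is fine: basis-state roots with an independent effect per query column work whatever the tests are. For the local threshold it is not. The calibration block gives only the lower bound $\ge K$ for each menu. The matching upper bound needs the \emph{whole} menu, calibration plus every source test it declares, to be realizable on the rigid basis for some choice of the free parameters. If the constraints of the source instance sit together in one menu, that menu has no $K$-state realization on a NO instance. The promise then fails exactly where hardness must hold.

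The missing idea is to give each constraint its own menu, consisting of the calibration block plus one test, and to make each test satisfiable in isolation. A local realization can then pick its own parameter values per menu, while a shared realization is forced to use one tuple. For ETR--INV this requires the paper's preprocessing: split variable occurrences and add copy equations, so each equation is feasible alone (e.g.\ $(1/2,1/2,1)$ for addition, $(1,1)$ for inversion). Without it, an equation such as $x_i+x_j=x_i$ is already infeasible on $[1/2,2]$ by itself. For 3-SAT, one Booleanity or one clause test per menu works, with each clause's slack chosen inside its own menu. Your proposal never specifies this menu decomposition, so the promise is not established.

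Two smaller remarks. First, in the $\mathsf{PromiseNP}$ membership proof the detour through algebraic points of bounded degree and height is unnecessary. By compactness a YES instance has an exact real realization, and rounding it directly (with the deficit pushed onto one entry per row) already gives response error $O(\ell d\,2^{-B})$. You dropped the factor $d$ from the row-wise $\ell_1$ error, but it is harmless. This is exactly Lemma~\ref{lem:finite-menu-np-membership}.

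Second, for the robust rigidity step you flag as the main obstacle, your disjoint-support viewpoint works quantitatively. Round each pulled-back calibration effect (the $L$ phase suffixes and the $K_0$ payload probes) at $1/2$. Each of the $K$ calibrated roots then has mass at least $1-2(L+K_0)\delta$ on the states whose rounded signature equals its own $0/1$ target row. These sets are pairwise disjoint because the target rows are distinct, so $d\le K$ forces each to be a singleton. This yields the conclusions of Lemmas~\ref{lem:approx-sector-allocation} and~\ref{lem:approx-basis-rigidity} directly.

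The propagation must then, as in Lemma~\ref{lem:robust-boolean-gate}, charge two kinds of mass: what leaks out of the synchronized cell after each gate, and (if you leave the $V$-rows of $s,d$ undeclared) what sits on those states when $V$ is called. Because each test has an absolute constant number of gates, the resulting loss is polynomial in $K_0$ and $L$, not the bare $\ell\cdot O(\sqrt{\delta})$ you wrote. Also, Booleanity gives linear closeness, since $\min\{b,1-b\}\le 2b(1-b)$.
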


The proof has two independent lower-bound constructions and a common
encoding argument. We first establish membership, then prove preservation of
an identity-calibrated source under the fixed interface, and finally construct
the exact and robust source gates. The parser transition table itself is given
in Section~\ref{sec:literal-parser}; the interface lemma specifies the properties
used in this section.
\label{sec:finite-menu}
\subsection{Membership}

\begin{lemma}[Exact finite-menu membership]
\label{lem:finite-menu-etr-membership}
The exact problem in Theorem~\ref{thm:finite-menu-completeness} belongs to
\(\exists\mathbb{R}\).
\end{lemma}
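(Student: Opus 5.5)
We reduce the exact problem to the existential theory of the reals by writing down a polynomial-size existential sentence. Given an instance $(I,K)$ with $K$ in unary or polynomially bounded in $N$, the sentence is a disjunction over $d\in\{1,\ldots,K\}$. Alternatively, and more simply, we fix $d=K$. This is harmless because any realization with $d<K$ states pads to one with $K$ states by adding unreachable absorbing states with effect $0$, and padding leaves every response $p_hM_ue$ unchanged. So $C_{\mathrm{seq},0}\le K$ holds iff a $K$-state realization exists.

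For dimension $K$, the sentence introduces the following real variables:
\begin{itemize}
  \item $p_h(j)$ for each root $h$ appearing in $D(I)$;
  \item $M_a(j,\ell)$ for each of the finitely many letters $a\in\Gamma$;
  \item $e(j)$ for the terminal effect.
\end{itemize}
The linear constraints are nonnegativity, $\sum_j p_h(j)=1$, $\sum_\ell M_a(j,\ell)=1$, and $0\le e(j)\le 1$.

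The polynomial degree of $p_hM_ue$ grows with $|u|$, and $\ell$ may be large, though it is bounded by $N$ since words are explicitly listed. To keep the formula polynomial-size and of bounded degree, we avoid expanding products. Instead, for each declared pair $(h,u)$ with $u=a_1\cdots a_t$ we introduce auxiliary row vectors $x_{h,u,0},\ldots,x_{h,u,t}\in\mathbb{R}^K$ with constraints
\[
x_{h,u,0}=p_h,\qquad x_{h,u,s}=x_{h,u,s-1}M_{a_s}\quad(1\le s\le t),
\]
and the scalar equation $x_{h,u,t}\,e=y_I(h,u)$. Each constraint is a quadratic polynomial equation with rational coefficients taken from the input. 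The number of variables and constraints is $\mathcal O(|D(I)|\cdot\ell\cdot K+|\Gamma|K^2)$, which is polynomial in $N$. Sharing prefixes via a trie is possible but not needed.

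Correctness follows directly from the definitions in Section~\ref{sec:controlled-models}. Any satisfying assignment is a normalized $K$-state realization whose responses match all declared targets exactly, since by induction $x_{h,u,t}=p_hM_u$. Conversely, any exact realization of dimension at most $K$, after padding to $K$, yields a satisfying assignment by setting $x_{h,u,s}=p_hM_{a_1\cdots a_s}$. The sentence is built in polynomial time from the listed data, giving a many-one reduction to ETR and hence membership in $\exists\mathbb{R}$.

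The only real obstacle is size control. It comes from two sources:
\begin{itemize}
  \item The budget $K$ must be polynomial, which the encoding assumption provides.
  \item Word products must not be expanded into high-degree monomials; the auxiliary state vectors handle this.
\end{itemize}
The padding argument is what justifies the single-dimension formulation.
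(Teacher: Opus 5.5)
Your proposal is correct and follows the paper's proof: you introduce intermediate state vectors along word prefixes so that every update and response constraint has degree at most two, and the formula stays polynomial-size because $K$ and the total listed word length are polynomially bounded. The differences are cosmetic and both valid: you pad to $d=K$ with unreachable absorbing states of effect $0$ instead of taking the paper's disjunction over $1\le d\le K$, and you use a separate chain for each declared pair rather than sharing variables over the prefix closure $\mathcal V$.
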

\begin{proof}
Let \(\mathcal V\) be the union of the prefix closures of all listed words.
For a fixed \(d\le K\), introduce a row probability vector \(p_h\) for every
root, a row-stochastic \(d\times d\) matrix \(M_a\) for every letter, an
effect \(e\in[0,1]^d\), and a row vector \(x_{h,v}\) for every
\(h\) and \(v\in\mathcal V\).  Impose
\[
 x_{h,\varepsilon}=p_h,\qquad
 x_{h,va}=x_{h,v}M_a,\qquad
 x_{h,u}e=y_I(h,u),
\]
together with nonnegativity, the effect bounds, and all row-sum equations.
Update and response equations have degree at most two; the remaining
constraints are linear.  Their number is
polynomial in the total listed word length.  A polynomial disjunction over
\(1\le d\le K\) remains an existential real formula of polynomial size.
This is the standard \(\exists\mathbb{R}\) model
\citep{canny1988pspace,renegar1992real,schaefer2017fixedpoints}.
\end{proof}

\begin{lemma}[Normalization-preserving rational certificate]
\label{lem:finite-menu-np-membership}
Under the bounds in Theorem~\ref{thm:finite-menu-completeness}, the
zero-versus-\(\epsilon\) promise belongs to \(\mathsf{PromiseNP}\).
\end{lemma}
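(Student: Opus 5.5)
The certificate will be a rational realization of dimension $d\le K$ whose defect is strictly below $\epsilon$ on every declared pair, say at most $\epsilon/2$. The verifier checks, in polynomial time, that the realization is normalized and that $\max_{(h,u)\in D(I)}|R_{\mathcal R}(h,u)-y_I(h,u)|<\epsilon$. This is done by exact rational evaluation of $p_hM_ue$ along the prefix closure $\mathcal V$ used in Lemma~\ref{lem:finite-menu-etr-membership}. On a $\mathrm{NO}$ instance every normalized realization with $d\le K$ has defect at least $\epsilon$, so no certificate is accepted; soundness is immediate. The work lies in completeness: if $J_K(I)=0$, there must be a certificate of polynomial bit length.

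\textbf{Step 1 (perturbation bound).} Fix an exact realization $\mathcal R=((p_h),(M_a),e)$ with $d\le K$, which exists by compactness as remarked after Definition~\ref{def:crc}. Perturb it to a nearby normalized realization $\tilde{\mathcal R}$ and bound the resulting response change. Row-stochastic matrices are $\ell_\infty\to\ell_\infty$ contractions on column vectors, and probability vectors have $\ell_1$ norm one. A telescoping argument therefore gives
\[
|p_hM_ue-\tilde p_h\tilde M_u\tilde e|\le \|p_h-\tilde p_h\|_1+\sum_{s=1}^{|u|}\max_a\|M_a-\tilde M_a\|_{\infty}+\|e-\tilde e\|_\infty .
\]
Here $\|\cdot\|_\infty$ on matrices is the max row $\ell_1$ norm. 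Each term is legitimate because every intermediate product is again stochastic or a probability vector. If each row of each object is within $\ell_1$ distance $\tau$ of the original, the defect is at most $(\ell+2)\tau$. Choosing $\tau:=\epsilon/(2(\ell+2))$ gives defect at most $\epsilon/2<\epsilon$.

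\textbf{Step 2 (normalization-preserving rounding).} This step is the main point the lemma's title refers to. Naive coordinate rounding breaks row sums or nonnegativity, so each row is rounded separately while staying in the simplex. Let $Q:=\lceil d/\tau\rceil$. Round every coordinate of a probability row down to a multiple of $1/Q$, then add the leftover mass, itself a multiple of $1/Q$, to a single coordinate. The result lies in $\Delta_d$ with denominators $Q$, and its $\ell_1$ distance from the original is at most $2d/Q\le 2\tau$. Rescaling $\tau$ by a constant absorbs the factor $2$. The effect $e$ is rounded coordinatewise within $[0,1]$. All entries then have denominator $Q=\mathcal O(K\ell/\epsilon)$, so bit length is $\mathcal O(\log K+\log\ell+\log(1/\epsilon))$, polynomial in $N$ by the encoding assumptions ($K$ polynomially bounded, $\log(1/\epsilon)=\operatorname{poly}(N)$). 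The certificate contains $\mathcal O(|H|K+|\Gamma|K^2)$ numbers, hence has polynomial size.

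\textbf{Step 3 (verification cost).} Computing all $\tilde x_{h,v}$ for $v\in\mathcal V$ involves at most $\ell$ successive multiplications by matrices with a common denominator $Q$. Denominators therefore stay at $Q^{\ell+1}$, of bit length $\mathcal O(\ell\log Q)$, and numerators are bounded similarly. Every arithmetic operation is polynomial, and comparing against the rational targets and $\epsilon$ is exact.

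The only delicate point is Step 1's uniformity. The error bound must grow linearly in the word length $\ell$ and not multiplicatively, and the stochastic contraction is what ensures this. Together with the simplex-preserving rounding of Step 2, this shows that exact realizability is witnessed by a polynomial-size rational realization, which completes membership in $\mathsf{PromiseNP}$.
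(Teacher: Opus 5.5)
Your proposal is correct and follows essentially the same route as the paper. Both arguments take an exact minimizer by compactness, round every probability row to a fixed grid while preserving stochasticity (round down and put the leftover mass on one coordinate), bound the response error linearly in word length by telescoping with stochastic contraction, and have the verifier check stochasticity and all listed responses in exact rational arithmetic against a threshold strictly below $\epsilon$. The differences are cosmetic: you use a denominator $Q=\lceil d/\tau\rceil$ instead of $2^{B_{\rm bit}}$, and you accept at $<\epsilon$ with certificate error $\epsilon/2$ rather than at $\epsilon/2$ with certificate error $\epsilon/3$.
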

\begin{proof}
A zero-error instance has an exact minimizer by compactness.  In every
probability row, round the first \(d-1\) entries down to multiples of
\(2^{-B_{\rm bit}}\) and put the remaining mass in the last entry; do this to every
root and every row of every transition matrix.  The \(\ell_1\) error of each
rounded row is at most \(2d2^{-B_{\rm bit}}\).  If the effect is not fixed by the
promise, round it coordinatewise as well.  Row-stochastic matrices contract
total variation, so telescoping a word \(u\) gives response error at most
\[
                       2(|u|+2)d2^{-B_{\rm bit}}.
\]
Taking
\[
 B_{\rm bit}\ge\left\lceil\log_2\frac{6K(\ell+2)}{\epsilon}\right\rceil
\]
produces a polynomial-bit rational model of error at most
\(\epsilon/3\).  A verifier checks stochasticity and all listed responses
with exact rational arithmetic and accepts at threshold \(\epsilon/2\).
No instance with \(J_K\ge\epsilon\) can be accepted.
\end{proof}

\subsection{Identity calibration through the maintained five-letter parser}

The following interface lemma applies to an identity-calibrated finite
source. Its exact preservation and approximate basis estimates are the
common ingredients of the two reductions. Call a menu \emph{calibration-complete}
when it contains the full phase-signature block and the payload identity
calibration before the source test. A fully specified source map may additionally
include post-control phase and payload probes. A partially specified source map
contributes only its explicitly declared entries; no response depending on an
unknown source parameter is inserted into the input. The calibration block is
included in every menu used in the local-threshold claim below.

\begin{lemma}[Identity-calibrated fixed-interface preservation]
\label{lem:identity-fixed-interface}
Let \(Z\) be a finite payload set, \(K_0:=|Z|\), with one root \(h_z\) and
one source-level query effect $e_y$ for each $z,y\in Z$; after compilation,
each such query is implemented by a probe word and the same single terminal effect.
The identity calibration is
\begin{equation}
                       R(h_z,e_y)=\mathbf1\{z=y\}.
\label{eq:identity-calibration}
\end{equation}
Besides finitely many completely declared rational row-stochastic source
maps, allow one source map to be declared only through a finite collection
of root--word--probe responses.  Applying the explicit total parser
\eqref{eq:alphabet}--\eqref{eq:menu-definition}, with enough fixed-length control and probe codes, produces
an explicitly listed instance over \(\{\pl0,\pl1,\#,\pl{P},\pl{T}\}\).  If \(Q\) is the
parser-sector set, put \(q:=|Q|\) and \(K:=K_0q\).  Then:
\begin{enumerate}[label=(\roman*)]
\item a source realization on the \(K_0\) payload basis states exists if and
only if the compiled instance has a shared realization with \(d\le K\);
\item every calibration-complete menu belonging to a \emph{legal compound
control} has minimum at least \(K\), and it has exact minimum \(K\) whenever
its explicitly declared source-level targets admit a \(K_0\)-state local realization.  The
independent-query static union has exact minimum \(K\);
\item terminal probe programs \(\pl{P}d_y\#\), phase suffixes
\(\varepsilon,\pl{T},\ldots,\pl{T}^{L-1}\), proper prefixes, and malformed strings are not legal
compound controls.  The first two are declared queries used for calibration;
the latter two merely receive the total default semantics of \eqref{eq:alphabet}--\eqref{eq:menu-definition}.
\end{enumerate}
The construction and every listed word have polynomial size and rational bit
length.
\end{lemma}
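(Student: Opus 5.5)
The parser \eqref{eq:alphabet}--\eqref{eq:menu-definition} is not yet visible, so I will fix the properties I expect it to have and argue from those. I assume it has a sector set \(Q\) of size \(q\) and phase codes of length \(L\) over \(\{\pl0,\pl1\}\). The letter \(\pl T\) advances the phase, and the complete macro \(\omega_i=\pl P c_i\#\) applies source action \(i\) to the payload coordinate. The intended carrier is the product \(Z\times Q\), with states \((z,\sigma)\).

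\textbf{Forward direction of (i).} Given a source realization on the \(K_0\) payload basis states, I build the product realization on \(Z\times Q\). Each parser letter moves the sector deterministically according to the total default transition table. The completing letter \(\#\) applies the source matrix \(M_{c_i}\) to the payload coordinate while the sector returns to its rest phase. The single Boolean terminal effect fires only in the designated probe sector. Each probe word \(\pl P d_y\#\) is routed to a sector where the effect reads \(\mathbf1\{z=y\}\) in the identity basis, extended linearly to the source effect \(e_y\). A direct check on each declared word then shows that all compiled targets are reproduced, with \(d=K_0q=K\).

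\textbf{Converse of (i) and the lower bound in (ii).} Here I argue as in the proof of Theorem~\ref{thm:main-product-separation}. The phase-signature block gives exact \(0/1\) responses to the phase suffixes \(\varepsilon,\pl T,\ldots,\pl T^{L-1}\) after each prefix. The identity calibration \eqref{eq:identity-calibration} gives exact \(0/1\) responses that separate payloads.
\begin{itemize}
\item For each root \(h_z\) and sector \(\sigma\), I choose a state of positive probability on the prefix path.
\item Two chosen states with different \((z,\sigma)\) must be distinct. Otherwise a common phase suffix or probe word yields both \(0\) and \(1\) from the same state, which is impossible.
\end{itemize}
This gives \(K\) distinct supported states, so every calibration-complete menu, and hence every shared realization, needs at least \(K\) states. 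If exactly \(K\) states are used, each chosen state carries all mass: a \(0/1\)-valued response at a convex combination forces the whole support onto the corresponding level set, and the \(K\) level-set classes are disjoint. So the roots are point masses. Conjugating any compound control by this identification yields a row-stochastic map on \(Z\). Reading its declared probe responses gives a \(K_0\)-state source realization that matches the partially specified entries.

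\textbf{Exact minimum in (ii), and (iii).} For the upper bound of (ii), a menu that admits a \(K_0\)-state local source realization compiles as in the forward direction to exactly \(K\) states. For the static union, the \(K\) point masses with independent effects suffice. A rank lower bound also holds there, because the phase-by-identity block is a \(K\times K\) permutation submatrix, exactly as in the proof of Theorem~\ref{thm:main-product-separation}. Item (iii) is syntactic: legal compound controls are concatenations of complete macros \(\omega_i\). Probe programs use the reserved codes \(d_y\), phase suffixes contain \(\pl T\), and prefixes and malformed strings fail the fixed-length code check, so each of these falls to the default semantics. The size and bit-length bound follows since \(q\), \(L\) and the code lengths are \(O(\log N)\) or polynomial and all targets lie in \(\{0,1\}\) or are copied source rationals.

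\textbf{Main obstacle.} I expect the hardest step to be the \emph{converse} of (i): showing that a \(K\)-state compiled realization really factors as payload\(\times\)sector. In particular, the transitions on malformed or intermediate parser strings must not mix payload mass across sectors in a way that the calibration fails to detect. My first attempt is to use the point-mass rigidity to force every letter to map the distinguished \(K\) states to distributions whose signatures, taken from the phase-suffix and identity responses after the letter, pin down the sector exactly. For this to work, the phase-signature block must be declared after every proper prefix of each macro. I would verify that the menu definition includes these declarations, or add them if needed.
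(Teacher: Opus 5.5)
\paragraph{Verdict.} Your proof has a genuine gap in the converse of (i), plus a wrong rank claim for the static lower bound in (ii).

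\paragraph{The converse of (i).} Your disjoint-support argument is correct, and it is a more elementary route to rigidity than the paper's (phase cells, rank $K_0$ per cell, then $I=WE$ with $W,E\ge0$ forcing a permutation). The $K$ calibrated roots have pairwise disjoint supports, so at $d=K$ they are point masses on $K$ distinct states that exhaust the latent space. The gap is your next step, ``conjugating any compound control by this identification yields a row-stochastic map on $Z$,'' which you assert without proof.

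The repair you sketch under ``main obstacle'' is neither available nor needed. The menus declare phase suffixes only after complete macros (gate boundaries), never after proper prefixes such as $\pl P$ or $\pl P c$; item (iii) deliberately leaves those undeclared, so your plan would change the instance.

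The missing idea is that no declared response reads a distribution in the middle of a macro. Only the product $M_\omega=M_{\pl P}M_{c}M_{\#}$ of a complete macro $\omega$ enters, so however the primitive letters shuffle mass between cells along the way is irrelevant. This is why the paper can treat the candidate as a black box with no parser structure. The argument then runs as follows:
\begin{enumerate}
\item For a basis state $v_{(z,\sigma)}$, the declared suffixes $\omega\pl T^r$ have exact $0/1$ targets equal to the signature of $\operatorname{reset}(\sigma)$.
\item Every latent state now carries its root's exact signature, and $\chi$ is injective on $Q$. Hence $\delta_{v_{(z,\sigma)}}M_\omega$ is supported on the $K_0$ states of that reset cell.
\item The probes $\omega\pl P d_y\#$ then read its coordinates there: the whole row $z$ for a tomographed map, and only the declared entries for the partial one.
\item Reset sectors are themselves sectors carrying calibrated roots, so the same recovered maps act on the image.
\item Induction over macros shows that every multi-gate declared word evaluates to the product of one common tuple of recovered source maps.
\end{enumerate}
Without this confinement-and-composition step you have only one-step, sector-by-sector readings. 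Nothing then prevents an evaluation, Booleanity, or clause schedule from spreading mass over cells in which the free entries of the partially specified map take different values.

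\paragraph{The static lower bound in (ii).} Your claim that the bound follows from ``a $K\times K$ permutation submatrix'' of the phase-by-identity block is false. That block has only the $L$ phase suffixes and $K_0$ probes as columns, so its ordinary rank is at most $L+K_0$, far below $K=K_0q$ with $q=2^L$. The static bound is a nonnegativity phenomenon, not a rank one. It follows from your own disjoint-support argument, which applies verbatim to an independent-query factorization: each independent effect still lies in $[0,1]^d$, so exact $0/1$ targets still separate supports. It also follows from the paper's per-cell rank sum.

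\paragraph{The forward witness.} The single effect cannot be the indicator of one ``designated probe sector.'' The $L$ cyclic shifts of such an indicator separate at most $L+1$ sectors, not $q$. The paper instead uses the de Bruijn phase readout $e(a,z)=\varphi_Q(z)$ for the signatures, and implements each probe by a readout map that routes payload mass to effect-$0$ and effect-$1$ states.
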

\begin{proof}
Use the code branch \(\pl{P}c\#\) of \eqref{eq:alphabet}--\eqref{eq:menu-definition} for each source map and the
probe branch \(\pl{P}d_y\#\) for each \(e_y\).  The \(\pl{T}\)-cycle and its de Bruijn
effect give every \(Q\)-sector a distinct exact \(L\)-bit signature.  Tensoring
a \(K_0\)-state source model with the parser gives the \(K_0q\)-state shared
upper bound.

Conversely, exact \(0/1\) phase signatures place roots from distinct sectors
on disjoint latent supports.  In each sector the table
\eqref{eq:identity-calibration} has rank \(K_0\), so that sector needs at
least \(K_0\) states.  Hence every realization with \(d\le K_0q\) has
\emph{exactly} \(K_0\) states in every sector and \(d=K\).  In each sector there are exactly $K_0$ calibrated roots and exactly $K_0$
payload probes; after restriction to that sector their response table is the
square identity matrix.  If the sector has exactly $K_0$ latent states, write
its factorization as $I=W E$, where $W$ is a $K_0\times K_0$ row-stochastic
matrix and $E\in[0,1]^{K_0\times K_0}$.  Since $W,E\ge0$ and are inverse,
both are monomial; row normalization makes $W$ a permutation matrix.
Thus the
roots and pulled-back probes are the payload basis and its indicators.
Complete tomography therefore fixes every fully declared source map, while
the partial responses impose exactly, and only, the declared entries of the
partially specified map.  A valid macro resets the code register, so products
of macros induce products of the same recovered source maps.

The same phase/rank argument gives the local and static lower bounds.  When
the source-level targets of a calibration-complete menu admit a
\(K_0\)-state model, tensoring that model with the parser gives its upper
bound.  Encoding all \(K\) compiled roots as basis states and assigning one
independent effect to each table column gives the static upper bound.
The language separation in (iii) is exactly \eqref{eq:legal-language}--\eqref{eq:menu-definition}.  Since declared
words never traverse a malformed transition, changing a total default row
cannot affect their responses or these rank bounds.
\end{proof}

\begin{remark}[Two roles of the parser table]
\label{rem:black-box}
The parser table enters the argument in two separate places.  In the
forward direction it is part of the witness: a source model tensored with the
parser is one specific $K$-state realization, and Lemma~\ref{lem:identity-fixed-interface}(i)
and the upper bounds in (ii) use only this realization.  In the converse
direction the object under study is an arbitrary normalized realization
$\mathcal R$ of dimension $d\le K$; it has no parser structure, no code
register, and no distinguished sink state.  Everything proved about
$\mathcal R$ in Lemmas~\ref{lem:approx-sector-allocation}--\ref{lem:boolean-clause-extraction}
is derived from its declared responses alone: the phase signatures
place each calibrated root inside one $K_0$-state cell, the identity table
identifies that cell with the payload basis, and the post-macro probes read
the induced maps on that basis.  At the saturated budget $d=K=K_0q$ the $q$
cells exhaust the state space, so every latent state of $\mathcal R$ is a
basis state of one sector and every transition of $\mathcal R$ is accounted
for by the recovered maps up to the displayed leakage bounds.  The
malformed-string rows and the sink $\bot$ of the parser affect neither
direction: in the witness, every declared word stays on valid branches, so
these rows may be any row-stochastic vectors, and their values never appear
in a declared response.
\end{remark}

\begin{corollary}[Preservation of the two calibrated hard families]
\label{cor:identity-hard-family-preservation}
Lemma~\ref{lem:identity-fixed-interface} applies both to the ETR--INV gate
source below and to the robust Boolean source below, with
\(q=|Q|\) and \(K=K_0q\). Their menus are calibration-complete and use
partial source tests, whereas the regular geometric family additionally
contains fully declared post-macro probes. The lemma preserves shared
feasibility, the exact menu-wise threshold for the declared menus shown below,
and the independent-query static threshold. In the ETR family, each equation is
separately feasible; in the Boolean family, each Booleanity or clause test is
separately satisfiable, so both families meet the local-feasibility hypothesis
of Lemma~\ref{lem:identity-fixed-interface}(ii). The corresponding
quantitative robust bounds for the compiled instances are established in
Lemmas~\ref{lem:approx-sector-allocation}--%
\ref{lem:boolean-clause-extraction}.
\end{corollary}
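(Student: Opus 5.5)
The plan is to verify, for each of the two source families, the hypotheses of Lemma~\ref{lem:identity-fixed-interface}, and then read off its three conclusions. The lemma asks for three things. First, a finite payload set $Z$ with $K_0=|Z|$, one root $h_z$ and one query effect $e_y$ per payload. Second, the identity calibration~\eqref{eq:identity-calibration}. Third, finitely many completely declared rational row-stochastic source maps, together with at most one source map declared only through finitely many root--word--probe responses. For the ETR--INV gate source I will take $Z$ to be the payload basis on which the continuous variables are encoded. The unknown real parameters will sit only in the single partially specified letter, and every source test will be listed as a root--word--probe response whose words use that letter or fully declared stochastic maps. The Boolean source is handled the same way: Booleanity and clause tests are again partial responses of one letter. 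In both cases I check that every listed target is rational with $\mathcal O(\log N)$-bit numerator and denominator, so the compiled instance stays strongly bounded-rational with polynomially many words.

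With the hypotheses in place, I compile through the total five-letter parser with $q=|Q|$ and $K=K_0q$. Part~(i) of the lemma then gives preservation of shared feasibility. Part~(ii) gives a lower bound of $K$ for every calibration-complete menu, and the exact minimum $K$ once the declared source-level targets of that menu have a $K_0$-state local realization. Part~(ii) also gives the independent-query static threshold $K$ directly. The menus are calibration-complete by design: each carries the phase-signature block and the identity calibration ahead of its single source test.

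The remaining work is local feasibility for each menu on its own, and I expect this to be the main obstacle. For ETR--INV, each menu tests a single equation, either $x_i=1$, $x_i+x_j=x_\ell$, or $x_ix_j=1$ in bounded-interval form. I will show each is satisfiable by some choice of the partial letter's entries within the admissible interval, for example $x_i=x_j=1$ for the inverse constraint, with every other parameter fixed arbitrarily. I then realize that partial letter as a row-stochastic $K_0\times K_0$ matrix on the payload basis. For the Boolean family, a single Booleanity test is met by any $0/1$ value, and a single clause by setting one of its literals true. The point to check is that the stochastic encoding of each such local choice actually completes to a valid row-stochastic matrix matching only that menu's declared entries. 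Since no other menu's constraints are imposed, I expect the completion to go through by placing leftover mass on a fixed payload state.

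Finally, I note the contrast with the regular geometric family, which adds fully declared post-macro probes. Those fall under the lemma's fully specified branch, so they are not needed here. The quantitative robust statements are deferred to Lemmas~\ref{lem:approx-sector-allocation}--\ref{lem:boolean-clause-extraction}, as the corollary itself says.
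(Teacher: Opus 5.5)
Your proposal follows essentially the paper's own route; the paper gives no separate proof, and the supporting facts are spread over the two constructions. You check the lemma's hypotheses for the two identity-calibrated sources, each with a single partially declared letter $V$, note calibration-completeness by design, and give per-menu witnesses: addition $(1/2,1/2,1)$, inversion $(1,1)$, equal values for a copy equation, a $0/1$ value for a Booleanity test, and one true literal for a clause, with the undeclared rows of $V$ on $s,d$ completed arbitrarily. Two small points should be made explicit: in this paper's ETR--INV form the third constraint type is the copy equation rather than $x_i=1$, and individual feasibility of each equation relies on the construction having split occurrences, so that no single equation repeats a variable (otherwise an equation such as $x_i+x_j=x_i$ would be infeasible on $[1/2,2]$).
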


\subsection{The identity-calibrated ETR--INV branch}

Use the \(\exists\mathbb{R}\)-complete ETR--INV normal form
\citep{abrahamsen2018artgallery} with
\begin{equation}
 x_i\in[1/2,2],\qquad x_i+x_j=x_k,\qquad x_ix_j=1.
\end{equation}
Split occurrences and add copy equations, so every single equation is
feasible although their conjunction is feasible exactly when the original
instance is.  Set \(x_i=\frac12+\frac32t_i\), \(t_i\in[0,1]\).  Direct
expansion gives the equivalent acceptance equations
\begin{align}
 \frac{t_i+t_j+1-t_k}{3}&=\frac29, &
 \frac{t_i+t_j+3t_it_j}{5}&=\frac15, &
 \frac{t_i+1-t_j}{2}&=\frac12 .
\label{eq:etr-acceptance}
\end{align}
For example,
\((\frac12+\frac32t_i)(\frac12+\frac32t_j)=1\) is precisely
\(t_i+t_j+3t_it_j=1\).

Let \(Z\) contain a state \(u_i\) for every occurrence variable, transient
states \(s,d\), accept/reject states \(a,r\), and a constant number of
waiting and parked states per equation.  Besides
\eqref{eq:identity-calibration}, use one partially declared letter \(V\).
For \(u_i\), declare all post-\(V\) probes except \(e_s,e_d\) to be zero;
fix every waiting, parked, accept, and reject state under \(V\).  Basis
rigidity and row stochasticity then force
\begin{equation}
                       \delta_{u_i}V=t_i\delta_s+(1-t_i)\delta_d
\label{eq:bernoulli-V}
\end{equation}
for a single tuple \(t\in[0,1]^p\).  The action on \(s,d\) is left
undeclared and is never read while mass is there.

All other maps have completely declared rational routing tables.  A
preparation map makes the rational mixtures in
\eqref{eq:etr-acceptance}.  A \(t_i\)-branch activates \(u_i\), applies
\(V\), and routes \(s\) to \(a\), \(d\) to \(r\); a \(1-t_i\)-branch swaps
the last routing.  A product branch has the checkable evolution
\begin{equation}
\delta_{u_i}\xrightarrow{V}
t_i\delta_s+(1-t_i)\delta_d
\xrightarrow{R_{i\to j}}
t_i\delta_{u_j}+(1-t_i)\delta_r
\xrightarrow{V}
t_it_j\delta_s+t_i(1-t_j)\delta_d+(1-t_i)\delta_r .
\label{eq:etr-product-route}
\end{equation}
The next routing sends \(s\) to accept and everything else to reject.
After each branch a parking map stores its accept/reject mass in fresh fixed
states.  A final collection sends successful parked mass to \(a\) and the
rest to \(r\).

\begin{lemma}[Routing/parking evaluation]
\label{lem:etr-routing-parking}
For each equation, the preceding constant-length schedule is row stochastic,
never applies \(V\) to active mass on \(s\) or \(d\), and has accept
probability equal to the corresponding left side of
\eqref{eq:etr-acceptance}.
\end{lemma}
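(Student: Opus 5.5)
The proof is a direct computation of the mass flow in each of the three branch types, using \eqref{eq:bernoulli-V} as the only fact about $V$. First we fix the schedule concretely. The preparation map sends the starting state to a convex combination of fresh branch-start states with the rational weights of \eqref{eq:etr-acceptance}. For the sum equation these are weights $\tfrac13$ on the $t_i$-, $t_j$- and $(1-t_k)$-branches. For the product equation they are $\tfrac15$, $\tfrac15$ and $\tfrac35$ on the $t_i$-, $t_j$- and product branches. For the copy equation they are $\tfrac12$ and $\tfrac12$ on the $t_i$- and $(1-t_j)$-branches. Each branch is run in its own time slot. The branches are executed sequentially: a branch-activation map moves the mass of the current branch-start state to $u_i$, while every other state, waiting or parked, is fixed.

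Row stochasticity is immediate. Every completely declared map is given by a $0/1$ or rational routing table whose rows we write down explicitly, each row a probability vector. The letter $V$ is row stochastic by assumption. Rows on states never visited may be set to self-loops.

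The key invariant is that $V$ is applied only when all active mass sits on occupation states $u_i$, waiting states, parked states, or $a,r$. We check it slot by slot.
\begin{enumerate}[label=(\alph*)]
\item Immediately before each application of $V$, mass on $s$ and $d$ is zero. The previous branch's routing map has sent $s,d$ to $a,r$, and the parking map has then moved $a,r$ into fresh parked states.
\item In the product route \eqref{eq:etr-product-route}, the intermediate map $R_{i\to j}$ sends $s\mapsto u_j$ and $d\mapsto r$. The second $V$ therefore again sees mass only on $u_j$ and $r$.
\item Since $V$ fixes $r$ and all waiting and parked states by declaration, \eqref{eq:bernoulli-V} gives the displayed distribution exactly.
\end{enumerate}

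Next we compute the accept mass of each branch. A $t_i$-branch yields $t_i$ on $a$. A $(1-t_i)$-branch, with the swapped routing, yields $1-t_i$. The product branch yields $t_it_j$ on $s$, which is sent to $a$. Each is then parked into a distinct success state. The final collection sums the parked successes, so the total accept probability is the weighted sum: $\tfrac13(t_i+t_j+1-t_k)$, $\tfrac15(t_i+t_j+3t_it_j)$, and $\tfrac12(t_i+1-t_j)$ respectively. These are exactly the left sides of \eqref{eq:etr-acceptance}. The schedule length is constant because each equation uses at most three branches of at most four maps each.

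The main obstacle is the bookkeeping in (a) and (b), namely showing that no declared word ever reads the undeclared rows $\delta_s V$ and $\delta_d V$. I would handle it by listing the support of the state distribution after every map of the schedule and checking that $s,d$ never lie in that support at a $V$-step.
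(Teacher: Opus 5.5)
Your proposal is correct and follows essentially the paper's own argument. The paper also inducts over the prepared branches, with the invariant that completed mass sits in disjoint parked states and future mass in waiting states, both fixed by $V$. The immediate routing clears $s,d$ before any later call to $V$, and linearity then yields the weighted sums with weights $1/3$, $1/5,1/5,3/5$, and $1/2$.

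One minor slip: the product branch uses six maps (activation, $V$, $R_{i\to j}$, $V$, routing, parking), not four, but the schedule length is still constant.
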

\begin{proof}
Induct over the prepared branches.  Before branch \(j\), completed mass is
in its disjoint parked states, future mass is in waiting states, and only
the \(j\)-th mass is activated.  The declared \(V\) fixes waiting and parked
states.  Equations \eqref{eq:bernoulli-V} and
\eqref{eq:etr-product-route} evaluate respectively \(t_i\), \(1-t_i\), and
\(t_it_j\).  The immediate routing removes all active mass from \(s,d\);
parking then restores the induction invariant.  Linearity preserves the
preparation weights \(1/3\), \(1/5,1/5,3/5\), or \(1/2\).  Final collection
therefore yields exactly \eqref{eq:etr-acceptance}.  Every displayed state
image has nonnegative coefficients summing to one, so all maps are row
stochastic.
\end{proof}

Let the root set contain the calibrated roots \(h_z\) for \(z\in Z\) and,
for each split equation \(c\), a preparation root \(h_c\) whose initial row is
the explicitly declared rational preparation distribution for that equation.
These preparation roots are additional experiments, not latent states, and do not
change \(K_0=|Z|\). The finite probe block includes, as declared root--word pairs in every \(\mathcal U_c\), the fixed phase-signature suffix after each gate boundary; its targets are independent of the unknown tuple \(t\).
Let \(\mathcal S_{\rm ETR}\) consist of the identity calibration, complete tomography of every rational preparation, routing, parking,
and collection map, and the partial tomography of \(V\) specified above. For each
split equation \(c\), let
\[
 \mathcal U_c:=\mathcal S_{\rm ETR}\cup\{(h_c,v_c):R(h_c,v_c)=r_c\},
\]
where \(v_c\) is its evaluation schedule and
\(r_c\in\{2/9,1/5,1/2\}\) is the corresponding target in
\eqref{eq:etr-acceptance}.  Thus the reduction is a finite rational menu,
and its shared instance is \(\bigcup_c\mathcal U_c\).

\begin{theorem}[Exact calibrated gate reduction]
\label{thm:exact-gate-hardness}
Let \(K_0=|Z|\).  Every individual equation menu and the independent-query
static union have exact minimum \(K_0\), whether or not the ETR--INV instance
is feasible.  One \(K_0\)-state model shared by all menus exists if and only
if that instance is feasible.  After
Lemma~\ref{lem:identity-fixed-interface}, the same statements hold with
\(K=K_0q\) over the fixed five-letter one-effect interface.
\end{theorem}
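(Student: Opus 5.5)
The plan has three parts: a common lower bound of $K_0$ from the identity calibration, matching upper bounds for each equation menu and for the static union, and the shared equivalence with feasibility. The compiled statement then follows from Lemma~\ref{lem:identity-fixed-interface}.

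\emph{Lower bound.} Every menu $\mathcal U_c$ contains the identity calibration \eqref{eq:identity-calibration}. Its $K_0\times K_0$ response table is the identity, so it has rank $K_0$, and any factorization $R=XE$ with $X$ row-stochastic and $E\in[0,1]^{d\times K_0}$ needs $d\ge K_0$. This bound applies to every local menu, to the static union, and to shared models alike.

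\emph{Upper bounds.} For the static union, I take the $K_0$ basis states as the calibrated roots. Each preparation root $h_c$ gets its declared rational distribution, and each tagged query gets an independent effect column equal to its declared target on the basis. All declared static targets are then reproduced; the ones involving $V$ are only the zero post-$V$ probes and the fixed states, so these columns can be chosen freely. For a single equation menu $\mathcal U_c$, I use the split normal form: each individual equation is feasible, so I pick a tuple $t\in[0,1]^p$ satisfying that one equation (other coordinates arbitrary). I define $V$ by \eqref{eq:bernoulli-V} on each $u_i$, as the identity on waiting, parked, accept and reject states, and arbitrarily on $s,d$. The remaining maps are the declared rational routing, parking and collection maps, and the effect is the accept indicator or the probe indicators as compiled. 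Lemma~\ref{lem:etr-routing-parking} shows the evaluation schedule $v_c$ accepts with probability equal to the left side of \eqref{eq:etr-acceptance}, which is $r_c$. So each menu has exact minimum $K_0$, regardless of whether the ETR--INV instance is feasible.

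\emph{Shared equivalence.} If the instance is feasible, the same construction with one global tuple $t$ satisfies every equation at once, and Lemma~\ref{lem:etr-routing-parking} gives all targets. Conversely, suppose a shared $K_0$-state model exists. The rigidity step in the proof of Lemma~\ref{lem:identity-fixed-interface} (the product $W E=I$ with nonnegative factors is monomial, hence $W$ is a permutation) identifies the latent states with the payload basis and the probes with indicators. Complete tomography then pins every declared rational map. For the partial letter $V$, row stochasticity together with the zero probes forces $\delta_{u_i}V$ to be supported on $\{s,d\}$, which gives \eqref{eq:bernoulli-V} for some $t_i\in[0,1]$. Fixed states stay fixed because their declared self-probes equal $1$. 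The schedules never apply $V$ to active mass on $s$ or $d$, so the undeclared rows never enter a declared response. The accept probabilities are therefore the left sides of \eqref{eq:etr-acceptance} evaluated at this single $t$, and matching every $r_c$ means $t$ solves all transformed equations. Setting $x_i=\tfrac12+\tfrac32t_i$ then yields a feasible ETR--INV point, since $x_i\in[1/2,2]$ and the copy equations recombine the split occurrences.

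The main obstacle is checking the converse extraction carefully. The work is in showing that the partial declarations of $V$ determine exactly the Bernoulli form \eqref{eq:bernoulli-V} and nothing more, so that no hidden freedom in the undeclared rows can fake an accept probability. Once this holds, the compiled claims with $K=K_0q$ follow from Lemma~\ref{lem:identity-fixed-interface}(i)--(ii), using Corollary~\ref{cor:identity-hard-family-preservation} for the local-feasibility hypothesis.
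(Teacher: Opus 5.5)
Your proposal is correct and follows essentially the same route as the paper. The identity calibration gives the $K_0$ lower bound and basis rigidity, and Lemma~\ref{lem:etr-routing-parking} ties the shared $V$ to a single tuple $t$ satisfying all split equations. Separately chosen feasible values give each one-equation menu, basis-state roots with independent effects give the static union, and Corollary~\ref{cor:identity-hard-family-preservation} transfers everything to $K=K_0q$. You spell out the extraction of the Bernoulli form of $V$ more explicitly than the paper does. The one detail left unstated is that the static column for $(h_c,v_c)$ can be taken as the constant effect $r_c\one$.
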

\begin{proof}
Identity calibration gives the lower bound and identifies every
\(K_0\)-state model with the basis.  Lemma~\ref{lem:etr-routing-parking}
then shows that one shared matrix \(V\) realizes all equation targets exactly
if and only if its common parameters \(t_i\) satisfy all split equations.
Each equation alone is feasible: addition uses
\((x_i,x_j,x_k)=(1/2,1/2,1)\), inversion uses \((1,1)\), and a copy
equation uses equal values.  Choosing such values separately supplies every
local upper bound.  For the static union, encode the \(K_0\) roots as basis
states and give every response column an independent effect.  The fixed
interface conclusion is Corollary~\ref{cor:identity-hard-family-preservation}.
\end{proof}

\subsection{The independent robust Boolean branch}

Start from bounded-occurrence \(3\)-SAT with size
\(N_{\rm bool}:=\max\{2,p+q_F\}\). Here $p$ is the number of variables and $q_F$ the number of clauses. Use the same identity calibration on a payload set
\(Z\) consisting of variable states \(u_i\), transient states \(s,d\),
accept/reject states \(a,r\), and constantly many waiting/parking states per
test; hence \(K_0=O(N_{\rm bool})\).  The single partial map has
\begin{equation}
                       \delta_{u_i}V=b_i\delta_s+(1-b_i)\delta_d .
\label{eq:boolean-V}
\end{equation}
Every routing and parking map is fully tomographed with $0$--$1$ entries;
the preparation maps are fully declared rational stochastic maps.
In particular
\begin{equation}
\begin{array}{c|cccc}
 &s&d&a&r\\ \hline
R_{i\to j,+}&u_j&r&a&r\\
R_{i\to j,-}&r&u_j&a&r\\
P_+&a&r&a&r\\
P_-&r&a&a&r
\end{array}
\label{eq:boolean-routes}
\end{equation}
where entries are images of basis rows and omitted waiting/parked states are
fixed.  Thus two calls to the \emph{same} \(V\), separated by a routing,
produce \(b_ib_j\), \((1-b_i)b_j\), and their complements as required.
A Booleanity word has accept probability \(b_i(1-b_i)\).  A clause word routes the failure branch of each literal to the next
variable and parks the other branch.  More explicitly, if $\nu_j$ is the mass
still on the failure path after the first $j$ literals, then
$\nu_0=1$ and the declared routing gives
$\nu_j=\nu_{j-1}(1-\ell_j)$, while all complementary mass is parked and never
re-enters the failure path.  Hence $\nu_3$ is the clause response
\begin{equation}
                       q_C=\nu_3=\prod_{j=1}^3(1-\ell_j).
\label{eq:clause-product}
\end{equation}
All Booleanity and clause targets are zero, and every test contains at most
an absolute constant \(g_0\) source gates.

Let \(\mathcal S_{\rm bool}\) contain the identity calibration, complete
tomography of every fixed routing and parking map, and the same partial
tomography of \(V\) as in \eqref{eq:boolean-V}. The probe block also includes
the fixed phase-signature suffix after each gate boundary as declared root--word pairs
in every Booleanity and clause menu; these targets are independent of the free
Boolean parameters. The finite source menu is \(\mathcal S_{\rm bool}\) together
with one Booleanity schedule per variable and one clause schedule per clause.
Its size is polynomial in \(N_{\rm bool}\).
Each individual Booleanity schedule is realized by choosing
\(b_i\in\{0,1\}\), and each individual clause schedule by choosing one of
its literals true; all other free values may be chosen arbitrarily.
A Booleanity menu contains only its own Booleanity schedule, and a clause menu
contains only its own clause schedule, together with the common calibration and
routing block.  Thus a clause menu can choose a literal with value one and has
zero failure probability; it does not simultaneously impose the Booleanity
constraints or any other clause.  The gate count $g_0$ is an absolute upper bound on the total number of
preparation, $V$, routing, parking, and collection gates in any single test.
In particular, a Booleanity test uses two calls to $V$; the total gate count
also includes its fixed routing gates and is independent of $N_{\rm bool}$.

Compile this source by Lemma~\ref{lem:identity-fixed-interface}.  Let \(L\)
be the de Bruijn signature length, \(q=|Q|\), and \(K=K_0q\).

\begin{lemma}[Approximate sector allocation and phase leakage]
\label{lem:approx-sector-allocation}
Suppose a compiled realization has \(d\le K\) and pointwise defect
\(\dr\).  Put \(e_0:=(2+8L)\dr\).  If \(K_0e_0<1\), rounding the
\(L\) pulled-back phase effects at \(1/2\) defines \(q\) cells.  Each
calibrated root has mass at most \(2L\dr\) outside its prescribed cell,
every cell contains at least \(K_0\) latent states, and consequently
\(d=K\) and every cell contains exactly \(K_0\) states.  In particular,
when \(d<K\), defect \(\dr<1/(K_0(2+8L))\) is impossible.
\end{lemma}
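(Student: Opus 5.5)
We work directly with an arbitrary compiled realization $\mathcal R=((p_h),(M_a),e)$ of dimension $d\le K$ and pointwise defect $\dr$, as in Remark~\ref{rem:black-box}. For each phase index $\ell\in\{0,\ldots,L-1\}$ we form the pulled-back phase effect $f_\ell:=M_{\pl T^\ell}e\in[0,1]^d$. Rounding each coordinate at $1/2$ assigns to every latent state $z$ a bit string $\sigma(z)\in\{0,1\}^L$, namely $\sigma_\ell(z)=\mathbf 1[f_\ell(z)\ge 1/2]$. We let the cell of a sector $Q$-element be the set of states whose string equals that sector's de Bruijn signature. These $q$ cells are disjoint by construction, since the signatures are distinct.

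\textbf{Leakage bound.} Fix a calibrated root $h$ in a given sector with signature $\tau$. The declared targets are $p_hf_\ell=\tau_\ell\in\{0,1\}$ up to error $\dr$. Suppose $\tau_\ell=0$. Then $\sum_z p_h(z)f_\ell(z)\le\dr$, so Markov's inequality gives $p_h(\{z:f_\ell(z)\ge1/2\})\le 2\dr$. The case $\tau_\ell=1$ is symmetric: apply the same argument to $1-f_\ell$, which is legitimate since $e\in[0,1]^d$ and the $M_a$ are stochastic, so $f_\ell\in[0,1]^d$. A union bound over the $L$ phases shows that the mass of $p_h$ on states with $\sigma(z)\neq\tau$ is at most $2L\dr$, which is the stated leakage. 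The same argument applies to each pulled-back payload probe $g_y$, since the probe words are declared for the calibrated roots.

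\textbf{Rank bound in each cell.} Fix a sector and let $S$ be its cell. Restrict the $K_0$ calibrated roots of this sector to $S$ and renormalize, which costs $\ell_1$ error at most $4L\dr$. Do the same for the probe columns $g_y|_S$. This yields $W E$ with $W$ a row-stochastic $K_0\times|S|$ matrix and $E\in[0,1]^{|S|\times K_0}$. The calibration targets~\eqref{eq:identity-calibration} and the leakage estimate give $\|WE-I\|_\infty\le \dr+4L\dr+\ldots$, and we aim to bound this by $e_0=(2+8L)\dr$, the extra factor of $2$ absorbing the renormalization. By Gershgorin, or by the estimate $\|WE-I\|_{\infty\to\infty}\le K_0e_0<1$ in the row-sum norm, $WE$ is invertible. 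Hence $|S|\ge\operatorname{rank}(WE)=K_0$.

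\textbf{Conclusion and main obstacle.} The $q$ cells are disjoint and each has at least $K_0$ states, so $d\ge K_0q=K$. Combined with $d\le K$, this forces $d=K$ and exactly $K_0$ states per cell. The final contrapositive follows at once: if $d<K$, the hypothesis $K_0e_0<1$ must fail, that is, $\dr\ge 1/(K_0(2+8L))$. The main obstacle is the bookkeeping in the rank step. The leakage from the rows $p_h$ and the contribution of the off-cell part of the probes have to be shown to enter with the constant $2+8L$ and not worse. Since the entries of $g_y$ lie in $[0,1]$, the off-cell contribution is bounded by the leaked mass $2L\dr$. Renormalization at most doubles this, and the two sides of the leakage together supply the factor $8L$. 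We would verify this count carefully, entry by entry.
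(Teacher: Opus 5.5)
Your proposal is correct and follows the paper's proof. Markov's inequality plus a union bound over the $L$ phase bits gives the $2L\delta_{\rm resp}$ leakage, and restricting and renormalizing the $K_0$ calibrated roots of each sector yields a near-identity $K_0\times K_0$ product $WE$; its invertibility under $K_0e_0<1$ forces every disjoint cell to contain at least $K_0$ states, hence $d=K$. The only differences are cosmetic: your $\ell_1$ renormalization estimate already gives $|(WE)_{zy}-\mathbf 1\{z=y\}|\le(1+4L)\delta_{\rm resp}\le e_0$ (so the bookkeeping you flag closes with no extra factor of $2$), and you use the row-sum norm or Gershgorin where the paper uses the spectral norm.
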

\begin{proof}
For a target phase bit zero or one, mass on a state with the wrong rounded
bit contributes at least half its mass to response error.  A union bound over
the \(L\) bits gives leakage at most \(2L\dr\).
Fix a phase sector.  It has exactly \(K_0\) calibrated roots, one for each
payload index.  If the sector contains \(d_\tau\) latent states, restricting
and renormalizing those roots gives a row-stochastic matrix
$W\in[0,1]^{K_0\times d_\tau}$, while restricting the \(K_0\) payload effects
gives $E\in[0,1]^{d_\tau\times K_0}$.  Thus $WE$ is always a
$K_0\times K_0$ matrix. If $o\le2L\dr$ is the removed mass in one row, then for
every payload column the omitted contribution is in $[0,o]$, and
renormalization changes the retained contribution by at most
$(\dr+o)/(1-o)$.  Under $K_0e_0<1$ we have $o<1/2$, so
\begin{equation}
 \bigl|(WE)_{zy}-\mathbf 1\{z=y\}\bigr|
 \le 2\dr+4o\le(2+8L)\dr=e_0 .
\end{equation}
Thus $\|WE-I\|_2\le K_0\|WE-I\|_{\max}<1$.  If the cell had fewer
than $K_0$ states, then $\operatorname{rank}(WE)\le d_\tau<K_0$, whereas
\(\sigma_{\min}(I+(WE-I))\ge1-\|WE-I\|_2>0\).
Thus all \(q\) cells contain at least \(K_0\) states.  Since
\(d\le K_0q\), equality holds everywhere, proving also the \(d<K\) claim.
\end{proof}

\begin{lemma}[Approximate identity/basis rigidity]
\label{lem:approx-basis-rigidity}
Let \(W\) be \(K_0\times K_0\) row stochastic, let
\(E\in[0,1]^{K_0\times K_0}\), and suppose
\(\|WE-I\|_{\max}\le e_0\le1/(4(K_0+1))\).  There is a permutation
\(\pi\) such that, for \(i\ne j\),
\begin{equation}
 W_{i,\pi(i)}\ge1-2(K_0+1)e_0,\quad
 E_{\pi(i),i}\ge1-(2K_0+3)e_0,\quad E_{\pi(i),j}\le2e_0 .
\label{eq:approx-basis}
\end{equation}
\end{lemma}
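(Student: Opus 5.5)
We plan a perturbative version of the exact monomial argument from Lemma~\ref{lem:identity-fixed-interface}. Write $A:=WE$, so every entry of $A$ is within $e_0$ of the identity. Each diagonal entry satisfies
\[
A_{ii}=\sum_\ell W_{i\ell}E_{\ell i}\ge1-e_0 .
\]
This is a convex combination of numbers in $[0,1]$. Hence most of row $i$ of $W$ sits on states $\ell$ with $E_{\ell i}$ close to $1$. Each off-diagonal entry satisfies $A_{ij}=\sum_\ell W_{i\ell}E_{\ell j}\le e_0$, so the same mass sits on states with $E_{\ell j}$ small for every $j\ne i$.

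\textbf{Step 1 (good sets).} Fix the threshold $1/2$ and let $G_i:=\{\ell: E_{\ell i}>1/2\}$. Since $E_{\ell i}\le 1/2$ off $G_i$, we get $1-e_0\le W_i(G_i)+\tfrac12(1-W_i(G_i))$. Therefore $W_i(G_i)\ge1-2e_0$.

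\textbf{Step 2 (disjointness).} We show that the $G_i$ are pairwise disjoint. Suppose $\ell\in G_i\cap G_j$ with $i\ne j$. Then $A_{ij}\ge W_{i\ell}/2$, so $W_{i\ell}\le 2e_0$. This alone does not contradict Step~1. Instead we define a row-$i$ mass by $m_{ij}:=W_i(G_i\cap G_j)$. Since $E_{\ell j}>1/2$ on that set, $A_{ij}\le e_0$ forces $m_{ij}\le 2e_0$. Summing over $j\ne i$, the mass of $W_i$ on states of $G_i$ that also lie in some other $G_j$ is at most $2(K_0-1)e_0$. So row $i$ places mass at least $1-2K_0e_0>1/2$ on its private part $P_i:=G_i\setminus\bigcup_{j\ne i}G_j$. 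In particular each $P_i$ is nonempty. The $P_i$ are pairwise disjoint by definition, and there are $K_0$ of them inside a set of $K_0$ states. Hence each $P_i$ is a singleton $\{\pi(i)\}$, $\pi$ is a permutation, and each $\pi(i)$ lies in no other $G_j$.

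\textbf{Step 3 (the three bounds).}

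\emph{Diagonal of $W$.} Row $i$ has mass at least $1-2e_0$ on $G_i$, and $G_i$ contains $\pi(i)$ plus states shared with other $G_j$. The shared mass is at most $2(K_0-1)e_0$. This gives $W_{i,\pi(i)}\ge1-2K_0e_0\ge1-2(K_0+1)e_0$, with slack to absorb the threshold bookkeeping.

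\emph{Off-diagonal of $E$.} For $j\ne i$ we have $e_0\ge A_{ij}\ge W_{i,\pi(i)}E_{\pi(i),j}$. With $W_{i,\pi(i)}\ge 1/2$ this gives only $E_{\pi(i),j}\le 2e_0/W_{i,\pi(i)}$. To reach the stated constant $2e_0$ we must be careful, and we would tighten by showing $W_{i,\pi(i)}\ge 1/2$ suffices after adjusting. If the constant does not come out, we would accept $E_{\pi(i),j}\le e_0/(1-2(K_0+1)e_0)\le 2e_0$, which uses the hypothesis $e_0\le1/(4(K_0+1))$. That hypothesis gives $1-2(K_0+1)e_0\ge1/2$, so this route works.

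\emph{Diagonal of $E$.} We write
\[
1-e_0\le A_{ii}\le W_{i,\pi(i)}E_{\pi(i),i}+(1-W_{i,\pi(i)}) .
\]
Hence $E_{\pi(i),i}\ge 1-e_0/W_{i,\pi(i)}$. Using $W_{i,\pi(i)}\ge 1-2(K_0+1)e_0$ gives $E_{\pi(i),i}\ge1-e_0-2(K_0+1)e_0$, which is the bound $1-(2K_0+3)e_0$. This matches the stated constant exactly, so it confirms the route.

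\textbf{Main obstacle.} The delicate step is Step~2, the pigeonhole argument that forces the private parts $P_i$ to be singletons forming a permutation. It needs $2K_0e_0<1/2$ so that each $P_i$ is nonempty, and this is guaranteed by $e_0\le1/(4(K_0+1))$. The remaining constants follow from the elementary convex-combination inequalities above.
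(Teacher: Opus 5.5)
Your proposal is correct and follows the paper's proof essentially step for step: the threshold sets $G_i=\{z:E_{zi}>1/2\}$, the bound $W_i(G_i)\ge1-2e_0$ from the diagonal entry, a bound on the row mass lying in the overlaps $G_i\cap G_j$, the pigeonhole argument forcing the private parts $G_i\setminus\bigcup_{j\ne i}G_j$ to be singletons, and substitution back into the entries of $WE$. The only difference is bookkeeping. You bound the overlap mass per pair, $W_i(G_i\cap G_j)\le2e_0$, which gives the slightly sharper $W_{i,\pi(i)}\ge1-2K_0e_0$, whereas the paper bounds each overlap state by $W_{iz}\le2e_0$ and sums over at most $K_0$ states. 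Your hedged off-diagonal step is simply $E_{\pi(i),j}\le e_0/W_{i,\pi(i)}\le2e_0$. The diagonal bound follows most directly from $E_{\pi(i),i}\ge W_{i,\pi(i)}E_{\pi(i),i}\ge W_{i,\pi(i)}-e_0$.
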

\begin{proof}
Put $G_i=\{z:E_{zi}>1/2\}$.  The diagonal inequality gives row $i$
mass at least $1-2e_0$ on $G_i$.  If $z\in G_i\cap G_j$ with $i\ne j$,
then $W_{iz}E_{zj}\le(WE)_{ij}\le e_0$ and $E_{zj}>1/2$, so
$W_{iz}\le2e_0$.  Hence row $i$ has mass at most $2K_0e_0$
on all intersections, and
\[
 W_i\bigl(G_i\setminus\bigcup_{j\ne i}G_j\bigr)
 \ge1-2(K_0+1)e_0>0.
\]
The sets $G_i':=G_i\setminus\bigcup_{j\ne i}G_j$ are pairwise disjoint
and there are $K_0$ of them inside a set of $K_0$ latent states.  Each is
therefore a singleton, defining a permutation $\pi$.  The displayed mass
bound gives the estimate for $W_{i,\pi(i)}$.  Substitution in the diagonal
entry gives the estimate for $E_{\pi(i),i}$, and substitution in the
off-diagonal entry gives $E_{\pi(i),j}\le2e_0$.
\end{proof}

Fix the synchronized cell and identify its state \(\pi(z)\) with \(z\).
For a latent row distribution \(x\), let
\(\operatorname{dec}(x)_y:=xe_y^{\rm pull}\), let \(x_\pi\) denote its
unnormalized coordinates on the synchronized cell, and let
\(\theta(x)\) be its mass outside that cell.

\begin{lemma}[Gate transfer by total-variation contraction]
\label{lem:robust-boolean-gate}
There is an absolute \(c_0>0\) such that, for
\[
 \kappa:=c_0(1+K_0)^2(1+L),\qquad\dr\le1/\kappa,
\]
the following hold.  For every completely declared routing map \(G\),
\begin{equation}
\begin{split}
\|\operatorname{dec}(xM_G)-\operatorname{dec}(x)G\|_1
 &\le\kappa(\dr+\theta(x)),\\
\theta(xM_G)&\le\theta(x)+\kappa\dr .
\end{split}
\label{eq:routing-transfer}
\end{equation}
Here $M_G$ denotes the candidate product of the primitive matrices in the
fixed routing schedule for a declared gate $G$, and $M_V$ denotes the single
primitive matrix used for the partially specified letter $V$. The finite probe
block contains phase suffixes after every gate boundary, so each intermediate
leakage term below is bounded by a declared response error. There are recovered
values \(\widehat b_i\in[0,1]\) such that \(V\) is
similarly approximated by
\(\delta_{u_i}\widehat V=\widehat b_i\delta_s+(1-\widehat b_i)\delta_d\), with the additional
error \(3\mu(x)\) and leakage \(\mu(x)\), where
\(\mu(x)=x_{\pi(s)}+x_{\pi(d)}\).  Along any declared test of at most \(g_0\)
gates, \(\mu(x)=0\) in the ideal trajectory whenever \(V\) is called, and
\begin{equation}
\left|\operatorname{resp}(w)-
(\delta_{u_i}\widehat G_1\cdots\widehat G_t)_a\right|
\le C_0\dr,\qquad
C_0:=11(\kappa+10)^{g_0+1} .
\label{eq:word-transfer}
\end{equation}
\end{lemma}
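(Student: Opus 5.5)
The plan is to run a one-gate stability estimate, using the approximate basis rigidity of Lemma~\ref{lem:approx-basis-rigidity} and the sector allocation of Lemma~\ref{lem:approx-sector-allocation}, and then telescope it along the at most $g_0$ gates of a declared test. First I fix the synchronized cell and its permutation $\pi$. Lemma~\ref{lem:approx-sector-allocation} gives leakage $2L\dr$ for every calibrated root. Lemma~\ref{lem:approx-basis-rigidity}, applied with $e_0=(2+8L)\dr$, then shows that each calibrated root row is within $\ell_1$ distance $O(K_0 L\dr)$ of the point mass $\delta_{\pi(z)}$. It also shows that the pulled-back probes satisfy $|E_{\pi(z),y}-\mathbf 1\{z=y\}|=O(K_0L\dr)$. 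Consequently, for any latent row $x$ the decoder satisfies $\|\operatorname{dec}(x)-x_\pi\|_1\le O(K_0^2L\dr)+\theta(x)$, since the probes are bounded by one on states outside the cell. This is where the factor $(1+K_0)^2(1+L)$ in $\kappa$ comes from. The required smallness $K_0e_0<1/(4(K_0+1))$ is ensured by choosing $c_0$ large.

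For a completely declared routing map $G$, complete tomography declares $R(h_z,\text{(macro of }G)\,\text{probe}_y)=G_{zy}$, and also declares the phase suffixes after the macro. Decoding row $\delta_{\pi(z)}M_G$ therefore gives $G_{z,\cdot}$ up to $O(\kappa\dr)$. The phase responses bound the mass that $\delta_{\pi(z)}M_G$ sends outside the cell by $2L\dr$ plus the root-versus-basis error. Writing $x=x_\pi+x_{\rm out}$ and using linearity together with $\|x_{\rm out}M_G\|_1=\theta(x)$ (row-stochasticity) yields both inequalities in \eqref{eq:routing-transfer}. For $V$, the declared post-$V$ probes from $u_i$ vanish except on $s,d$, so $\delta_{\pi(u_i)}M_V$ decodes to $\widehat b_i\delta_s+(1-\widehat b_i)\delta_d$ up to $O(\kappa\dr)$. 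Here $\widehat b_i$ is defined as the decoded $s$-coordinate, clipped to $[0,1]$. The waiting, parked, accept and reject states are declared fixed under $V$, so they transfer in the same way. The only undeclared rows are those of $\pi(s),\pi(d)$, whose images are arbitrary stochastic rows. Their contribution is at most $\mu(x)$ in leakage and at most $3\mu(x)$ in decoded $\ell_1$ error: one $\mu$ for the true image, one for the removed ideal image, and one for the resulting out-of-cell mass.

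The main obstacle is the telescoping step, because the leakage terms $\theta$ and $\mu$ feed back into later errors. I would track the pair $(\epsilon_j,\theta_j)$, where $\epsilon_j$ is the decoded $\ell_1$ error after $j$ gates. The one-gate estimates give $\epsilon_{j+1}\le\epsilon_j+\kappa(\dr+\theta_j)+3\mu_j$ and $\theta_{j+1}\le\theta_j+\kappa\dr+\mu_j$. In the ideal trajectory $\mu=0$ whenever $V$ is called, which is guaranteed by the routing/parking invariant of Lemma~\ref{lem:etr-routing-parking} transported to the Boolean schedule. Hence the actual $\mu_j\le\epsilon_j+\theta_j$. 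It follows that $\epsilon_{j+1}+\theta_{j+1}\le(\kappa+10)(\epsilon_j+\theta_j+\dr)$. The initial error is $O(\kappa\dr)$ from the root estimate, and the final accept coordinate is read by a declared probe at further cost $O(\dr)$. Iterating over at most $g_0$ gates gives $(\kappa+10)^{g_0+1}$ times a constant, and after tracking constants this constant is $11$, which is $C_0$. If this feedback estimate turns out to be too lossy, I would instead use the declared phase suffix at each gate boundary to reset $\theta_j\le 2L\dr$ directly, as the statement suggests. That change keeps the recursion from compounding in $\theta$.
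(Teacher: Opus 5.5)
Your proposal is correct and follows the paper's proof essentially step for step. Basis rigidity gives the decoding estimate, and complete tomography plus linearity and $\ell_1$ contraction give the routing bounds. $\widehat b_i$ is read off as the decoded $s$-coordinate, with the $3\mu/\mu$ charge for the undeclared rows. The combined quantity $S_r=E_r+\theta_r$ is closed by bounding $\mu$ through the vanishing $s,d$ coordinates of the ideal trajectory.

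Only your constant bookkeeping is loose, in three places:
\begin{enumerate}
\item Out-of-cell mass can register on all $K_0$ probes, so the $\theta$ term in your decoding estimate should carry a factor up to $K_0$ in $\ell_1$, not $1$.
\item Your bound $\mu_j\le\epsilon_j+\theta_j$ is missing the $O(K_0e_0)$ decoding error and a factor $2$ on $\theta_j$. The paper's version is $\mu_{r-1}\le E_{r-1}+2\kappa\dr+2\theta_{r-1}$.
\item As a result, the additive term in your recursion is $O(\kappa\dr)$ (the paper's $10\kappa\dr$) rather than $(\kappa+10)\dr$.
\end{enumerate}
With $S_0\le\kappa\dr$ the corrected recursion still gives $S_r\le 11(\kappa+10)^{r+1}\dr$, so the conclusion is unaffected.
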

\begin{proof}
The candidate is an arbitrary $K$-state realization (Remark~\ref{rem:black-box}).
Its state space is partitioned by Lemma~\ref{lem:approx-sector-allocation}
into $q$ cells of exactly $K_0$ states, and mass that a gate moves out of the
synchronized cell is recorded by $\theta(x)$.  Since every payload effect
takes values in $[0,1]$, such mass changes a decoded coordinate by at most
its own amount, and the post-gate phase suffixes bound it by $2L\dr$ per
gate.  Thus the only way in which states outside the synchronized cell enter
the estimates is through the additive term $\theta(x)$ below.
Lemma~\ref{lem:approx-basis-rigidity} makes decoding differ from the
\(\pi\)-coordinates by \(O(K_0e_0+\theta)\).  On each calibrated basis root, basis rigidity gives
\begin{equation}
 \|p_z-\mathbf e_{\pi(z)}\|_1\le6(K_0+1)e_0,
 \qquad
 \|\operatorname{dec}(x)-x_\pi\|_1
 \le K_0(2K_0+3)e_0+(K_0+1)\theta(x).
\end{equation}
Complete tomography gives an error at most $\dr$ in each of the $K_0$
payload coordinates, hence at most $K_0\dr$ in $\ell_1$ norm; the
post-gate phase suffixes give leakage at most $2L\dr$.  Applying the
first display to the basis decomposition, using stochastic $\ell_1$
contraction, and then separating the mass outside the synchronized cell,
gives
\[
 \|\operatorname{dec}(xM_G)-\operatorname{dec}(x)G\|_1
 \le \kappa_1(\dr+\theta(x)),
 \qquad
 \theta(xM_G)\le\theta(x)+\kappa_1\dr,
\]
where one may take $\kappa_1=40(1+K_0)^2(1+L)$.  We choose the absolute
constant $c_0$ in $\kappa=c_0(1+K_0)^2(1+L)$ so that $\kappa\ge\kappa_1$.
This proves \eqref{eq:routing-transfer}.

For \(V\), define
\(\widehat b_i:=\operatorname{dec}(p_{u_i}M_V)_s\).
The declared zero probes and stochastic row sum put all remaining decoded
mass at \(d\), up to the same bound.  The undeclared rows \(s,d\) cost at
most \(3\mu(x)\) in decoded \(\ell_1\) distance and \(\mu(x)\) in leakage.
The routing/parking schedule removes \(s,d\) before every later call to
\(V\). Let $x_r$ be the candidate distribution after $r$ gates and put
$\mu_r:=\mu(x_r)$. Let $E_r$ be its decoded $\ell_1$ error and $\theta_r$
its leakage, and put $S_r=E_r+\theta_r$.  The preceding routing estimates give, for a
completely declared gate,
\[
 S_r\le S_{r-1}+2\kappa\dr+\kappa\theta_{r-1},
\]
whereas for $V$ they give the same bound with an additional $4\mu_{r-1}$.
At a call to $V$, the ideal decoded vector has zero $s,d$ coordinates by
construction: every preceding routing/parking map removes the active mass
from those transient states.  Therefore the decoding estimate gives
\[
 \mu_{r-1}\le E_{r-1}+2\kappa\dr+2\theta_{r-1}.
\]
Consequently, for every gate (including $V$),
\[
 S_r\le(\kappa+10)S_{r-1}+10\kappa\dr.
\]
The calibration estimate gives $S_0\le\kappa\dr$.  Since the number
of gates is at most the absolute constant $g_0$, induction yields
\[
 S_r\le 11(\kappa+10)^{r+1}\dr
       \le C_0\dr .
\]
At every declared call to $V$ the ideal trajectory has $\mu=0$, so the
decoded accept response differs from the ideal Boolean polynomial by at
most $C_0\dr$.  This proves \eqref{eq:word-transfer} with an explicit
fixed-degree polynomial constant.  In particular, because $K_0=O(N_{\rm bool})$,
$L=O(\log N_{\rm bool})$, and $g_0$ is absolute, $C_0$ is polynomial in $N_{\rm bool}$ and
the resulting gap remains inverse-polynomial.
\end{proof}

\begin{lemma}[Boolean assignment and clause extraction]
\label{lem:boolean-clause-extraction}
Round each \(\widehat b_i\) to its nearer endpoint
\(\widetilde b_i\in\{0,1\}\).  If all declared responses have defect
\(\dr\le\min\{1/\kappa,1/(26C_0)\}\), then
\[
 |\widehat b_i-\widetilde b_i|\le4C_0\dr .
\]
If \(\widetilde b\) falsifies a clause, that clause word has response at
least \(1-13C_0\dr\ge1/2\), contradicting its zero target.
\end{lemma}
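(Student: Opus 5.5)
The plan is to apply the word-transfer estimate \eqref{eq:word-transfer} of Lemma~\ref{lem:robust-boolean-gate} twice: once to the Booleanity word of each variable, and once to each clause word. The hypothesis $\dr\le 1/\kappa$ makes Lemma~\ref{lem:robust-boolean-gate} available. The hypothesis $\dr\le 1/(26C_0)$ keeps all the resulting error terms below the thresholds needed for rounding and for the final contradiction.

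\emph{Step 1 (near-Booleanity).} Fix a variable $i$ and its Booleanity word $w_i$. In the ideal recovered dynamics $\widehat G_1\cdots\widehat G_t$, the two calls to the same recovered $\widehat V$ are separated by the routing of \eqref{eq:boolean-routes}. This produces accept mass $\widehat b_i(1-\widehat b_i)$; the computation is the one in Lemma~\ref{lem:etr-routing-parking} and \eqref{eq:boolean-V}, now with $\widehat b_i$ in place of $b_i$. The declared target is $0$ and the candidate's response is within $\dr$ of it. Hence \eqref{eq:word-transfer} gives
\[
\widehat b_i(1-\widehat b_i)\le \dr+C_0\dr\le 2C_0\dr ,
\]
using $C_0\ge1$. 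Since $\max\{\widehat b_i,1-\widehat b_i\}\ge1/2$, we get
\[
\min\{\widehat b_i,1-\widehat b_i\}\le 4C_0\dr .
\]
This minimum is exactly $|\widehat b_i-\widetilde b_i|$, which proves the first claim.

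\emph{Step 2 (clause extraction).} Let $C$ be a clause falsified by $\widetilde b$, with literal values $\widehat\ell_j$ computed from $\widehat b$. Each $\widetilde\ell_j=0$, so $\widehat\ell_j\le 4C_0\dr$. By \eqref{eq:clause-product}, the ideal recovered clause response is $\prod_{j=1}^3(1-\widehat\ell_j)$. By the union (Bernoulli) inequality this is at least
\[
1-\sum_{j=1}^3\widehat\ell_j\ge 1-12C_0\dr .
\]
Applying \eqref{eq:word-transfer} to the clause word costs at most a further $C_0\dr$. The candidate response is therefore at least $1-13C_0\dr$. The bound $\dr\le1/(26C_0)$ makes this at least $1/2$. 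Since the target is $0$, the defect is at least $1/2>\dr$, which is a contradiction.

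\emph{Main obstacle.} The delicate point is justifying that the ideal recovered trajectory in \eqref{eq:word-transfer} really evaluates to the displayed polynomials in $\widehat b$. This needs two facts: every fully declared gate is replaced by its exact $0$--$1$ table, and $V$ is only applied when $\mu=0$ in the ideal trajectory. Both are supplied by Lemma~\ref{lem:robust-boolean-gate}, so what remains is a careful induction along the clause schedule: the failure mass $\nu_j$ is multiplied by $1-\widehat\ell_j$, and parked mass never returns. I would also check the bookkeeping that the response defect $\dr$ is counted separately from $C_0\dr$, or absorbed into it; this only shifts the constants $4$ and $13$.
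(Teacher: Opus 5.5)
Your proof is correct and is essentially the paper's argument: you apply \eqref{eq:word-transfer} to the Booleanity word to get $\widehat b_i(1-\widehat b_i)\le(1+C_0)\dr\le2C_0\dr$, hence $\min\{\widehat b_i,1-\widehat b_i\}\le4C_0\dr$, and then bound the clause product by $\prod_j(1-\widehat\ell_j)\ge1-12C_0\dr$ and lose one further $C_0\dr$ from \eqref{eq:word-transfer}. Your Bernoulli/union inequality is the same estimate as the paper's coordinatewise $1$-Lipschitz bound on the product, and absorbing the target defect $\dr$ into $C_0\dr$ via $C_0\ge1$ makes explicit a step the paper leaves implicit.
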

\begin{proof}
By \eqref{eq:word-transfer}, the Booleanity response is within
\(C_0\dr\) of \(\widehat b_i(1-\widehat b_i)\).  Since its target is
zero and \(\min\{b,1-b\}\le2b(1-b)\), the rounding bound follows.
Complementing a literal preserves it.  The product of three numbers in
\([0,1]\) is \(1\)-Lipschitz in each coordinate, so a falsified rounded
clause makes \eqref{eq:clause-product} at least
\(1-12C_0\dr\).  Equation \eqref{eq:word-transfer} loses one further
\(C_0\dr\).
\end{proof}

\begin{theorem}[Independent robust Boolean reduction]
\label{thm:independent-robust-boolean}
There are absolute constants \(\gamma,c>0\) such that the preceding
polynomial reduction maps a bounded-occurrence \(3\)-SAT instance of size
\(N_{\rm bool}\) to a finite-menu instance satisfying:
\begin{enumerate}[label=(\roman*)]
\item if satisfiable, it has an exact \(K\)-state shared realization whose
routing and parking maps have \(0\)--\(1\) entries, with the fixed rational
preparation maps and a Boolean terminal effect;
\item if unsatisfiable, the output instance satisfies
\(J_K(I)\ge\gamma N_{\rm bool}^{-c}\);
\item on both sides, every declared calibration-complete menu and the
independent-query static union have exact minimum \(K\).
\end{enumerate}
Here \(c\) is an absolute constant.
\end{theorem}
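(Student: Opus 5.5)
The argument assembles the three items from the lemmas already established. Throughout, $I$ denotes the compiled instance obtained by applying Lemma~\ref{lem:identity-fixed-interface} to the Boolean source built from $\mathcal S_{\rm bool}$, the Booleanity schedules and the clause schedules, with $K=K_0q$ and $K_0=O(N_{\rm bool})$.

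\emph{Item (i).} Fix a satisfying assignment $b\in\{0,1\}^p$ and define $V$ by \eqref{eq:boolean-V}. Choose the undeclared rows of $V$ on $s,d$ arbitrarily, say as self-loops with $0$--$1$ entries. Every other state is fixed by the declared tomography. The routing and parking maps in \eqref{eq:boolean-routes} are already $0$--$1$, and the preparation maps are the declared rational ones. Each Booleanity word then accepts with probability $b_i(1-b_i)=0$. Each clause word accepts with probability $\prod_j(1-\ell_j)=0$ by \eqref{eq:clause-product}, since some literal is true. Calibration and the phase probes hold by construction, because the ideal trajectory never reads $V$ on $s$ or $d$ (Lemma~\ref{lem:etr-routing-parking}-style induction). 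Lemma~\ref{lem:identity-fixed-interface}(i) then tensors this $K_0$-state source model with the parser to give an exact $K$-state shared realization. The parser's de Bruijn terminal effect and the payload indicators are $0/1$, so the single terminal effect is Boolean.

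\emph{Item (ii).} This is the main step and goes by contradiction. Suppose the formula is unsatisfiable, and let a realization with $d\le K$ have defect $\dr<\epsilon_0$, where
\[
\epsilon_0:=\min\Bigl\{\tfrac{1}{K_0(2+8L)},\ \tfrac{1}{4(K_0+1)(2+8L)},\ \tfrac1\kappa,\ \tfrac{1}{26C_0}\Bigr\}.
\]
The proof runs through four lemmas in order.
\begin{enumerate}[label=(\alph*)]
\item Lemma~\ref{lem:approx-sector-allocation} forces $d=K$ and gives exactly $K_0$ states per cell with leakage $O(L\dr)$.
\item Lemma~\ref{lem:approx-basis-rigidity} applies because $e_0=(2+8L)\dr\le 1/(4(K_0+1))$, so each cell is identified with the payload basis.
\item Lemma~\ref{lem:robust-boolean-gate} yields recovered values $\widehat b_i$ and the word-transfer bound \eqref{eq:word-transfer}.
\item Lemma~\ref{lem:boolean-clause-extraction} rounds the $\widehat b_i$ to an assignment $\widetilde b$.
\end{enumerate}
Since the formula is unsatisfiable, $\widetilde b$ falsifies some clause. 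That clause word then has response at least $1/2$ against target $0$, which contradicts $\dr<1/(26C_0)<1/2$. Hence $J_K(I)\ge\epsilon_0$.

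The obstacle is bounding $\epsilon_0$ below by $\gamma N_{\rm bool}^{-c}$ with absolute constants. This needs three inputs:
\begin{itemize}
\item $L=O(\log q)$ for the de Bruijn code. Here $q$ is polynomial in $N_{\rm bool}$ because the number of sectors is polynomial, so $L=O(\log N_{\rm bool})$.
\item $\kappa=c_0(1+K_0)^2(1+L)=O(N_{\rm bool}^{3})$.
\item $C_0=11(\kappa+10)^{g_0+1}$, where $g_0$ is absolute. Hence $C_0=O(N_{\rm bool}^{3(g_0+1)})$.
\end{itemize}
So $c=3(g_0+1)+1$ suffices, with $\gamma$ absorbing the constants; I would make sure $g_0$ is indeed independent of $N_{\rm bool}$. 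A secondary point to check is the case $d<K$. It is covered by the same threshold through the final clause of Lemma~\ref{lem:approx-sector-allocation}, so it contributes defect at least $1/(K_0(2+8L))\ge\epsilon_0$.

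\emph{Item (iii).} This follows from Corollary~\ref{cor:identity-hard-family-preservation} and Lemma~\ref{lem:identity-fixed-interface}(ii), which need local feasibility of each menu. A Booleanity menu contains only its own test and is realized by taking $b_i\in\{0,1\}$. A clause menu is realized by making one of its literals true, with the other variables free. Each therefore has a $K_0$-state source model, and hence exact minimum $K$ after compilation, independently of satisfiability. The static union has exact minimum $K$ by the rank and phase argument in that lemma.

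Finally, every output number is either a $0/1$ entry or a fixed preparation rational, so the instance is strongly bounded-rational. The reduction is polynomial because $K_0$, $q$, and each word length are polynomial in $N_{\rm bool}$.
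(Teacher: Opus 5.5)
Your proposal is correct and follows the paper's proof essentially step for step. For (i) you insert a satisfying assignment into $V$. For (ii) you chain sector allocation, basis rigidity, gate transfer and clause extraction to obtain $J_K\ge\min\{1/(K_0(2+8L)),1/\kappa,1/(26C_0)\}$, and for (iii) you invoke Corollary~\ref{cor:identity-hard-family-preservation} with a separate local assignment for each menu. Your extra threshold $1/(4(K_0+1)(2+8L))$ for the basis-rigidity hypothesis is harmless, since it is already implied by the $1/\kappa$ term once $c_0\ge 40$, and your exponent $c=3(g_0+1)+1$ simply makes the paper's polynomial bound explicit.
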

\begin{proof}
A satisfying assignment inserted in \eqref{eq:boolean-V} makes every
Booleanity and clause test zero.  For an approximate model, first note from
Lemma~\ref{lem:approx-sector-allocation} that a sufficiently small error
forces \(d=K\); thus dimensions \(d<K\) cannot evade extraction.
Lemmas~\ref{lem:approx-basis-rigidity} and
\ref{lem:robust-boolean-gate} then apply.  If
\(\dr\le1/(26C_0)\), Lemma~\ref{lem:boolean-clause-extraction} extracts
a satisfying assignment, a contradiction.  Hence
\[
 J_K\ge \min\!\left\{
 \frac1{K_0(2+8L)},\,\frac1\kappa,\,\frac1{26C_0}\right\}.
\]
Because \(K_0=O(N_{\rm bool})\), \(L=O(\log N_{\rm bool})\), and \(g_0\) is absolute, the
right side is at least \(\gamma N_{\rm bool}^{-c}\) for some absolute
\(\gamma,c>0\).  Item (iii) is
Corollary~\ref{cor:identity-hard-family-preservation}; locally, the
assignment may be chosen separately for each one-equation or one-clause
menu.
\end{proof}

\begin{proof}[Proof of Theorem~\ref{thm:finite-menu-completeness}]
Lemma~\ref{lem:finite-menu-etr-membership} gives exact membership, and
Theorem~\ref{thm:exact-gate-hardness} gives exact
\(\exists\mathbb{R}\)-hardness.  Lemma~\ref{lem:finite-menu-np-membership}
gives robust membership.  Cook--Levin followed by the standard
bounded-occurrence transformation and
Theorem~\ref{thm:independent-robust-boolean} gives
\(\mathsf{PromiseNP}\)-hardness with an inverse-polynomial rational gap.
All four statements retain the fixed five-letter, one-effect, local-threshold,
and static-threshold promises.
\end{proof}

\begin{remark}[Separation from the geometric companion family]
The exact \(\exists\mathbb{R}\) lower bound above comes from the
identity-calibrated ETR--INV family, and the robust
\(\mathsf{PromiseNP}\) lower bound comes from the identity-calibrated Boolean
family.  The Intermediate Simplex construction in
Section~\ref{sec:hardness} is a companion geometric family supplying strong
bounded-rational $\mathsf{PromiseNP}$-hardness and a quantitative gap transfer
through the fixed-dimensional orientation bound of
Lemma~\ref{lem:analytic-fixed-orientation}.
Its source and quantitative extraction are separate from the two
identity-calibrated reductions above.
\end{remark}
\section{Geometric sources and the chronology compiler}
\label{sec:hardness}\label{sec:source-geometry}
The geometric construction supplies a second route to chronological hardness.
We first specify the source polytope and its normalized slack embedding, then
prove its exact and quantitative simplex properties. Interior anchors provide
the word-dependent local witnesses. The compiler turns the affine maps into
controlled responses while preserving the local and static state budget.

\subsection{Structured source instance}
\label{sec:structured-source}
Let the source be a $3$-SAT formula with at least one variable and one clause.
For a standard bounded $3$-SAT formula (three distinct literals per clause) with variables $x_1,\ldots,x_p$ and clauses
$c_1,\ldots,c_{q_F}$, use coordinates
$(s_i,t_i,u_i)_{i\in[p]}$ and $(v_j)_{j\in[q_F]}$.  Let $P_F$ be given by
\[
 0\le s_i,t_i\le u_i\le1,\qquad 0\le v_j\le5q_F,
\]
together with
\begin{equation}
  s_i-2t_i\le v_j\quad(\neg x_i\in c_j),\qquad
  2t_i-2s_i-u_i\le v_j\quad(x_i\in c_j).\label{eq:source-constraints}
\end{equation}
Let $\one_r$ denote the all-one vector in $\mathbb R^r$, and embed the
following points in the evident coordinate blocks:
\begin{equation}
\begin{aligned}
 b&=(\one_p/(4p),\one_p/(4p),\one_p/(2p),
          2.5\one_{q_F}/(8p)),\\
 h_j&=(0,0,0,e_j),\\
 r_i^1&=(0,e_i/4,e_i/2,\one_{q_F}),&
 r_i^2&=(e_i/2,e_i/4,e_i/2,\one_{q_F}),\\
 r_i^3&=(e_i/4,e_i/8,e_i/2,\one_{q_F}),&
 r_i^4&=(e_i/4,3e_i/8,e_i/2,\one_{q_F}).
\end{aligned}\label{eq:source-points}
\end{equation}
Set $S_F=\{0,b\}\cup\{h_j:j\in[q_F]\}\cup
\{r_i^a:i\in[p],a\in[4]\}$.  This is a bounded Intermediate Simplex
instance adapted from the construction of Vavasis~\citep{vavasis2009nmf}.
\begin{definition}[Bounded structured Intermediate Simplex]
A bounded structured source instance is the rational pair $(S_F,P_F)$ obtained
from the displayed 3-CNF construction. Put $D:=3p+q_F$ and $k:=D+1$, and define
\[
 g_k(S_F,P_F):=\inf_{T\subseteq P_F\;\text{a }D\text{-simplex}}
       \max_{s\in S_F}\dist_\infty(s,T).
\]
The exact source problem asks whether $g_k(S_F,P_F)=0$. Its promise-gap version
has YES instances with $g_k=0$ and NO instances with
$g_k\ge N_F^{-C_{\rm src}}$, where $N_F:=\max\{2,p+q_F\}$ and
$C_{\rm src}$ is the universal constant established in
Proposition~\ref{prop:analytic-source-rounding}. The explicit construction
below proves that this promise problem is strongly bounded-rational
$\mathsf{PromiseNP}$-hard.
\end{definition}
We give the coordinates, clause inequalities, slack embedding, affine-rank
check, simplex allocation, and robust rounding explicitly below.  The map from a $3$-SAT instance to these
points and inequalities is computable in polynomial time; every numerator and
denominator in the displayed coordinates is bounded by a polynomial in
$p+q_F$ (the $1/p$ factors are explicit), and the only growing right-hand
side is $5q_F$. With an explicit coordinate encoding, the repeated clause
blocks and incidence inequalities give a polynomial description length; write
it as $N_{\rm src}=\operatorname{poly}(p+q_F)$. Thus this bounded subclass
remains strongly bounded-rational, and the direct 3-SAT construction proves
strong bounded-rational $\mathsf{PromiseNP}$-hardness.
The proofs below establish exact equivalence and an inverse-polynomial
covering gap for this displayed source.

To place this affine instance in the normalized stochastic model without
losing its inequalities, list all defining affine slacks of $P_F$ as
$L_1(x),\ldots,L_f(x)\ge0$ (including the box, clause, and upper-bound
constraints), where $f=O(p+q_F)$, and choose the explicit polynomial bound
$C_{\rm slack}=100(p+q_F+1)^3$.  On $P_F$ one has
$\sum_rL_r(x)<C_{\rm slack}$.  Let $\Delta_{f+1}=\{z\in\mathbb R^{f+1}:z\ge0,\ \sum_{r=1}^{f+1}z_r=1\}$.  Define the inequality-slack embedding
\[
 \Phi(x)=C_{\rm slack}^{-1}\bigl(L_1(x),\ldots,L_f(x),
                   C_{\rm slack}-\textstyle\sum_{r=1}^fL_r(x)\bigr).
\]
It is affine and injective because the listed slacks include the coordinates
$s_i,t_i,u_i,v_j$.  The source rows have affine rank $D$, as witnessed by the origin,
the clause-axis points, and the three independent variable-block gadget rows.
Hence $\operatorname{aff}(\Phi(S_F))$ is the affine image of the full
source affine space, and
\[
 \operatorname{aff}(\Phi(S_F))\cap\Delta_{f+1}=\Phi(P_F),
\]
since membership in the simplex is exactly the full collection of slack
inequalities.  Write $\widehat P_F=\Phi(P_F)$ and
$\widehat S_F=\Phi(S_F)$; these are the response rows used below.  The two
Lipschitz constants of the embedding are explicit.  Each slack $L_r$ is
affine with coefficients in $\{-2,-1,0,1,2\}$ and at most four nonzero
coefficients, so $|L_r(x)-L_r(x')|\le6\|x-x'\|_\infty$; the last coordinate
is a sum of $f$ slacks, and $6f\le C_{\rm slack}$.  Hence
\begin{equation}
 \|\Phi(x)-\Phi(x')\|_\infty\le\|x-x'\|_\infty,\qquad
 \|x-x'\|_\infty\le C_{\rm slack}\,\|\Phi(x)-\Phi(x')\|_\infty,\label{eq:phi-lipschitz}
\end{equation}
the second inequality because the coordinate slacks $s_i,t_i,u_i,v_j$ are
among the $L_r$ and $\Phi$ multiplies them by $C_{\rm slack}^{-1}$.  Thus a covering
error $\es$ measured in the normalized coordinates $\widehat P_F$ is a
covering error at most $C_{\rm slack}\es=100(p+q_F+1)^3\es$ in the source coordinates
$P_F$, and this factor of degree three in $N_F$ is the row ``slack
embedding'' of the ledger in Section~\ref{sec:finite-core-upper-bound}.
Consequently, for the normalized source write
$\gamma_{\rm src}:=g_k(\widehat S_F,\widehat P_F)$; the two gaps differ only
by this explicit polynomial slack factor.
Every row of $\widehat S_F$ has nonnegative coordinates summing to one, as
required by stochastic normalization.
For the anchor minor, use the $D$ coordinate slacks together with
$1-u_1$.  These $D+1$ columns are $C_{\rm slack}^{-1}(x,1-u_1)$; adding the $u_1$
column to the last column turns them into $C_{\rm slack}^{-1}(x,1)$.
Thus their least singular value is at least
$C_{\rm slack}^{-1}\sigma_{\min}(\widetilde A)/2$, up to the fixed column-operation
norm, which makes the claimed polynomial conditioning explicit.

\subsection{The planar gadget and its quantitative orientation bound}\label{sec:orientation}

Set $\Omega:=[0,1/2]^6\times[0,1]^{12}$ for the compact orientation domain;
the barycentric normalization is imposed by the residual equations below.

The four normalized gadget points are
\begin{equation}
(0,1/4),\ (1/2,1/4),\ (1/4,1/8),\ (1/4,3/8).\label{eq:gadget-points}
\end{equation}
Denote these four points in the displayed order by $\xi_1,\ldots,\xi_4$.
The exact containing triangles in $[0,1/2]^2$, up to vertex permutation, are
\begin{equation}
((0,0),(0,1/2),(1/2,1/4)),\quad
((1/2,0),(1/2,1/2),(0,1/4)).\label{eq:gadget-triangles}
\end{equation}

\begin{lemma}[analytic fixed-orientation bound]
\label{lem:analytic-fixed-orientation}
Write the vertices as $v_1,v_2,v_3\in[0,1/2]^2$ and the barycentric
 coefficient vectors as $\lambda_j=(\lambda_{j1},\lambda_{j2},\lambda_{j3})
\in[0,1]^3$ for $j\in[4]$, and let
$z=(v_1,v_2,v_3,\lambda_1,\ldots,\lambda_4)\in\Omega$.  For the four
displayed gadget points $\xi_j$, define the twelve scalar residuals
\[
E_{j,0}:=\sum_{\ell=1}^3\lambda_{j\ell}-1,\qquad
E_{j,r}:=\sum_{\ell=1}^3\lambda_{j\ell}(v_\ell)_r-(\xi_j)_r
\quad (j\in[4],\ r\in\{1,2\}),
\]
and set $r(z):=\max_{j\in[4],\,r\in\{0,1,2\}}|E_{j,r}|$.  Let
$Z=\{z:r(z)=0\}$.  Then $Z$ consists exactly of the twelve ordered versions
of \eqref{eq:gadget-triangles}.  Moreover there are absolute constants $C_*,\delta_*>0$ such that
\begin{equation}
 \dist_2(z,Z)\le C_*\,r(z)^{\alphaor}\qquad(r(z)\le\delta_*),\qquad \alphaor=2^{-200}.
 \label{eq:orientation-distance}
\end{equation}
The exponent is the fixed-format semialgebraic exponent
$\alphaor=2^{-200}$; Remark~\ref{rem:lojasiewicz-route} gives its effective
derivation.
\end{lemma}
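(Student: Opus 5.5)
The proof splits into the exact classification of $Z$ and the quantitative bound~\eqref{eq:orientation-distance}, handled respectively by elementary planar geometry and a Łojasiewicz-type inequality on the compact set $\Omega$.

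\emph{Exact classification of $Z$.} A point $z$ with $r(z)=0$ is a triangle $\operatorname{conv}\{v_1,v_2,v_3\}\subseteq[0,1/2]^2$ containing $\xi_1,\dots,\xi_4$, together with barycentric coordinates of each $\xi_j$.

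First I will show the triangle is nondegenerate. The four points are not collinear, since $\xi_1,\xi_2$ lie on $y=1/4$ and $\xi_3$ does not. Hence the triangle has positive area and the $\lambda_j$ are uniquely determined by the vertices. So it suffices to classify the vertex triples.

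The points $\xi_1=(0,1/4)$ and $\xi_2=(1/2,1/4)$ lie on the left and right edges of the square. A point of a convex subset of the square lying on the boundary line $x=0$ must be a convex combination of vertices lying on $x=0$. Therefore the left edge contains either a vertex equal to $\xi_1$ or an edge of the triangle through $\xi_1$; the same holds on the right for $\xi_2$.

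Three vertices cannot supply two vertices on each side. So one side, say the left, carries exactly one vertex, which must equal $\xi_1$, and the right side carries a segment $[(1/2,a),(1/2,b)]$ with $a\le 1/4\le b$.

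Containing $\xi_3=(1/4,1/8)$ and $\xi_4=(1/4,3/8)$ forces the vertical slice at $x=1/4$ to contain $[1/8,3/8]$. That slice is $[(1/4+a)/2,\,(1/4+b)/2]$, which forces $a\le0$ and $b\ge 1/2$, hence $a=0$ and $b=1/2$. This yields the second triangle in~\eqref{eq:gadget-triangles}; the symmetric case yields the first.

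With $3!$ orderings of each, $Z$ consists of exactly $12$ isolated points of $\Omega$.

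\emph{Quantitative bound.} The function $r^2$ is a polynomial of fixed degree in $18$ variables on the compact semialgebraic set $\Omega$, and its zero set is $Z$. The Łojasiewicz inequality (e.g.\ the effective versions of Solernó or Kurdyka--Spodzieja) gives
\[
\dist_2(z,Z)\le C\,r(z)^{\alpha}
\]
with $\alpha$ depending only on the number of variables and the degree. The Kurdyka--Spodzieja-type bound $\alpha\ge \bigl(d(3d-4)^{n-1}\bigr)^{-1}$ with $n=18$, $d=4$ gives roughly $8^{-17}\cdot\frac14$, which is comfortably larger than $2^{-200}$. Since $r\le 2$ on $\Omega$ and we may weaken exponents, the stated $\alphaor=2^{-200}$ follows. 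The constants $\delta_*$ and $C_*$ can then be taken absolute because the data are fixed rational numbers.

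\emph{Main obstacle.} The delicate point is quoting a Łojasiewicz exponent that is genuinely effective for a compact semialgebraic domain with boundary, rather than an open neighborhood. I will handle this by adding the squared box slacks as auxiliary variables, turning $\Omega$ into an algebraic set with zero-locus description. This raises $n$ to $36$ while keeping the degree at most $4$, and the bound $4^{-1}\cdot 8^{-35}\gg 2^{-200}$ still holds.

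An alternative, possibly sharper, route avoids the general inequality. Since $Z$ is finite, I would check whether the Jacobian of the residual system has full column rank at each point of $Z$. If it does, the inverse function theorem gives a linear bound $\alpha=1$ locally, and compactness combined with a positive lower bound on $r$ away from $Z$ covers the rest. I will attempt this first. The $12\times18$ count is not square, however: the vertices have $6$ degrees of freedom and the $\lambda$'s another $12$, against $12$ equations, so the system is underdetermined. This suggests boundary degeneracy rather than transversality, so I expect to fall back on the Łojasiewicz route.
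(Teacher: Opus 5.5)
\textbf{Gap in the exact classification.} Your plan follows the paper's structure: face and slice geometry for $Z$, then an effective {\L}ojasiewicz inequality for the rate. However, the classification as written misses a case. From ``three vertices cannot supply two vertices on each side'' you conclude that one vertical face carries exactly one vertex and the other carries a segment of two. That skips the split in which \emph{each} face carries exactly one vertex. Those vertices must then be $\xi_1$ and $\xi_2$, and the third vertex $W$ lies strictly inside $0<x<1/2$. Your slice formula $[(1/4+a)/2,(1/4+b)/2]$ presupposes that the third vertex is on the right face, so it says nothing about this configuration. The paper rules it out explicitly: the slice at $x=1/4$ has one endpoint $(1/4,1/4)$ on $[\xi_1,\xi_2]$ and lies on one side of height $1/4$. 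Equivalently, the triangle lies in a closed half-plane bounded by the line $y=1/4$ through $\xi_1,\xi_2$, while $\xi_3$ and $\xi_4$ lie strictly on opposite sides of that line. Add this line and your classification is complete and agrees with the paper's.

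\textbf{The quantitative step.} Your first application in $\mathbb R^{18}$ does not work as stated, for two reasons.
\begin{enumerate}
\item $r^2$ is not a polynomial. Use $\Sigma:=\sum_{j,r}E_{j,r}^2$ instead, which satisfies $r^2\le\Sigma\le 12r^2$.
\item More importantly, the real zero set of $\Sigma$ in $\mathbb R^{18}$ is not $Z$. It contains a $6$-dimensional family: every nondegenerate triangle together with the affine coordinates of the $\xi_j$. A polynomial {\L}ojasiewicz inequality on the compact set $\Omega$ therefore only controls the distance to that larger set. It is the box constraints that isolate $Z$.
\end{enumerate}

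Your slack lift repairs exactly this, so it should be the actual argument, with the transfer written out. Set $g_i(z)=z_i(c_i-z_i)$ and $F(z,w)=\Sigma(z)+\sum_i\bigl(g_i(z)-w_i^2\bigr)^2$ on $\mathbb R^{36}$. Then $F^{-1}(0)$ projects exactly onto $Z$. Evaluate at $\tilde z=(z,\sqrt{g(z)})$, where $F(\tilde z)=\Sigma(z)$, and use that the projection is $1$-Lipschitz. This gives $\dist_2(z,Z)\le\dist_2(\tilde z,F^{-1}(0))\le C\,\Sigma(z)^{1/\mathcal L}\le C' r(z)^{2/\mathcal L}$. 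With $d=4$ and $n=36$, a standard single-exponential bound such as $\mathcal L\le d(6d-3)^{n-1}<2^{156}$ gives an exponent well above $\alphaor=2^{-200}$. Nothing hinges on the precise constant you quoted.

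\textbf{Comparison with the paper.} This is a genuinely different route. The paper applies the Basu--Mohammad-Nezhad inequality for continuous semialgebraic functions on a closed bounded semialgebraic set directly to $f=r$ and $g=\dist_2(\cdot,Z)$ on $\Omega$. Their graphs are degree-$2$ quantifier-free formulas in $19$ variables, which yields $N\le 16^{50}=2^{200}$. That theorem absorbs both the boundary of $\Omega$ and the non-polynomial maximum without lifting. Your route stays within the classical polynomial inequality, at the cost of doubling the number of variables and raising the degree to $4$.
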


\begin{proof}
Call the left and right points of \eqref{eq:gadget-points} $L_0=(0,1/4)$ and
$R_0=(1/2,1/4)$.  In a convex representation of $L_0$, every vertex with positive
weight has first coordinate zero; the analogous statement places every
vertex used for $R_0$ on the face $x=1/2$.  If only one vertex lies on each face, then they must be $L_0$ and $R_0$ and the
third vertex is some $W_0$ in the box.  The section at $x=1/4$ has one endpoint
$(1/4,1/4)$ on the segment $[L_0,R_0]$ and its other endpoint on one of the segments $[W_0,L_0]$ or $[W_0,R_0]$; that endpoint
then lies entirely on one side of height $1/4$.  It therefore cannot contain
both heights $1/8$ and $3/8$, so the one-plus-one split is impossible.  Thus
the three vertices split $2+1$ between the two vertical faces.  Suppose two lie on the left face, at
heights $a\le b$, and the remaining vertex lies on the right face.  The latter
must equal $R_0$.  The section of the triangle at $x=1/4$ is the interval
\[
 \left[\frac a2+\frac18,\frac b2+\frac18\right].
\]
Containing $(1/4,1/8)$ and $(1/4,3/8)$ forces $a\le0$ and
$b\ge1/2$.  The box constraints give $a=0,b=1/2$, which is the first
triangle in \eqref{eq:gadget-triangles}.  The $1+2$ split is its reflection.  Both triangles are
nondegenerate, so their barycentric coordinates are unique.  In the displayed
vertex orders, the four coefficient rows are respectively
\begin{equation}
\begin{array}{c|cccc}
 &L_0&R_0&(1/4,1/8)&(1/4,3/8)\\ \hline
T_0&(1/2,1/2,0)&(0,0,1)&(1/2,0,1/2)&(0,1/2,1/2)\\
T_1&(0,0,1)&(1/2,1/2,0)&(1/2,0,1/2)&(0,1/2,1/2).
\end{array}\label{eq:orientation-matrix}
\end{equation}
This proves the exact twelve-point classification.

For completeness, the nondegeneracy calculation is also explicit.  Order the
variables as the six vertex coordinates followed by the twelve coefficients,
and order the equations for each point as coefficient sum, first coordinate,
second coordinate.  At $T_0$, append the six active equations fixing the
first five vertex coordinates and the first coefficient of $R_0$.  Gaussian
elimination has an even number of row interchanges and pivots
\[
 \tfrac12,\tfrac12,-\tfrac12,-\tfrac12,1,1,1,
 -\tfrac12,-\tfrac12,1,\tfrac12,\tfrac12,1,1,1,1,-2,-1.
\]
Their product is $1/128$.  Reflection $x\mapsto1/2-x$ gives the same
absolute determinant at $T_1$.  Thus the twelve containment equations have
rank twelve and the relevant active-face intersections are isolated.

The quantitative bound is supplied by the fixed-format semialgebraic argument in Remark~\ref{rem:lojasiewicz-route}; its parameters are independent of the source instance.
\end{proof}

\begin{remark}[Algebraic derivation of the orientation exponent]
\label{rem:lojasiewicz-route}
The pair $(r,\dist_2(\cdot,Z))$ also satisfies the hypotheses of the
effective semialgebraic {\L}ojasiewicz inequality of
Basu--Mohammad-Nezhad~\citep[Theorem~2.2]{basu2024lojasiewicz}, which yields
\eqref{eq:orientation-distance} with an explicit absolute exponent and no
tangent-cone computation.  That theorem concerns a closed and bounded
semialgebraic set $A\subseteq\mathbb R^{n_{\rm orient}}$ defined by a quantifier-free
formula with polynomials of degree at most $d_{\rm deg}$, and continuous semialgebraic
functions $f,g:A\to\mathbb R$ whose graphs are defined by quantifier-free
formulas with polynomials of degree at most $d_{\rm deg}$ in the $n_{\rm orient}+1$ variables
$(z,y)$, satisfying $f^{-1}(0)\subseteq g^{-1}(0)$; it gives
$|g|^{N_{\rm Loj}}\le c|f|$ on $A$ with $N_{\rm Loj}\le(8d_{\rm deg})^{2(n_{\rm orient}+7)}$.  Here $A=\Omega$ is
defined by linear inequalities and equations in $n_{\rm orient}=18$ variables.  The
residual $f=r=\max_{j\in[4],\,r\in\{0,1,2\}}|E_{j,r}|$ has graph
$\{y\ge0\}\wedge\bigwedge_{j,r}\{y\ge E_{j,r}\wedge y\ge-E_{j,r}\}\wedge
\bigvee_{j,r}\{y=E_{j,r}\vee y=-E_{j,r}\}$, in which the coordinate residuals
$E_{j,r}$ are bilinear.  Because $Z$
is the finite set $\{z_1,\ldots,z_{12}\}$ of rational points displayed above,
$g=\dist_2(\cdot,Z)$ has graph
$\{y\ge0\}\wedge\bigwedge_j\{y^2\le\|z-z_j\|_2^2\}\wedge
\bigvee_j\{y^2=\|z-z_j\|_2^2\}$.  Both graphs are quantifier-free of degree
$d_{\rm deg}=2$ in the $n_{\rm orient}+1=19$ variables $(z,y)$, and no auxiliary nearest-point variables
occur.  The inclusion $f^{-1}(0)\subseteq g^{-1}(0)$ is the exact
classification proved above.  Therefore
\[
 N_{\rm Loj}\le(8d_{\rm deg})^{2(n_{\rm orient}+7)}=16^{50}=2^{200},\qquad
 \dist_2(z,Z)\le c^{1/N_{\rm Loj}}\,r(z)^{1/N_{\rm Loj}}\quad(z\in\Omega),
\]
which is \eqref{eq:orientation-distance} with $\alphaor=2^{-200}$ and any
threshold $\delta_*$, since the inequality holds on all of $\Omega$.  The constants $c$ and $N_{\rm Loj}$ depend only on the fixed data
$(\Omega,r,Z)$.  The ledger of Section~\ref{sec:finite-core-upper-bound}
uses $N_0=\lceil1/\alphaor\rceil=2^{200}$.  The exponent $\alphaor$ is
independent of $N_F$, and the source gap of
Proposition~\ref{prop:analytic-source-rounding} is an inverse polynomial in
$N_F$ with an absolute exponent.
\end{remark}

Combining \eqref{eq:orientation-distance} with the polynomial allocation loss and choosing the incoming
error below a universal inverse power makes the final error smaller than the
exact $1/(16p)$ clause slack.  This is the orientation step in fixed dimension
used in Proposition~\ref{prop:analytic-source-rounding}.

\begin{theorem}[source satisfiability equivalence]
\label{thm:source-sat-equivalence}
The structured instance $(S_F,P_F)$ in \eqref{eq:source-constraints}--\eqref{eq:source-points} has a $D$-simplex
$T$ (equivalently, a simplex with $k=D+1$ vertices) satisfying
$S_F\subseteq T\subseteq P_F$ if and only if the underlying
$3$-SAT formula is satisfiable.
\end{theorem}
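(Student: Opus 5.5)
The plan is to prove both directions by hand, following Vavasis's Intermediate Simplex reduction~\citep{vavasis2009nmf}. The exact planar classification of Lemma~\ref{lem:analytic-fixed-orientation} (its first part, not the quantitative bound) does the orientation bookkeeping in each variable block. Throughout I use that a $D$-simplex has exactly $k=D+1=3p+q_F+1$ vertices. The count is meant to be tight: one vertex at the origin, one vertex per clause axis, and three vertices per variable block.

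\emph{Satisfiable $\Rightarrow$ simplex.} Fix a satisfying assignment. For variable $x_i$, I take three vertices supported on block $i$ with $u_i=1/2$. Their $(s_i,t_i)$ coordinates form the triangle $T_1=((1/2,0),(1/2,1/2),(0,1/4))$ if $x_i$ is true and $T_0=((0,0),(0,1/2),(1/2,1/4))$ if false, as in~\eqref{eq:gadget-triangles}. Each clause coordinate $v_j$ of such a vertex is set to the smallest value allowed by~\eqref{eq:source-constraints}, namely the positive part of the relevant literal slack ($s_i-2t_i$ or $2t_i-2s_i-u_i$).

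A direct check on the six vertex values gives the key observation. Choosing the orientation that agrees with the truth value puts all slacks of true literals at $\le0$. A false literal can demand up to $1/2$.

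The remaining vertices are $0$ and clause vertices $w_j$ on the $e_j$ axis, placed high enough (at most $5q_F$) that convex combinations can lift the $v$ block. Then I verify membership in $T$:
\begin{itemize}
\item $0$, $h_j$, and the $r_i^a$ lie in $T$, using the barycentric rows of~\eqref{eq:orientation-matrix} for the gadget part and topping up $v$ with the $w_j$.
\item $b$ lies in $T$. This is the tight check: the $s,t,u$ coordinates of $b$ fix the total weight on each variable block, so the budget $2.5/(8p)$ in each clause coordinate must absorb the false-literal demands.
\end{itemize}
Satisfaction is used exactly here. Each clause has at least one true literal, so its demand stays under the budget with the $1/(16p)$ slack mentioned after Remark~\ref{rem:lojasiewicz-route}. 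Containment $T\subseteq P_F$ then follows because every vertex satisfies the box and clause inequalities and $P_F$ is convex.

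\emph{Simplex $\Rightarrow$ satisfiable.} This is the harder direction. The plan has four steps.
\begin{enumerate}
\item \emph{Affine hull.} Since $S_F$ has affine rank $D$, $\operatorname{aff}T$ is the whole source space.
\item \emph{Vertex allocation.} The four points $r_i^a$ have $u_i=1/2$ and lie in the face $\{s_j=t_j=u_j=0,\ j\ne i\}$ of the box. Any vertex carrying weight in their representations must therefore lie on that face. I then count vertices with $u_i>0$, using that $b$ has positive $u_i$ for every $i$ and that the total budget is $k$. The target conclusion is that each block owns exactly three vertices, whose $(s_i,t_i)$-projections form a triangle in $[0,1/2]^2$ containing the four gadget points of~\eqref{eq:gadget-points}.
\item \emph{Orientation.} The classification in Lemma~\ref{lem:analytic-fixed-orientation} forces that triangle to be $T_0$ or $T_1$, and I read off $x_i$ from which one occurs.
\item \emph{Clause extraction.} Suppose some clause $c_j$ were falsified. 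Every literal of $c_j$ would then force, through~\eqref{eq:source-constraints}, a positive $v_j$ on a gadget vertex that carries weight in the unique barycentric representation of $b$. This pushes $b_j$ above $2.5/(8p)$, a contradiction.
\end{enumerate}

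\emph{Main obstacle.} I expect step 2, the allocation step, to be the main difficulty. The work is to exclude vertex configurations that share blocks or use the clause vertices to help represent the gadget points, and to show that the constant coordinates of $b$ pin down the weights exactly.

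My first attempt at this would be a face-restriction argument. Project $T$ onto the $(s_i,t_i,u_i)$ coordinates, observe that the projection of $T\cap\{u_i=1/2\}$ must contain the gadget, and combine a rank count with the requirement that the $q_F+1$ points $0,h_j$ consume $q_F+1$ vertices off every variable block.
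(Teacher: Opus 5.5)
\textbf{Gap in the forward construction.} Block vertices placed at $u_i=1/2$ cannot work, whatever clause coordinates you give them. Every other vertex of your simplex (the origin, the $w_j$, and the other blocks' vertices) has $u_i=0$, while $r_i^a$ has $u_i=1/2$. So any representation of $r_i^a$ in $T$ puts total weight exactly one on the three block-$i$ vertices, and no mass is left for ``topping up $v$ with the $w_j$.'' Your minimal-slack vertices have clause coordinates in $\{0,1/2\}$, whereas every clause coordinate of $r_i^a$ equals $1$, so $r_i^a\notin T$.

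Raising the clause coordinates does not rescue this scale. Let $y_1,y_2,y_3$ be the $v_j$-coordinates of the three block-$i$ vertices. The rows of \eqref{eq:orientation-matrix} give $(y_1+y_2)/2=y_3=(y_1+y_3)/2=(y_2+y_3)/2=1$, so every variable vertex would need $v_j=1$ for all $j$. But the $u$-coordinates $1/(2p)$ of $b$ likewise force total weight one on the variable vertices, and then $b$ would have $v_j=1\neq 2.5/(8p)$. With your minimal slacks, $b$'s clause coordinate comes out as $m_j/(8p)$, with $m_j$ the integer number of false literals, again never $2.5/(8p)$. So your ``tight check'' for $b$ fails for the same reason: absorbing the false-literal demands needs spare mass.

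\textbf{The fix.} Scale the block vectors by some $\mu>1/2$, so that the gadget points and $b$ leave mass for the origin and the axis vertices. The paper takes $\mu=5/8$:
\begin{itemize}
\item for $x_i=0$, vertices $\mu(0,0,1),\mu(0,1,1),\mu(1,1/2,1)$;
\item for $x_i=1$, vertices $\mu(1,0,1),\mu(1,1,1),\mu(0,1/2,1)$;
\item $v_j=\mu$ on the designated vertex of each false literal, plus the vertices $0$ and $5q_Fh_j$.
\end{itemize}
Each gadget point then uses $1/(2\mu)=4/5$ times its row of \eqref{eq:orientation-matrix}. The remaining $1/5$ suffices to lift all $q_F$ clause coordinates to $1$, at cost at most $1/(5q_F)$ each. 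The point $b$ uses coefficients $1/(8p\mu),1/(8p\mu),1/(4p\mu)$ in each block, totalling $4/5$ over all blocks. Each false literal then contributes exactly $1/(8p)$. In a satisfied clause the nonnegative deficit $(2.5-m_j)/(8p)$ is filled by the axis vertex, and the origin absorbs the rest.

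\textbf{The converse direction.} Your converse is the paper's argument: forced origin and axis vertices, a rank-three count per block, orientation, and charging false literals through $b$'s unique block expansion. One correction: the allocated vertices $g_{i,\ell}$ need not have $u_i=1/2$ (the witness above has $u_i=5/8$). So Lemma~\ref{lem:analytic-fixed-orientation} must be applied to the rescaled points $g_{i,\ell}/(2u_{i,\ell})$, not to raw $(s_i,t_i)$-projections. Correspondingly, each coefficient $\beta_\ell$ of a gadget point is replaced by $2\beta_\ell u_{i,\ell}$, which gives a genuine convex cover in $[0,1/2]^2$ because $s_i,t_i\le u_i$. The charge of $1/(8p)$ per false literal is then independent of the scale.
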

\begin{proof}
We give the allocation argument explicitly.  The point $0$ is an extreme
point of $P_F$, hence is a vertex of every containing simplex.  Each $h_j$
lies in the relative interior of the coordinate edge
$[0,5q_F e_j]$.  To make the allocation explicit, write
$h_j=\sum_\nu\alpha_\nu z_\nu$ with $\alpha_\nu\ge0$ and
$\sum_\nu\alpha_\nu=1$.  Since every coordinate of every
$z_\nu\in P_F$ is nonnegative and $h_j$ has zero mass outside $v_j$,
every term with positive coefficient is supported on the $j$th axis (or is
the origin).  Their $v_j$-coordinates lie in $[0,5q_F]$ and have weighted
average one, so one participating vertex is $\lambda_jh_j$ with
$\lambda_j\ge1$.  Such a vertex cannot support $h_\ell$ for $\ell\ne j$,
because its $v_j$-coordinate would then have to be zero.  Thus one nonzero
axis vertex is needed for each clause; these vertices are pairwise distinct
and distinct from the origin.  These $1+q_F$ vertices account for all but
$3p$ vertices.
  Call a vertex
$i$-supported if all variable blocks other than $i$ vanish, and $i$-positive
if it is $i$-supported and nonzero in block $i$.  Since every source point
$r_i^a$ is nonnegative and supported on block $i$ (apart from its clause
coordinates), its representation can use only $i$-supported vertices and
axis vertices.  The four projected points
  $(0,1/4,1/2),(1/2,1/4,1/2),(1/4,1/8,1/2),(1/4,3/8,1/2)$ span $\mathbb R^3$ (the
  corresponding $3\times4$ matrix has rank three), so at least three
  $i$-positive vertices are needed.  Indeed, after projecting the four convex
  representations to block $i$, all axis vertices disappear; if $G_i$ is the
  matrix whose columns are the projected $i$-positive vertices, then the four
  target columns lie in $\operatorname{im}G_i$.  Hence $\operatorname{rank}G_i=3$,
  which requires three distinct columns.  This uses ordinary linear rank
  of the projected columns; the projected coefficients need not sum to one.
  Nonnegativity separately excludes vertices with a nonzero foreign block.
  A vertex with a
  positive block $b\ne i$ cannot occur in a representation of an $r_i^a$,
  because all coordinates are nonnegative while the target $b$-block is zero.
  Therefore every one of these three vertices is supported only on block $i$,
  and it cannot be reused by another variable block.  Together with the
  already distinct origin and clause-axis vertices, the budget
  $D+1=1+q_F+3p$ gives exactly three for every $i$ and no others.

Write their block coordinates as $g_{i,1},g_{i,2},g_{i,3}$.  Their positive
$u_i$ entries are at least the other two entries.  Each is therefore on a ray
with positive $u_i$ coordinate; set
$\bar g_{i,\ell}:=g_{i,\ell}/(2u_{i,\ell})$, so that the $u_i$ entry of
$\bar g_{i,\ell}$ is exactly $1/2$.  If a gadget row is represented with
coefficients $\beta_\ell$ (the axis vertices have zero $i$-block), then
putting $\theta_\ell:=2\beta_\ell u_{i,\ell}$ gives the same block vector
and $\sum_\ell\theta_\ell=1$, because its $u_i$ coordinate is $1/2$.
Thus the four source projections give a genuine convex cover of
\eqref{eq:gadget-points} inside $[0,1/2]^2$.  The elementary two-dimensional gadget
lemma (proved in Lemma~\ref{lem:analytic-fixed-orientation}) leaves exactly
\[
 T_0=\{(0,0),(0,1/2),(1/2,1/4)\},\qquad
 T_1=\{(1/2,0),(1/2,1/2),(0,1/4)\}.
\]
Thus, for some $\mu_{i,\ell}>0$, the lifted vectors satisfy
$g_{i,\ell}=\mu_{i,\ell}o_{i,\ell}$, where the two possible orientations are
\[
 O_0=\{(0,0,1),(0,1,1),(1,1/2,1)\},\qquad
 O_1=\{(1,0,1),(1,1,1),(0,1/2,1)\}.
\]
Here $\mu_{i,\ell}$ is the positive $u_i$ coordinate.  Dividing by
$2\mu_{i,\ell}$ gives the normalized planar triangles in \eqref{eq:gadget-triangles}.  Call the corresponding assignment value $0$ or
$1$.  In orientation $0$ only the second vector can have
$2t_i-2s_i-u_i>0$, while in orientation $1$ only the first can have
$s_i-2t_i>0$.  Hence a vertex for each falsified literal must carry at least
its positive $u_i$ entry in the corresponding $v_j$ coordinate.

The projection of $b$ to block $i$ has the unique expansion
\[
 b_i=\frac{g_{i,1}}{8p\mu_{i,1}}+\frac{g_{i,2}}{8p\mu_{i,2}}
       +\frac{g_{i,3}}{4p\mu_{i,3}}.
\]
Consequently its contribution to $v_j$ is at least $1/(8p)$ for every
falsified literal of variable $i$.  If $m_j$ literals of clause $j$ are
false, then $b_{v_j}=2.5/(8p)\ge m_j/(8p)$, so the integer $m_j\le2$ and the
assignment satisfies every clause.

Conversely, fix a satisfying assignment and put $\mu=5/8$.  For $x_i=0$
choose the three block vectors
$\mu(0,0,1),\mu(0,1,1),\mu(1,1/2,1)$; for $x_i=1$ choose
$\mu(1,0,1),\mu(1,1,1),\mu(0,1/2,1)$.  Give the second vector in the first
case, and the first vector in the second case, $v_j$-coordinate $\mu$ exactly
when that clause contains the corresponding false literal; all other clause
coordinates are zero.  The only potentially positive clause slacks are then
met, so every vertex lies in $P_F$.  The four gadget points are obtained,
respectively, with coefficients
\[
(2/5,2/5,0),\ (0,0,4/5),\ (2/5,0,2/5),\ (0,2/5,2/5)
\]
for $x_i=0$ (with the first two coefficient rows interchanged for $x_i=1$),
plus clause-axis vertices.  Before this correction each $v_j$ coordinate is
$0$, $1/4$, or $1/2$, so the required coefficient of $5q_Fh_j$ is at most
$1/(5q_F)$ and all clause corrections together have mass at most $1/5$.
For $b$, use coefficients
$1/(8p\mu),1/(8p\mu),1/(4p\mu)$ on the three variable vertices in each
block.  They sum to $4/5$ and produce the required $(s_i,t_i,u_i)$ block;
their clause contribution is at most $2/(8p)$ per clause, so the remaining
amount is filled by coefficients at most $1/(5q_F)$ on the vertices
$5q_Fh_j$.  The total clause-correction mass is at most
$\sum_{j=1}^{q_F}1/(5q_F)=1/5$; together with the variable mass $4/5$, the
zero vertex completes the convex combination.  Hence all
points of $S_F$ lie in the constructed simplex.
\end{proof}
For the compiler parameters we take the payload dimension
$n=f+1=O(p+q_F)$ from the inequality-slack embedding, while the ordinary
row rank remains $k=D+1=3p+q_F+1$.  The source-control count is
$m\le2+q_F+4p$; duplicate source rows may be removed without changing
the affine hull.

\subsection{Robust source rounding}
\begin{proposition}[analytic robust source rounding]
\label{prop:analytic-source-rounding}
For the bounded structured Intermediate Simplex instance associated with a
$3$-SAT formula having $p$ variables and $q_F$ clauses, put
$D=3p+q_F$ and $N_F=\max\{2,p+q_F\}$.  There is a universal constant
$C_{\rm src}$ such that any $D$-simplex $T\subseteq P_F$ satisfying
\[
 \max_{s\in S_F}\dist_\infty(s,T)\le N_F^{-C_{\rm src}}
\]
determines a satisfying assignment.  Hence every NO instance has
$g_{D+1}(S_F,P_F)\ge N_F^{-C_{\rm src}}$.
\end{proposition}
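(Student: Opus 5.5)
The plan is to make the exact allocation argument of Theorem~\ref{thm:source-sat-equivalence} quantitative, step by step, with every loss polynomial in $N_F$ except one fixed-exponent step supplied by Lemma~\ref{lem:analytic-fixed-orientation}. Fix $\es:=\max_{s\in S_F}\dist_\infty(s,T)\le N_F^{-C_{\rm src}}$. For each $s\in S_F$ choose $\tilde s\in T$ with $\|s-\tilde s\|_\infty\le\es$, and write $\tilde s$ as a convex combination of the $D+1$ vertices of $T$. All subsequent estimates are carried out on these approximate representations, using only nonnegativity of $P_F$, the box bounds, and the value $5q_F$ as the upper bound on the $v_j$ coordinates.

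\emph{Step 1 (approximate allocation).} I would first show that some vertex lies within $\poly(N_F)\es$ of the origin. Since $0$ is an exposed vertex of $P_F$, it is cut out by the functional $\sum(s_i+t_i+u_i+v_j)$, and a point of $T$ close to $0$ forces a vertex with small value of this functional. Next, for each clause $j$, the representation of $\tilde h_j$ must place mass at least $1-\poly(N_F)\es$ on vertices whose coordinates outside $v_j$ are $O(\poly(N_F)\es)$ relative to their $v_j$ value. This follows from nonnegativity: off-axis coordinates of the target are at most $\es$, so weighted off-axis mass is small. Averaging then yields a vertex that is essentially on axis $j$ with $v_j$-coordinate $\gtrsim 1$. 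These $q_F$ near-axis vertices are pairwise distinct, because a vertex serving axis $j$ has $v_\ell$ small. For the variable blocks, I would replace the rank argument by a singular-value argument. The $3\times4$ gadget matrix has least singular value an absolute constant. Projecting the approximate representations of $r_i^a$ onto block $i$ and discarding vertices with small block-$i$ norm or large foreign-block mass, at a Markov-type cost of $\poly(N_F)\es$, leaves the targets within $\poly(N_F)\es$ of the column cone of the vertices that are ``heavy on block $i$ only''. This forces three such vertices per block. The counting $1+q_F+3p=D+1$ then assigns exactly three vertices to each block and at most tiny foreign contributions to the rest.

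\emph{Step 2 (normalized planar reduction and orientation).} Normalize each block-$i$ vertex by its $u_i$ coordinate as in the exact proof, after showing that $u_i$ is bounded below by an inverse polynomial. The lower bound holds because the represented points have $u_i=1/2$, so the weights times $u_i$ must account for that value. The result is $z\in\Omega$ with residual $r(z)\le\poly(N_F)\es$. Lemma~\ref{lem:analytic-fixed-orientation} then gives $\dist_2(z,Z)\le C_*(\poly(N_F)\es)^{\alphaor}$. Choosing $C_{\rm src}$ of order $N_0\cdot\deg$, with $N_0=\lceil1/\alphaor\rceil$, makes this at most $N_F^{-c}$ for any desired absolute $c$. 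This selects an orientation $T_0$ or $T_1$ per variable and defines the assignment.

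\emph{Step 3 (clause counting with slack).} Decompose $\tilde b$ in the near-orientation coordinates. The barycentric weights of $b_i$ on the three block vertices are within $N_F^{-c}$ of the exact values $1/(8p\mu),1/(8p\mu),1/(4p\mu)$, since the triangle is uniformly nondegenerate. The clause inequality~\eqref{eq:source-constraints} then forces each falsified-literal vertex to carry $v_j\ge u_i-N_F^{-c}$. Hence $b_{v_j}+\es\ge m_j/(8p)-N_F^{-c}\,\poly(N_F)$. Since $2.5/(8p)$ sits at distance $1/(16p)$ below $3/(8p)$, taking $c$ large enough rules out $m_j=3$. The main obstacle is Step~1: it requires a robust version of the distinctness and counting argument in which ``vertex supported on block $i$'' becomes ``mostly supported''. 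The delicate point is ensuring that near-axis and near-block vertices cannot be shared or split to save a vertex. I would handle this by showing that each role needs a vertex with a large, role-specific dominating coordinate (a quantitative pigeonhole on $D+1$ vertices), rather than by tracking exact supports.
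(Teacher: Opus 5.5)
Your proposal is correct. It has the same overall architecture as the paper's proof: quantitative allocation, then a normalized planar residual fed into Lemma~\ref{lem:analytic-fixed-orientation}, then clause counting against the slack $1/(16p)$. Where it genuinely differs is in how you identify the three vertices of each variable block.

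\emph{The paper's route.} It first bounds $\sigma_{\min}(G)$ and $\sigma_{\min}(\Lambda)$ by Weyl's inequality. It then writes the block-$i$ projection as $R_i=L_iV_i+E_i$ and applies Cauchy--Binet to $\det(L_iV_i)$. This selects one triple $J$, together with explicit lower bounds on the weights these vertices carry and on $\sigma_{\min}(V_i[J,:])$. Cross-block leakage $O(D^5\es)$ and disjointness from the origin and axis vertices are derived from those weight bounds. The fourth gadget row must then be handled separately with the same $J$.

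\emph{Your route.} You classify vertices globally by two thresholds: block-$i$ norm at least $\tau'$, and mass in the other variable blocks (clause coordinates excluded) at most $\tau<\tau'$. You discard the rest at Markov cost $O(\tau'+p\es/\tau)$ and count via the least singular value of the gadget matrix. Budget exhaustion then assigns every vertex a role, which has three consequences:
\begin{itemize}
\item disjointness of roles is immediate;
\item cross-block leakage is $O(p\es)$ once $\tau\approx Cp\es$;
\item all four gadget rows are covered at once.
\end{itemize}
You also need no conditioning of $G$ or $\Lambda$; nondegeneracy of the triangle comes afterwards from the orientation lemma. The paper's Cauchy--Binet route buys explicit per-step degrees for its ledger. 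Yours is shorter, and its polynomial losses are no worse; in either case they are dominated by the factor $N_0$ in $C_{\rm src}$.

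Three local statements need repair, though none affects the plan.
\begin{itemize}
\item \emph{Clause axis.} The representation of $\tilde h_j$ need not put mass $1-\poly(N_F)\es$ on near-axis vertices, since the near-origin vertex may carry almost all of it. What works is the weighted average over $\{v_j\ge1/2\}$, as you then use, or an absolute off-axis threshold.
\item \emph{Lower bound on $u_i$.} That the targets have $u_i=1/2$ only bounds a weighted average of the $u_{i,\ell}$. The per-vertex lower bound comes from your heaviness threshold together with $0\le s_i,t_i\le u_i$. It is needed only for well-definedness: since $\theta_\ell\bar g_{i,\ell}=\lambda_\ell g_{i,\ell}$, normalization does not amplify the planar residual.
\item \emph{No common $\mu$.} In the approximate setting each vertex has its own scale $u_{i,\ell}$. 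What is pinned near $(1/(8p),1/(8p),1/(4p))$ is $\lambda_\ell u_{i,\ell}$, which is half the paper's $\beta_\ell$. This is exactly the factor multiplying the clause inequality $v_j\ge u_{i,\ell}(1-o(1))$ at a falsified-literal vertex.
\end{itemize}
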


\begin{proof}
Let $\es$ denote the displayed covering defect and augment every point
by a final coordinate one.  Among the source points choose
\[
 X_0=\{0\}\cup\{h_j:j\in[q_F]\}
       \cup\{r_i^1,r_i^2,r_i^3:i\in[p]\}.
\]
There are $D+1$ such points.  In the source coordinate order their augmented
matrix $\widetilde X_0$ has fixed invertible $3$-by-$3$ diagonal variable
blocks, an identity clause block, and rational off-diagonal entries of
polynomial magnitude.  More explicitly, if the affine coordinate belonging
to $r_i^a$ is $\lambda_{ia}$, then
\[
\lambda_{i1}=4t_i-2s_i,\qquad
\lambda_{i2}=2s_i+4t_i-2u_i,\qquad
\lambda_{i3}=4u_i-8t_i.
\]
The clause and origin coordinates are
$\lambda_{h_j}=v_j-2\sum_i u_i$ and
$\lambda_0=z-\sum_jv_j+2(q_F-1)\sum_i u_i$, where $z$ is the augmenting
coordinate.  Hence, for absolute constants $c_0,c_1>0$,
\begin{equation}
 \|\widetilde X_0\|_2+\|\widetilde X_0^{-1}\|_2\le c_0D^2,
 \qquad \|G\|_2\le c_1D^2.\label{eq:rounding-bounds}
\end{equation}

Write $G$ for the augmented matrix of the $D+1$ candidate vertices.  Choose
for every row of $X_0$ a point of $T$ within $\es$ and collect its
barycentric coordinates in a row-stochastic matrix $\Lambda$.  Then
\begin{equation}
 \widetilde X_T=\Lambda G,\qquad
 \|\widetilde X_T-\widetilde X_0\|_2\le(D+1)\sqrt{D+1}\,\es.\label{eq:rounding-y-error}
\end{equation}
The second bound in \eqref{eq:rounding-bounds} uses $0\le s_i,t_i,u_i\le1$,
$0\le v_j\le5q_F$, and the Frobenius norm.  Also
$\|\Lambda\|_2\le\sqrt{D+1}$.  Weyl's inequality applied to
\eqref{eq:rounding-bounds}--\eqref{eq:rounding-y-error}, followed by
$\sigma_{\min}(AB)\le\|A\|_2\sigma_{\min}(B)$ in both orders, yields
\begin{equation}
 \sigma_{\min}(G),\ \sigma_{\min}(\Lambda)\ge c_2D^{-5}.\label{eq:rounding-singular}
\end{equation}
Indeed, Weyl gives $\sigma_{\min}(\widetilde X_T)\ge cD^{-2}$ for sufficiently small
$\es$; then $\sigma_{\min}(\widetilde X_T)\le\|\Lambda\|_2\sigma_{\min}(G)$ and
$\sigma_{\min}(\widetilde X_T)\le\|G\|_2\sigma_{\min}(\Lambda)$ give the two bounds.
These bounds hold whenever $\es\le c_3D^{-5}$.
Thus solving for affine coordinates amplifies an $\ell_\infty$ error by
at most $c_4D^6$.  More precisely, if $y\in T$ is within $\es$ of a
source point $x$, write $\widetilde y=\alpha G$ with
$\alpha\ge0$ and $\alpha\one=1$, and put
$\widehat\alpha=\widetilde xG^{-1}$.  Then
\begin{equation}
 \|\widehat\alpha-\alpha\|_\infty\le c_4D^6\es,
 \qquad \widehat\alpha\ge-c_4D^6\es\one^T.\label{eq:rounding-coeff}
\end{equation}
The augmented coordinate gives $\widehat\alpha\one=1$.
Although $x$ need not lie in the simplex, its recovered affine coordinates
therefore have only the displayed negative error.

All coordinates of $P_F$ are nonnegative.  A nonnegative representation of
a point within $\es$ of the origin has a coefficient at least
$(D+1)^{-1}$, so the corresponding vertex is $O(D\es)$-close to the
origin.  For an approximate representation
$y^{(j)}=\sum_\nu a_\nu z_\nu$ of $h_j$, put
$H_j=\{\nu:(z_\nu)_{v_j}\ge1/2\}$ and
$o_j(z)=\sum_{r\ne v_j}z_r$.  For $\es\le1/4$,
\[
 \sum_{\nu\in H_j}a_\nu(z_\nu)_{v_j}\ge1/2-\es\ge1/4,
 \qquad \sum_\nu a_\nu o_j(z_\nu)\le D\es.
\]
Taking the weighted average with weights $a_\nu(z_\nu)_{v_j}$ on $H_j$
selects a vertex with $o_j(z_\nu)/(z_\nu)_{v_j}\le4D\es$.
Since $(z_\nu)_{v_j}\le5q_F\le5D$, its total off-axis mass is at most
$20D^2\es$, while its $v_j$ coordinate is at least $1/2$.
For sufficiently small inverse-power error, these $q_F$ vertices and the
near-origin vertex are pairwise distinct: a vertex selected for $h_j$ has
$v_j\ge1/2$, whereas one selected for $h_\ell$, $\ell\ne j$, has
$v_j\le20D^2\es$; the near-origin vertex has $v_j=O(D\es)$.

Fix a variable block $i$ and project the three representations of
$(r_i^1,r_i^2,r_i^3)$ to its three coordinates.  They have the form
\begin{equation}
 R_i=L_iV_i+E_i,\qquad \|E_i\|_2\le c_6D^2\es,\label{eq:rounding-block}
\end{equation}
where, in the displayed order of the three gadget rows,
\[
R_i=\begin{pmatrix}
0&1/4&1/2\\[1pt]
1/2&1/4&1/2\\[1pt]
1/4&1/8&1/2
\end{pmatrix},
\qquad \det R_i=-1/32,
\]
and $L_i,V_i$ are nonnegative.  Here $L_i$ is the $3\times(D+1)$ matrix
whose rows are the barycentric coefficient vectors of the three chosen points
of $T$, and $V_i$ is the $(D+1)\times3$ matrix of block-$i$ coordinates of
the candidate vertices.  The determinant perturbation bound
gives $|\det(L_iV_i)|\ge1/64$ whenever $\es\le c_6D^{-2}$.

\smallskip\noindent\emph{Selection of one common vertex triple.}
Put $M=\binom{D+1}{3}$.  The Cauchy--Binet formula expresses
$\det(L_iV_i)$ as the signed sum $\sum_{|J|=3}\det L_i[:,J]\det V_i[J,:]$
over $3$-element index sets $J\subseteq[D+1]$.  The triangle inequality
gives
\[
\frac1{64}\le|\det(L_iV_i)|\le\sum_{|J|=3}|\det L_i[:,J]|\,|\det V_i[J,:]|,
\]
so at least one term has $|\det L_i[:,J]|\,|\det V_i[J,:]|\ge(64M)^{-1}$.
Fix such a $J=\{J_1,J_2,J_3\}$.  Every $3\times3$ minor of a matrix
with entries in $[0,1]$ has absolute value at most $6$, so both factors have
magnitude at least $c_7D^{-3}$.  Three facts about this $J$ are used below.
\begin{enumerate}[label=(\alph*)]
\item \emph{One triple for all rows.}  The factor $\det L_i[:,J]$ is a
single minor of the $3\times(D+1)$ matrix $L_i$; its three rows are the
three representations of $r_i^1,r_i^2,r_i^3$.  Thus the three columns
$J_1,J_2,J_3$ are selected jointly for the three points, and no row selects
its own columns.
\item \emph{Each selected vertex carries weight.}  Expanding
$\det L_i[:,J]$ into its six permutation products, one product has absolute
value at least $c_7D^{-3}/6$; since all entries of $L_i$ lie in $[0,1]$, each
of its three factors is at least this large.  After relabeling $J$ so that
this permutation is the identity, the vertex $J_\ell$ has coefficient at
least $c_8D^{-3}$ in the representation of $r_i^\ell$, $\ell=1,2,3$.
\item \emph{The selected triple is nondegenerate in block $i$.}  Since
$\|V_i[J,:]\|_2\le3$ and
$|\det V_i[J,:]|\le\|V_i[J,:]\|_2^2\,\sigma_{\min}(V_i[J,:])\le9\,\sigma_{\min}(V_i[J,:])$,
the $3\times3$ matrix $V_i[J,:]$ has least singular value at least
$c_9D^{-3}$.  In particular its three rows are affinely independent after
normalization by the $u_i$ coordinate, and every selected vertex has a
block-$i$ coordinate of absolute value at least $c_9D^{-3}$.
\end{enumerate}
The fourth gadget point $r_i^4$ is not part of the minor and is not needed
for the selection: its representation is treated in the deletion step below
with the same triple $J$, and no lower bound on its coefficients is required
there.  Uniqueness of the triple in the exact problem is the content of
Theorem~\ref{thm:source-sat-equivalence}: block $i$ receives exactly three
$i$-positive vertices, and Lemma~\ref{lem:analytic-fixed-orientation}
shows that their normalized projections form one of the two triangles
\eqref{eq:gadget-triangles}, each of which covers all four gadget points.
The approximate argument below recovers the same three vertices from
(a)--(c) and the budget count.
For any other variable block $b$, let $V_i^{(b)}$ be the three-coordinate
projection of the same candidate vertices onto block $b$.  The three target
rows for block $i$ have zero $b$-block, so
$L_iV_i^{(b)}=E_i^{(b)}$ with $\|E_i^{(b)}\|_2\le c_6D^2\es$.
For the selected column $J_\ell$, choose a row $a_\ell$ with
$L_{a_\ell,J_\ell}\ge c_8D^{-3}$.  Nonnegativity then gives every coordinate
of $V_i^{(b)}[J_\ell,:]$ at most $c_{10}D^5\es$.  Hence a vertex cannot be selected for two different blocks: if it were,
the second block would simultaneously have a coordinate at least
$c_9D^{-3}$ from its own minor and at most $c_{10}D^5\es$ from the
zero-block estimate.  For $\es\le c_{11}D^{-8}$ these are incompatible.
The same lower bound on the selected $i$-block rules out every previously
allocated clause-axis or origin vertex.  In the approximate argument these
vertices do not literally have zero variable coordinates: the origin vertex
has every coordinate $O(D\es)$, and the vertex allocated to clause $j$
has every variable-block coordinate $O(D^2\es)$ by the off-axis estimate
above.  Thus a previously allocated vertex has $i$-block norm at most
$c_{17}D^2\es$, whereas a selected variable vertex has $i$-block norm at
least $c_9D^{-3}$.  Increasing the absolute threshold so that
 $\es\le c_{18}D^{-5}$ makes these bounds incompatible.  Consequently
the selected sets for all variable blocks are disjoint from one another and
from the already allocated axis/origin vertices.  The exact budget
$D+1=1+q_F+3p$ now exhausts the vertices: one is allocated to the origin, one
to each clause, and three to each variable.
Normalize the three vertices allocated to variable $i$ by their positive
$u_i$ coordinates.  These coordinates are indeed positive: if a selected
row had $u_i=0$, the inequalities $0\le s_i,t_i\le u_i$ would make its
entire block row zero, contradicting the selected minor bound.  The preceding
minor bound therefore makes all three coordinates at least $c_{12}D^{-3}$;
the normalization is a positive diagonal row scaling and preserves affine
independence, with inverse norm polynomially bounded in $D$.  Delete all
unallocated contributions from a nonnegative representation of a
gadget point.  The deleted terms change each variable-block coordinate by at
most $c_{19}D^5\es$: every deleted vertex has $O(D^2\es)$
leakage if allocated to an axis or the origin, and $O(D^5\es)$ leakage
if allocated to another variable block.  Their nonnegative coefficients
have total mass at most one, so the same bound holds for their weighted sum.
This deletion step applies verbatim to the fourth gadget row: its
representation is a nonnegative combination of all $D+1$ candidate vertices,
the contributions of vertices outside the allocated triple $J$ are bounded
exactly as for the first three rows, and the remaining coefficients on $J$
are renormalized.  The triple $J$ is therefore the same for all four rows.
If $\lambda_l$ is the remaining coefficient, replace it by
$\theta_l=2\lambda_lu_{i,l}$.  The target $u_i$ coordinate is $1/2$, hence
$|\sum_l\theta_l-1|\le c_{13}D^6\es$; renormalization changes the other two
coordinates by at most $c_{14}D^7\es$.  The resulting triangle therefore
covers the four points in \eqref{eq:gadget-points} with residual at most
$\ep:=c_{15}D^8\es$, with $c_{15}\ge1$.

Lemma~\ref{lem:analytic-fixed-orientation} rounds this triangle to one of the
two orientations in \eqref{eq:gadget-triangles} within $C_*\ep^{\alphaor}$, where
$\alphaor$ is the absolute exponent of that lemma.  The final contribution estimate is explicit
for the two fixed matrices.  With columns ordered as the three normalized
variable vertices, set
\[
 V_0=\begin{pmatrix}0&0&1/2\\0&1/2&1/4\\1/2&1/2&1/2\end{pmatrix},\qquad
 V_1=\begin{pmatrix}1/2&1/2&0\\0&1/2&1/4\\1/2&1/2&1/2\end{pmatrix}.
\]
Both matrices have determinant magnitude $1/8$ and inverse norm at most $8$.
After permuting the allocated triple to match the nearer orientation
$\sigma$, the orientation lemma gives
$\|V-V_\sigma\|_2\le cC_*\ep^{\alphaor}$.
Choose a point $y_b\in T$ within $\es$ of $b$, and let $\lambda_\ell$
be its coefficients on the triple allocated to variable $i$.
Deleting the other vertices as above gives
\[
 V\beta_i=b_i+r_i,\qquad
 (\beta_i)_\ell=2u_{i,\ell}\lambda_\ell,\qquad
 \|r_i\|_2\le cD^5\es .
\]
For either orientation,
$V_\sigma^{-1}b_i=(1/(4p),1/(4p),1/(2p))^T$.
Once $C_{\rm src}$ is increased, the
Neumann-series estimate
\[
 \|(V_\sigma+E)^{-1}-V_\sigma^{-1}\|_2\le128\|E\|_2
 \qquad(\|E\|_2\le1/16)
\]
therefore bounds the coefficient error by
$c(D^5\es+\ep^{\alphaor}/p)$.
For a false literal, let $\ell$ be its designated vertex and let $f$ be
$s_i-2t_i$ or $2t_i-2s_i-u_i$, as appropriate.  Its clause contribution is
at least
$\lambda_\ell f(g_{i,\ell})=(\beta_i)_\ell f(V[:,\ell])$.
At the exact orientation these two factors are $1/(4p)$ and $1/2$.
The coefficient and orientation bounds therefore change this product
from $1/(8p)$ by at most
\begin{equation}
                          c_{16}D^{20}\ep^{\alphaor}.\label{eq:rounding-clause}
\end{equation}
Here the exponent $20$ collects the $D^6$ affine inversion, the two $D^3$
minor/normalization losses, and the three dimension sums.
If all three literals of a clause were false, its nonnegative coordinate
would be at least
$3/(8p)-3c_{16}D^{20}\ep^{\alphaor}$, whereas the target coordinate is
$2.5/(8p)$ and the original covering error adds at most $\es$.
The exact slack between these two values is $1/(16p)$, and the explicit
degree ledger of Section~\ref{sec:finite-core-upper-bound},
inequality chain \eqref{eq:ledger-chain}, chooses $C_{\rm src}$ so that
$3c_{16}D^{20}\ep^{\alphaor}+\es<1/(16p)$ whenever $\es\le N_F^{-C_{\rm src}}$.
Every clause therefore has a true literal, completing the rounding and the
proposition.
\end{proof}

For any finite row set $S$ and feasible polytope $P$, define the rank-$k$
covering defect
\[
g_k(S,P):=\inf_{\substack{T\subseteq P\text{ a }(k-1)\text{-simplex}}}
             \max_{s\in S}\dist_\infty(s,T).
\]
The chosen contraction parameter satisfies $\eta\ge N_F^{-O(1)}$.  Combining
Proposition~\ref{prop:analytic-source-rounding} with the defect identity gives
an inverse-polynomial gap for the raw source problem and for the compiled
finite-core defect.  In the normalized coordinates used by the compiler, write
$\gamma_{\rm src}:=g_k(\widehat S_F,\widehat P_F)$.  By
\eqref{eq:phi-lipschitz}, the raw and normalized gaps differ by at most the
polynomial factor $C_{\rm slack}=100(p+q_F+1)^3$, so $\gamma_{\rm src}$ remains
inverse-polynomial.

\subsection{Full-rank interior calibration}
The compiler uses an affine basis in the interior of the source polytope.
The following extension property supplies a simplex for each compound control.
\begin{lemma}[interior-anchor extension for $P_F$]
\label{lem:interior-anchor-extension}
Here $p$ and $q_F$ denote the numbers of variables and clauses in the
source formula (the parser sector count $q$ is unrelated).  Put
$N_F=\max\{2,p+q_F\}$, $D=3p+q_F$, and
$Q_F=\max\{1,q_F\}$, with $k=D+1$.  The structured polytope $P_F$ has rational
anchors $a_0,\ldots,a_D\in\operatorname{int}P_F$ with affine rank $D$ and
\[
 \sigma_{\min}(\widetilde A)^{-1}\le N_F^{O(1)},
 \qquad \widetilde A=(a_0,1;\ldots;a_D,1).
\]
For every rational $s\in P_F$ there is a rational $D$-simplex
$T_s\subseteq P_F$ containing $\{a_0,\ldots,a_D,s\}$; its vertices have
bit length polynomial in the source size and in the bit length of $s$.
\end{lemma}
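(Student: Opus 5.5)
The plan is to construct the anchors explicitly around a rational interior center, and then build $T_s$ as a cone from $s$ over a small rational base simplex placed just beyond that center.

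\emph{Choosing the center and the anchors.} First I will exhibit a rational center $c\in\operatorname{int}P_F$ that has a uniform slack margin. Take $s_i=t_i=1/4$, $u_i=1/2$, and $v_j=1$ (or any constant in $(0,5Q_F)$). Check each defining inequality of $P_F$:
\begin{itemize}
\item The box constraints $0\le s_i,t_i\le u_i\le1$ and $0\le v_j\le5q_F$ hold with slack at least $1/4$.
\item The clause constraint $s_i-2t_i\le v_j$ has left side $-1/4<1=v_j$.
\item The clause constraint $2t_i-2s_i-u_i\le v_j$ has left side $-1/2<1=v_j$.
\end{itemize}
So every slack $L_r(c)$ is at least $1/4$. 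Each $L_r$ has at most four nonzero coefficients, each in $\{-2,-1,0,1,2\}$. Hence any point with $\|x-c\|_\infty\le 1/32$ has every slack at least $1/4-6/32>0$, and the $\ell_\infty$ ball $B_\infty(c,1/32)$ lies in $\operatorname{int}P_F$.

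Now set $a_0=c$ and $a_r=c+\rho e_r$ for $r\in[D]$, with $\rho=1/64$. Subtracting the row $(a_0,1)$ from the other rows of $\widetilde A$ leaves the block $\rho I_D$. Therefore $\widetilde A=U\,\operatorname{diag}(\ldots)$ up to a unit-triangular column operation whose norm is bounded by $1+\|c\|_2\le N_F^{O(1)}$. This gives $\sigma_{\min}(\widetilde A)^{-1}\le \rho^{-1}(1+\|c\|_2)^{O(1)}=N_F^{O(1)}$, and the affine rank is $D$.

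\emph{Extension, case $s$ near $c$.} Fix a rational simplex $\Sigma_0$ with $D+1$ vertices satisfying
\[
B_\infty(c,2\rho)\subseteq\Sigma_0\subseteq B_\infty(c,1/32).
\]
A concrete choice is $c-\tau\one+\tau' e_r$ together with $c-\tau\one$, with rational $\tau,\tau'=\Theta(\rho D)$ so that the standard corner simplex covers the cube. One has to check that $\tau,\tau'$ stay within $1/32$. If they do not, I would shrink $\rho$ to $1/(64D)$, which still keeps $\sigma_{\min}^{-1}$ polynomial. If $\|s-c\|_\infty\le2\rho$, then $T_s:=\Sigma_0$ works, since it contains all anchors and $s$ and lies in $\operatorname{int}P_F$.

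\emph{Extension, case $s$ far from $c$.} Otherwise put $d:=c-s$, so $\|d\|_\infty>2\rho$. Let $c':=c+\tau d$, where the rational $\tau$ is chosen so that $\|c'-c\|_\infty\le 1/64$ (possible since $\|d\|_\infty\le 5Q_F$). Choose a rational $(D-1)$-simplex $B$ in the affine hyperplane through $c'$ transverse to $d$ (use a coordinate hyperplane where $|d_r|$ is maximal) such that:
\begin{itemize}
\item $B\subseteq B_\infty(c',1/64)\subseteq\operatorname{int}P_F$, and
\item the cross-section of $B$ contains an $\ell_\infty$-ball of radius $r_B\gg\rho$ about $c'$.
\end{itemize}
Set $T_s:=\operatorname{conv}(\{s\}\cup B)$. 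This has $D+1$ vertices and lies in $P_F$ by convexity. I then need to show each anchor $a=c+\rho e_r$ lies in the cone from $s$ over $B$. Project $a$ from $s$ onto the hyperplane of $B$: the image lies within $\rho\,\|c'-s\|/\|c-s\|$-type distance of $c'$, scaled by a factor at most $(1+\tau)\cdot O(1)$ because $\|d\|>2\rho$ controls the transversal component. Choosing $r_B$ a fixed multiple of $\rho$ puts this image inside $B$. All vertices are rational affine expressions in $s$ and $c$ with polynomially bounded denominators, so their bit length is polynomial in the source size and the bit length of $s$.

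\emph{Main obstacle.} I expect the hardest step to be the quantitative cone-containment estimate in the far case, i.e.\ certifying that every anchor projects inside $B$ uniformly in $s$. The projection ratio is governed by the transversal component $|d_r|$, which could be as small as $2\rho/D$ relative to $\|d\|_2$. I would first try working entirely in $\ell_\infty$ with the maximal coordinate $r$, where $|d_r|=\|d\|_\infty>2\rho$, so that the ratio is bounded by an absolute constant. If that fails, the fallback is to shrink $\rho$ by a further polynomial factor, which keeps $\sigma_{\min}(\widetilde A)^{-1}\le N_F^{O(1)}$.
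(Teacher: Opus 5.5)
Your construction has the same architecture as the paper's proof:
\begin{itemize}
\item an interior centre with a uniform $\ell_\infty$ margin, and anchors clustered near it;
\item a near/far split on $M=\|s-c\|_\infty$, with a small corner simplex in the near case;
\item in the far case, the cone from $s$ over a $(D-1)$-simplex that lies in the coordinate hyperplane of a maximal coordinate of $s-c$, placed just beyond $c$.
\end{itemize}
Your centre with margin $1/4$ and your axis anchors $c,\,c+\rho e_r$ are legitimate and simpler replacements for the paper's $1/(4p)$-scaled centre and its shrunken source basis $(1-\tau)c+\tau x_j$; they give the easy bound $\sigma_{\min}(\widetilde A)^{-1}=O(\sqrt D(1+\|c\|_2)/\rho)$. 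The gap is the far-case containment. You flag it as the main obstacle but do not close it, and the remedies you propose do not close it either.

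Write $\beta:=\|c'-c\|_\infty=\tau M$ for the offset of the base hyperplane $\{y_j=c'_j\}$. Two conditions must hold uniformly over far $s$.
\begin{itemize}
\item \emph{Height condition.} Every anchor must lie between $s$ and that hyperplane. Otherwise it is not in $\operatorname{conv}(\{s\}\cup B)$, even when its central projection from $s$ lands in $B$. Your projection criterion omits this, and it forces $\beta\ge\rho$: take $s_j<c_j$ and the anchor $c+\rho e_j$.
\item \emph{Displacement condition.} Take $s=c+M(e_j+e_i)$ and the anchor $c+\rho e_j$. The central projection sits at $\ell_\infty$-distance exactly $\rho(M+\beta)/(M-\rho)$ from $c'$, which tends to $2\rho+\beta$ as $M\downarrow2\rho$. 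So your claim that projections stay within a fixed multiple of $\rho$ of $c'$ requires $\beta=O(\rho)$. With $\beta=1/64$ the projection even leaves $B_\infty(c',1/64)\supseteq B$.
\end{itemize}
Your specification ``$\|c'-c\|_\infty\le1/64$'' does not pin $\beta$ down. A fixed small $\tau$ violates the height condition near the threshold. Taking $\beta\approx1/64$ violates the displacement condition. Shrinking $\rho$ alone only enlarges $1+\tau=1+\beta/M$.

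Any $(D-1)$-simplex inside $B_\infty(c',1/64)$ has $\ell_\infty$-inradius $r_B=O(1/D)$, so you must couple all three scales. For example, take $\beta=2\rho$ and $\rho\le r_B/16$. Writing $x=\lambda s+(1-\lambda)y$ with $y$ on the base hyperplane, one gets $|y_i-c'_i|\le2\rho(M+\beta)/(M-\rho)\le8\rho\le r_B/2$ for every $M>2\rho$, and the height condition holds. With a corner-type base, $\rho=\Theta(D^{-2})$, which keeps $\sigma_{\min}(\widetilde A)^{-1}$ polynomial.

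The paper resolves the same tension by decoupling instead of coupling:
\begin{itemize}
\item it fixes the offset $b=r/2$;
\item it uses the far threshold $M>r/(8D^2)$, so that $M/(M+b)\ge1/(5D^2)$;
\item it places the anchors within $\zeta_*=r/(256D^4)<b$ of $c$.
\end{itemize}
It then checks explicitly that $\lambda=(b+\sigma z_j)/(M+b)\in[0,1]$ and $|y_i-b_{0i}|\le2\zeta(M+b)/(M-\zeta)$.
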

\begin{proof}
The central point is the block vector
\[
 c=\bigl((4p)^{-1}\one_p,\,(4p)^{-1}\one_p,\,(2p)^{-1}\one_p,\,\one_{q_F}\bigr)\in\mathbb R^{3p+q_F}.
\]
It has an $\ell_\infty$-box of radius
$r=1/(64D)$ contained in $P_F$: substituting the coordinates of $c$ into the
box and clause slacks leaves a margin at least $r$ in every defining inequality.
Let $x_0,\ldots,x_D$ be the $D+1$ rows
used in the invertible augmented source basis in
Proposition~\ref{prop:analytic-source-rounding}; its inverse has norm
$O(D^2)$.  Set
$\tau=(2^{30}D^8Q_F^3)^{-1}$ and
$a_j=(1-\tau)c+\tau x_j$.  These points are interior and lie in the
central box.  If $\gamma^T\widetilde X_0=(c,1)$, then
\[
 \widetilde A=\bigl[\tau I+(1-\tau){\bf1}\gamma^T\bigr]\widetilde X_0.
\]
Here $\gamma^T\mathbf1=1$, so the rank-one factor has inverse norm
$O(\tau^{-1}(1+\|\gamma\|))=\poly(N_F)$.  Thus the displayed anchor
minor has inverse-polynomial least singular value.

For completeness, here is a fully rational cone construction.  Put
$M=\|s-c\|_\infty$.  If $M\le r/(8D^2)$, the rational simplex
\[
 c+\{x:x_i\ge-a_{\rm loc},\ {\bf1}^Tx\le a_{\rm loc}\},\qquad a_{\rm loc}=r/(4D),
\]
contains $s$ and the ball $B_\infty(c,r/(4D^2))$, hence all anchors.
Otherwise choose $j$ and $\sigma\in\{\pm1\}$ with
$\sigma(s_j-c_j)=M$, set $b=r/2$, and define
\[
 b_0=c-(b/M)(s-c),\qquad
 \mathcal B=\{y:y_j=b_{0j},\ y_i-b_{0i}\ge-a\ (i\ne j),\
              \sum_{i\ne j}(y_i-b_{0i})\le a\},\quad a=r/(4D).
\]
The $D$ vertices of $\mathcal B$ form a $(D-1)$-simplex, are rational, and lie
in $B_\infty(c,r)$; consequently the cone
$C_s=\operatorname{conv}(\{s\}\cup\mathcal B)$ is a $D$-simplex contained in
$P_F$ by convexity.  To verify its interior margin, write
$x=c+z$ and, for $\|z\|_\infty\le\zeta$, put
\[
  \lambda=\frac{b+\sigma z_j}{M+b},\qquad \theta=1-\lambda,\qquad
  y=\frac{x-\lambda s}{\theta}.
\]
In the far case, $0\le\lambda\le1$ because $|z_j|\le\zeta<M$ and
$b>0$. For either sign, the choice of $\lambda$ gives
$y_j=b_{0j}$ (when $\sigma=-1$, this is the same identity with
$s_j-c_j=-M$), and for $i\ne j$, 
\[
 |y_i-b_{0i}|\le \frac{2\zeta(M+b)}{M-\zeta}.
\]
Since $\mathcal B$ contains the $\ell_\infty$-ball of radius $a/(D-1)$ about
$b_0$, every $x$ with
$\zeta\le aM/[8(D-1)(M+b)]$ lies in $C_s$.  In the far case
$M>r/(8D^2)$ and $M/(M+b)\ge1/(5D^2)$, so the uniform radius
$\zeta_*=r/(256D^4)$ works.  All vertices and coefficients are
rational with polynomial bit length.  Since
$\|a_j-c\|_\infty\le5Q_F\tau<\zeta_*/2$, every anchor lies in
$C_s$; this proves the claim for every rational $s$.
\end{proof}
For the displayed bounded source coordinates, put $D=3p+q_F$ and
$Q_F=\max\{1,q_F\}$.  Every source point is at $\ell_\infty$-distance at most
$5Q_F$ from the centre $c$ above.  Set explicitly
\[
                    \eta=\frac{1}{1024DQ_F}.
\]
Write the raw-coordinate source maps and points as
$F_i^F(x)=(1-\eta)x+\eta s_i^F$.  Then
$\rho:=1-\eta$ and $\eta^{-1}=1024DQ_F=O(N_F^2)$, and convexity of $P_F$
keeps every $F_i^F$ inside $P_F$.  The contribution of $\eta^{-1}$ to the
robust gap exponent is at most two, and its rational bit length is polynomial
in the source size.

For a word $w=i_1\cdots i_t$, define the raw barycenter by
\begin{equation}
F_w^F(x)=\rho^t x+(1-\rho^t)s_w^F,\qquad
s_w^F=\frac{\eta}{1-\rho^t}\sum_{r=1}^t\rho^{t-r}s_{i_r}^F\in P_F.
\label{eq:word-barycenter}
\end{equation}
The slack embedding is affine.  With
$S=\widehat S_F$, $P=\widehat P_F$, $s_i=\Phi(s_i^F)$, and
$A_{\rm src}=\Phi(\{a_0,\ldots,a_D\})$, the normalized maps are
\begin{equation}
F_i=\Phi\circ F_i^F\circ\Phi^{-1},\qquad
F_i(x)=\rho x+\eta s_i,\qquad
s_w=\Phi(s_w^F),\qquad F_w=\Phi\circ F_w^F\circ\Phi^{-1}.
\label{eq:normalized-source-maps}
\end{equation}
Thus all subsequent compiler statements use the normalized source objects
$(S,P,A,\{F_i\})$; the un-hatted notation in the preceding construction is
reserved for raw coordinates.
The quantitative robustness identity used in the companion is exact.
\begin{lemma}[defect identity]
For nonempty compact convex $T$, $0<\eta\le1$, and $F=(1-\eta)I+\eta s$,
\begin{equation}
       \sup_{x\in T}\dist_\infty(F(x),T)=\eta\dist_\infty(s,T).\label{eq:app-defect}
\end{equation}
\end{lemma}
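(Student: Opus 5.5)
Write $F(x)=x+\eta(s-x)$, so that $F(x)=(1-\eta)x+\eta s$. I prove the two inequalities separately.

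For the upper bound, fix $x\in T$. By compactness choose $t^*\in T$ with $\|s-t^*\|_\infty=\dist_\infty(s,T)$. Convexity gives $(1-\eta)x+\eta t^*\in T$, and
\[
\|F(x)-((1-\eta)x+\eta t^*)\|_\infty=\eta\|s-t^*\|_\infty=\eta\dist_\infty(s,T).
\]
Taking the supremum over $x$ gives $\le$.

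For the lower bound I exhibit one $x\in T$ whose image is at least $\eta\dist_\infty(s,T)$ from $T$. If $\dist_\infty(s,T)=0$ there is nothing to prove, so assume $\delta:=\dist_\infty(s,T)>0$. Since $T$ is compact and convex and $s\notin T$, Hahn--Banach separation in the dual ($\ell_1$) norm gives a linear functional $\phi$ with $\|\phi\|_1=1$ and
\[
\phi(s)-\max_{t\in T}\phi(t)=\delta.
\]
This is the standard distance duality $\dist_{\|\cdot\|}(s,C)=\max_{\|\phi\|_*\le1}\bigl(\phi(s)-\sup_C\phi\bigr)$ for closed convex $C$. Let $h:=\max_T\phi$, attained at some $x^*\in T$. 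Then $\phi(F(x^*))=(1-\eta)h+\eta(h+\delta)=h+\eta\delta$. Every $t\in T$ has $\phi(t)\le h$, so $1$-Lipschitzness of $\phi$ with respect to $\ell_\infty$ gives $\|F(x^*)-t\|_\infty\ge\phi(F(x^*))-\phi(t)\ge\eta\delta$. Hence $\dist_\infty(F(x^*),T)\ge\eta\delta$, so the supremum is at least $\eta\dist_\infty(s,T)$, and the supremum is attained.

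The main obstacle is the duality step, i.e. producing a separating functional whose $\ell_1$ norm is exactly one and whose gap equals the $\ell_\infty$ distance, not just a positive gap. I would obtain it by applying Hahn--Banach separation to the disjoint convex sets $T$ and the open ball $B_\infty(s,\delta)$, then normalizing. A Euclidean separating hyperplane would not suffice, since the norm here is $\ell_\infty$.
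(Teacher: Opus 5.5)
Your proof is correct and follows the paper's argument: the upper bound comes from the convex combination $(1-\eta)x+\eta t^*\in T$, and the lower bound from an $\ell_1$-dual functional $\phi$ evaluated at a maximizer $x^*$ of $\phi$ on $T$. There are two cosmetic differences. You take an exact nearest point by compactness, where the paper uses an approximating sequence. You also sketch how Hahn--Banach produces the dual functional, which the paper cites directly as the dual representation of $\dist_\infty$.
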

\begin{proof}
For the upper bound, let $y\in T$ approach a nearest point to $s$.  For every
$x\in T$, the convex combination $(1-\eta)x+\eta y$ belongs to $T$, so
\[
\dist_\infty((1-\eta)x+\eta s,T)\le\eta\|s-y\|_\infty.
\]
Taking the infimum over $y$ and then the supremum over $x$ gives the upper
bound.  For the reverse inequality, use the dual representation of distance
to a compact convex set in the $\ell_\infty$ norm.  There is a vector $\phi$
with $\|\phi\|_1\le1$ such that
$\phi(s)-h_T(\phi)=\dist_\infty(s,T)$, where
$h_T(\phi):=\max_{y\in T}\phi(y)$.  Choose $x_*\in T$ with
$\phi(x_*)=h_T(\phi)$.  For every $y\in T$,
\[
\phi((1-\eta)x_*+\eta s-y)
\ge\eta\bigl(\phi(s)-h_T(\phi)\bigr).
\]
Since $\|\phi\|_1\le1$, the $\ell_\infty$ norm of the left-hand vector is
at least the right-hand side.  Taking the infimum over $y$ and then the
supremum over $x$ proves the matching lower bound.
\end{proof}

\subsection{Polyhedral repair and normalization}
\begin{lemma}[Explicit Polynomial Hoffman Modulus for $P_F$]
\label{lem:pf-repair}
If a point violates each displayed inequality defining $P_F$ by at most
$\delta\le1$, then it is within
\begin{equation}
  H_{P_F}(D)\,\delta,\qquad H_{P_F}(D)\le C_{\rm H}D,
\end{equation}
in $\ell_\infty$-distance of $P_F$, where $C_{\rm H}$ is an absolute
constant.  Thus the repair modulus is explicitly linear in the number
$D=3p+q_F$ of source coordinates.  After the constant-sum slack embedding
$\Phi$ used for the row-stochastic source, a point of
$\operatorname{aff}(\widehat P_F)$ violating the simplex inequalities by at
most $\delta$ is within $H_{\widehat P_F}\delta$ of $\widehat P_F$ in
$\ell_\infty$-distance, where $H_{\widehat P_F}\le C_{\rm H}D\cdot C_{\rm slack}$ and
$C_{\rm slack}=100(p+q_F+1)^3$ is the embedding scale of \eqref{eq:phi-lipschitz}.
\end{lemma}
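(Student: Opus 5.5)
We prove the Hoffman bound for $P_F$ by an explicit repair map: given a point $x$ whose violations of the displayed inequalities are all at most $\delta\le1$, we construct a nearby point of $P_F$ and bound the displacement coordinate by coordinate. Abstract Hoffman constants are not needed, because the constraint matrix is sparse and block structured. The repair runs block by block in the following order.

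\emph{Variable blocks.} For each $i$, first clip $u_i$ into $[0,1]$. Next replace $s_i,t_i$ by their clipped values in $[0,u_i]$. Each of these moves is at most $\delta$, plus the $\delta$ already incurred in moving $u_i$, so the variable coordinates move by $O(\delta)$.

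\emph{Clause coordinates.} After the variable repair, each clause constraint in \eqref{eq:source-constraints} has a left-hand side that changed by at most $6\cdot O(\delta)$, since the coefficients lie in $\{-2,-1,0,1,2\}$ and there are at most four nonzero terms. We therefore set
\[
v_j:=\min\Bigl\{5q_F,\ \max\bigl\{v_j,\,0,\,\max_{\text{literals}}(\text{LHS})\bigr\}\Bigr\},
\]
which raises $v_j$ by at most $O(\delta)$. The danger is that the cap $5q_F$ might conflict with the raised left-hand side. It does not: every clause left-hand side is at most $2\le5q_F$ once the variable blocks are feasible, because $s_i-2t_i\le1$ and $2t_i-2s_i-u_i\le1$.

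\emph{Conclusion for $P_F$.} The total $\ell_\infty$ displacement is therefore at most $C\delta$ for an absolute $C$. This is already within $C_{\rm H}D\,\delta$, so the stated linear bound holds with room to spare. The one point that needs care is the ordering: clause coordinates must be repaired after the variables, since clause feasibility depends on the repaired variables but not conversely.

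\emph{Normalized polytope.} Let $y\in\operatorname{aff}(\widehat P_F)$. Because $\Phi$ is an affine bijection onto this affine hull, we can write $y=\Phi(x)$. Violating the simplex inequalities by at most $\delta$ means $L_r(x)\ge-C_{\rm slack}\delta$ for every $r\le f$, and in addition
\[
\sum_{r=1}^{f}L_r(x)\le C_{\rm slack}(1+\delta).
\]
The first group of conditions says that $x$ violates each defining inequality of $P_F$ by at most $C_{\rm slack}\delta$. The last-coordinate condition is automatically implied up to constants once $x$ is close to $P_F$.

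\emph{Main obstacle: the case $C_{\rm slack}\delta>1$.} Here the $P_F$ bound is not directly applicable, since it assumes $\delta\le1$. I would handle this by one of two routes. The first is to note that the repair argument above never used $\delta\le1$ except in the clause-cap step, which stays valid after an extra clipping. The second is to absorb the case into the trivial diameter bound, using $\operatorname{diam}_\infty\widehat P_F\le1$.

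\emph{Transport back to normalized coordinates.} Apply the $P_F$ repair to $x$ at violation level $C_{\rm slack}\delta$, obtaining $x'\in P_F$ with $\|x-x'\|_\infty\le C_{\rm H}D\,C_{\rm slack}\delta$. By the first Lipschitz bound in \eqref{eq:phi-lipschitz}, $\|\Phi(x)-\Phi(x')\|_\infty\le\|x-x'\|_\infty$. Hence
\[
\dist_\infty\bigl(y,\widehat P_F\bigr)\le C_{\rm H}D\,C_{\rm slack}\,\delta,
\]
which gives $H_{\widehat P_F}\le C_{\rm H}D\cdot C_{\rm slack}$ as stated.
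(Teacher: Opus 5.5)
Your proposal is correct and follows essentially the paper's proof: the same block-by-block repair (clip each variable block, then project $v_j$ onto $[\max\{0,\text{LHS}\},5q_F]$), followed by pulling the violation back through $\Phi^{-1}$ at level $C_{\rm slack}\delta$ and applying the first inequality of \eqref{eq:phi-lipschitz}. Your extra observations are sound refinements that the paper leaves implicit. First, the $\ell_\infty$ displacement is actually $O(\delta)$ independent of $D$, whereas the paper sums the coordinate changes. Second, the case $C_{\rm slack}\delta>1$ is harmless by your first route, since neither repair step uses $\delta\le1$; the cap step needs no extra clipping, because $\max\{0,\text{LHS}\}\le1\le5q_F$. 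Your second route, by contrast, would need a bound on $y$ itself (e.g.\ from $\sum_r y_r=1$ and $y_r\ge-\delta$), not merely on $\operatorname{diam}_\infty\widehat P_F$.
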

\begin{proof}
For each variable block first clip $s_i,t_i,u_i$ to $[0,1]$, replace
$u_i$ by $\max\{u_i,s_i,t_i\}$, and then replace $s_i,t_i$ by
$\min\{s_i,u_i\}$ and $\min\{t_i,u_i\}$.  Since every violated inequality
has residual at most $\delta$, each of these changes is at most $2\delta$;
the resulting block satisfies $0\le s_i,t_i\le u_i\le1$.  For each clause put
\[
 L_j=\max\!\left(0,\{s_i-2t_i:\neg x_i\in c_j\},
                    \{2t_i-2s_i-u_i:x_i\in c_j\}\right).
\]
Replace $v_j$ by the projection of its current value onto
$[L_j,5q_F]$, namely $v'_j=\min\{5q_F,\max\{v_j,L_j\}\}$.  Because the
original point violates every displayed inequality by at most $\delta$,
and clipping changes each literal left-hand side by at most $4\delta$,
$|v'_j-v_j|\le 5\delta$.  This both preserves the upper bound and makes all
clause inequalities valid; unlike replacing $v_j$ by $L_j$, it does not
move an already feasible large clause coordinate by an unbounded amount.
Summing the coordinate changes gives the stated $C_{\rm H}D\delta$ bound in
source coordinates.  For the embedded statement, let $\widehat x\in
\operatorname{aff}(\widehat P_F)$ violate the simplex inequalities by at most
$\delta$.  Its coordinates are $C_{\rm slack}^{-1}L_r(x)$ for the unique
$x=\Phi^{-1}(\widehat x)$, so $x$ violates each defining inequality of $P_F$
by at most $C_{\rm slack}\delta$; the repair above moves $x$ by at most
$C_{\rm H}D\cdot C_{\rm slack}\delta$ into $P_F$, and by the first inequality in
\eqref{eq:phi-lipschitz} the embedded point moves by at most the same
amount into $\widehat P_F$.
\end{proof}
\subsection{Generic chronology-compiler contract}\label{sec:compiler}
\label{sec:generic-compiler-contract}

Call a normalized affine specification $(S,P,\{F_i\}_{i\in[m]})$
\emph{anchor-calibrated of rank $k$} if $S\subseteq P$ has affine rank $k-1$,
contains a fixed $k$-point anchor basis $A_{\rm src}$, and every legal word $w$ has a
$k$-vertex local lift as below. This source-side condition is distinct
from response calibration in Definition~\ref{def:calibrated}: the
compiler uses post-macro probes to recover these affine maps on the
synchronized payload cell. For the contraction sources used below, invariance
$F_i(T)\subseteq T$ for every $i$ implies $S\subseteq T$: iterating each
contraction on a point of the compact simplex converges to its center $s_i$.
Let $N_{\rm src}$ be the binary encoding length of the source. For
such a source, and for every nonempty legal word $w=i_1\cdots i_t$, let
$T_w\subseteq P$ be a $k$-vertex simplex containing $A_{\rm src}\cup\{s_w\}$; for
$w=\varepsilon$ choose any anchor-containing simplex $T_\varepsilon$ and use the
identity template.  There are affine slot maps
$H_i^{(w)}:T_w\to T_w$ whose ordered product agrees with $F_w$ on the
anchors.  For the present source one may take
$H_i^{(w)}(x)=\rho x+(1-\rho)s_w$ for every slot, so that
$(H_{i_t}^{(w)}\circ\cdots\circ H_{i_1}^{(w)})(x)=F_w(x)$. The controller transitions, phase effect, and readout wiring are fixed across menus; only the payload transition blocks and the barycentric coordinates used by a particular local witness may depend on $w$.

A \emph{generic chronology compiler} for this source has $q$ typed parser
sectors and budget $K=kq$.  It consists of a finite alphabet, one Boolean
effect.  Let $Q$ denote the finite parser controller-state set, with $q=|Q|$.
Write $\varphi_Q:Q\to\{0,1\}$ for the parser's fixed binary phase
readout, and let the menus $U_w$ satisfy the following interface conditions,
covering endpoint geometry, payload rank, source-faithful dynamics, and
quantitative extraction:
\begin{enumerate}[label=(C\arabic*)]
\item every parser state has a row-stochastic transition for every letter
(totality), and every valid macro starts at a distinguished synchronized code state and
ends at the synchronized code state with the same phase coordinate;
\item the literal phase suffixes have endpoint signatures
$\chi(z)=(\varphi_Q(z),\varphi_Q(\pl{T}z),\ldots,\varphi_Q(\pl{T}^{L-1}z))\in\{0,1\}^L$ that are injective
on the $q$ controller sectors, and the certified phase action is deterministic on
those endpoint states.  The calibration table supplies a nonempty support
witness for every signature.  A complete macro returns to the synchronized code state with its phase
coordinate preserved;
the endpoint-separation lemma below converts these observable conditions into
pairwise-disjoint support cells for every exact realization;
\item a fixed parser-level block of payload probes has, after restriction to
each prescribed endpoint signature, a table of $k$ anchor-response rows
on $n$ payload-probe coordinates with affine rank $k-1$ (equivalently, augmented rank $k$). The union
of all declared menus has an explicit independent-query static factorization
on a common $kq$-state root carrier. Once endpoint separation identifies the
support cells, the block-width lemma below gives the matching static lower
bound $K=kq$;
\item \emph{payload-parametric lift and source-faithful factorization:} for every
legal $w$ there are a $k$-vertex simplex $T_w$ and affine stochastic slot maps
$H_i^{(w)}:T_w\to T_w$ whose ordered product agrees with $F_w$ on the anchors,
and the fixed parser controller and readout architecture, instantiated with
the menu-specific payload blocks, realizes every such family on
$T_w\times[q]$; the initial endpoint signature is read before $w$, while
the post-macro signature is the reset signature with the same phase.  For a shared exact
realization, the one-letter macro followed by the post-macro probes identifies
the induced affine map with the fixed $F_i$ on the anchor basis;
\item there are fixed constants $\alpha_{\rm C5}>0$, $\delta_0>0$, and $a_{\rm C5},b_{\rm C5}>0$
independent of the source instance, and a polynomial $P_{\rm C5}$ with
$P_{\rm C5}(N_{\rm src})\le N_{\rm src}^{b_{\rm C5}}$,
such that every normalized shared realization of dimension at most $K$ with
positive defect $0<\dr\le\delta_0$ yields a recovered simplex $T$ satisfying
\[
 \dist_\infty(S,T)+\max_i\sup_{x\in T}\dist_\infty(F_i(x),T)
 \le P_{\rm C5}(N_{\rm src})\,\dr^{\alpha_{\rm C5}},
\]
where $T$ has $k$ vertices and $T\subseteq P$,
$\dist_\infty(S,T):=\sup_{s\in S}\inf_{x\in T}\|s-x\|_\infty$,
and the distance is to the source sandwich $S\subseteq T\subseteq P$.
For these contraction sources, the contraction identity converts the extracted
invariance residual into the corresponding source-containment residual.
The compiler, words, and rational constants have polynomial encoding
length and bounded magnitude, with compiled length $N_{\rm out}\le N_{\rm src}^{a_{\rm C5}}$.
\end{enumerate}
The first four clauses describe the parser and local/static witnesses. The last
clause supplies the quantitative extraction bound used for robust gap transfer.
The source-specific repair estimates verify (C5) for the construction below.

The endpoint-separation and block-rank assertions are verified from the
finite transition/effect table and fixed probes, before any source coordinates
are used.  The extraction clause is verified from the anchor, Hoffman, and
orientation estimates.  Thus the contract is source-independent at the
interface level, although hardness remains source-dependent.

\begin{lemma}[Endpoint separation and block width]
\label{lem:endpoint-block}
Under (C1)--(C3), every exact normalized realization has $q$ nonempty
pairwise-disjoint endpoint support cells.  Each cell contains $k$ affinely
independent payload response rows, so every normalized realization has at
least $k$ latent states in that cell.  Consequently the independent-query
static realization width is at least $kq$; the explicit product witness in
(C3) gives the matching upper bound.
\end{lemma}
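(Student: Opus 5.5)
The plan follows the pattern already used in the proofs of Theorem~\ref{thm:main-product-separation} and Lemma~\ref{lem:identity-fixed-interface}: exact $0/1$ phase responses force disjoint supports, and a rank count inside each support gives the width bound. The proof has three steps: separation, rank per cell, and the static count.

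\emph{Separation.} Fix an exact normalized realization and a calibrated root $h$ whose prescribed endpoint signature is $\chi(z)$ for a controller sector $z\in Q$. By (C2) the calibration table declares the responses to the phase suffixes $\varepsilon,\pl{T},\ldots,\pl{T}^{L-1}$ with exact values in $\{0,1\}$, and supplies at least one such root for every sector. A response of exactly $0$ or $1$ is a convex combination of terminal-effect values in $[0,1]$. Hence every latent state $x$ in the support of the root distribution satisfies $(M_{\pl{T}^\ell}e)(x)=\chi(z)_\ell$ for all $\ell<L$.

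We then define the cell $C_z$ as the set of latent states whose pulled-back phase vector $\bigl((M_{\pl{T}^\ell}e)(x)\bigr)_{\ell<L}$ equals $\chi(z)$. By (C2), $\chi$ is injective on the $q$ sectors, so these cells are pairwise disjoint. Each cell is nonempty because it contains the support of the corresponding calibrated root. Determinism of the certified phase action in (C2) is not needed here. Only the exact readout values on supports are used, as in the argument for Theorem~\ref{thm:main-product-separation}.

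\emph{Block width.} Restrict attention to the $k$ anchor roots with signature $\chi(z)$ and the payload-probe block from (C3). Every such root is supported in $C_z$, so its response row on the $n$ probe coordinates factors as $W_zE_z$. Here $W_z$ is the $k\times|C_z|$ row-stochastic matrix of root distributions restricted to $C_z$, and $E_z$ has rows given by the pulled-back probe effects on $C_z$. Row-stochasticity lets us augment by the all-ones column, giving $[W_zE_z\mid\one]=W_z[E_z\mid\one]$. By (C3) the left side has augmented rank $k$, so $|C_z|\ge k$. This is the same as the $\operatorname{rank}(\mathbf R)$ bound in Theorem~\ref{thm:main-product-separation}. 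Summing over the $q$ disjoint cells gives $d\ge kq$.

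\emph{Static width.} For the independent-query static relaxation there are no dynamics to pull back. The phase-suffix words are ordinary declared queries, each with its own effect in $[0,1]^d$, and their exact $0/1$ targets give the same support classification: states are grouped by their vector of phase-query effect values. The augmented rank argument then applies cell by cell with the independent probe effects, so the static width is at least $kq$. The matching upper bound is the explicit product witness in (C3).

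The main obstacle is this static transfer. We must check that the phase queries really appear as declared columns of the union table, so that the disjoint-support argument does not rely on shared transitions. Clause (C2) guarantees this, since the calibration table declares those suffixes. With that in hand, the rest is the linear-algebra count above.
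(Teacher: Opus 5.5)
Your proposal is correct and follows the paper's proof. Exact $0/1$ phase responses together with injectivity of $\chi$ give the disjoint nonempty support cells, and your augmented-rank identity $[W_zE_z\mid\one]=W_z[E_z\mid\one]$ is the paper's observation that a convex table generated by $d_\tau$ states has affine dimension at most $d_\tau-1$. Your separate treatment of the static relaxation reruns the support argument with the independently chosen phase-query effects; the effects of any single menu suffice, since the roots are shared. This makes explicit a step the paper's proof leaves implicit.
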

\begin{proof}
If two endpoint cells shared a latent state, all suffixes would produce the
same Boolean signature from that state.  Injectivity of $\chi$ contradicts
the prescribed phase response table, so the cells are disjoint and nonempty.
The payload probes are supported inside their own cell and contain $k$
affinely independent response rows.  A convex response table generated by
$d_\tau$ latent states has affine dimension at most $d_\tau-1$, so
$d_\tau\ge k$.  Disjoint support cells add these state counts, giving
$d\ge kq$.  The independent-query static witness supplied by (C3) has exactly
$kq$ states and fits the full union table, giving the upper bound.
\end{proof}

\begin{theorem}[Generic chronology-compiler transfer]
\label{thm:generic-compiler-transfer}
For every compiler satisfying (C1)--(C5), every legal menu $U_w$ has exact
local optimum $K$, and the independent-query static union over
$\mathcal U_*:=\bigcup_{w\in\mathcal L}U_w$ has exact optimum $K$.
Moreover, a shared normalized $K$-state realization for $\mathcal U_*$ exists if and only if
there is a $k$-vertex simplex $T$ with
\[
 S\subseteq T\subseteq P,
 \qquad F_i(T)\subseteq T\quad(i\in[m]).
\]
Under (C5), with defect measured on the compiler's declared probe set,
a source gap $\gamma_{\rm src}\ge N_{\rm src}^{-c_{\rm gen}}$ transfers to the
full-language defect $J_K^{\rm reg}\ge N_{\rm out}^{-C_{\rm gen}}$ for fixed constants
$c_{\rm gen},C_{\rm gen}$, while YES instances have $J_K^{\rm reg}=0$. For the five-letter compiler below,
the quantitative proof already establishes this lower bound for $J_K^{\rm core}$,
since all probes used in the extraction belong to the finite core.
The exact local/static and invariant-simplex conclusions are uniform over
the admissible interface class.  The robust conclusion uses the quantitative clause (C5).
\end{theorem}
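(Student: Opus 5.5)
The theorem has three parts: exact local/static thresholds, an exact equivalence, and robust gap transfer. I will prove them in that order, using Lemma~\ref{lem:endpoint-block} and Theorem~\ref{thm:rank-tight-closure} as the main tools.

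\emph{Local and static optima.} For the lower bound, fix a legal menu $U_w$. Every menu contains the phase-signature block and the payload-probe block. Lemma~\ref{lem:endpoint-block} applied to $U_w$ alone gives $q$ disjoint support cells, each with at least $k$ states, so the local optimum is at least $kq=K$. For the matching upper bound, clause (C4) supplies a simplex $T_w$ and slot maps $H_i^{(w)}$ that keep $T_w$ invariant. Theorem~\ref{thm:rank-tight-closure}(\,$\Longleftarrow$\,) turns each $H_i^{(w)}$ into a row-stochastic $k\times k$ payload block. Tensoring these blocks with the fixed parser as in (C4) gives a $K$-state realization of $U_w$. The static statement is the block-width conclusion of Lemma~\ref{lem:endpoint-block} together with the explicit $kq$-state witness in (C3).

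\emph{Exact equivalence.} For the forward direction, take a shared $K$-state realization. Lemma~\ref{lem:endpoint-block} forces exactly $k$ states in every cell. In the synchronized cell the $k$ latent payload rows are affinely independent, since the anchor rows have affine rank $k-1$. Their convex hull is therefore a simplex $T\subseteq P$ containing the anchors. By (C4), the one-letter macro followed by the post-macro probes identifies the induced affine map with $F_i$. The macro returns to the synchronized code state by (C1), so the conjugation argument of Theorem~\ref{thm:rank-tight-closure}(\,$\Longrightarrow$\,) gives $F_i(T)\subseteq T$. Because each $F_i$ is a contraction toward $s_i$, iterating $F_i$ on a point of $T$ stays in the compact set $T$ and converges to $s_i$, so $s_i\in T$. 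This gives $S\subseteq T$ for this source.

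For the converse, suppose $T$ satisfies $S\subseteq T\subseteq P$ and $F_i(T)\subseteq T$ for all $i$. Build barycentric matrices $M_i$ on the vertices of $T$ as in Theorem~\ref{thm:rank-tight-closure}. Install these single matrices, independent of $w$, in the payload slots of the parser and tensor with the parser. The resulting realization reproduces every $U_w$, because the product of the macros realizes $F_w$.

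\emph{Robust transfer.} Suppose a realization has defect $\dr\le\delta_0$. By (C5) it yields a $k$-vertex simplex $T\subseteq P$ with
\[
\dist_\infty(S,T)\le P_{\rm C5}(N_{\rm src})\,\dr^{\alpha_{\rm C5}}.
\]
The left side is at least $\gamma_{\rm src}\ge N_{\rm src}^{-c_{\rm gen}}$, hence
\[
\dr\ge\bigl(N_{\rm src}^{-c_{\rm gen}}/P_{\rm C5}(N_{\rm src})\bigr)^{1/\alpha_{\rm C5}}\ge N_{\rm src}^{-(c_{\rm gen}+b_{\rm C5})/\alpha_{\rm C5}}.
\]
If instead $\dr>\delta_0$, the bound holds trivially. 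Using $N_{\rm out}\le N_{\rm src}^{a_{\rm C5}}$ and $N_{\rm src}\le N_{\rm out}$, this becomes $J_K^{\rm reg}\ge N_{\rm out}^{-C_{\rm gen}}$ with $C_{\rm gen}$ fixed. YES instances have zero defect by the converse construction above. The finite-core version holds because (C5) reads only probes in the core.

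\emph{Main obstacle.} The hardest step is the forward identification in the exact equivalence. It must show that an arbitrary realization, with no parser structure, induces on the synchronized cell an affine map that equals $F_i$ on all of $T$ and not merely on the anchors. The plan is to use affine independence of the anchor rows: an affine map on $\operatorname{aff}T$ is determined by its values on an affine basis. One must also check that no macro mass leaks to other cells. Exact $0/1$ phase signatures handle this, because any positive mass on a wrong-phase state would produce a nonzero response error.
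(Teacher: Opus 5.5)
Your proposal is correct and follows the paper's proof essentially step for step. The paper likewise uses Lemma~\ref{lem:endpoint-block} for the local and static lower bounds, (C4) and (C3) for the matching witnesses, and saturated-budget recovery of the synchronized-cell simplex, with one-macro identification of $F_i$ on the anchors and the contraction-iteration argument for $S\subseteq T$; it then uses barycentric copying for the converse and plays (C5) against the source gap for the robust clause. Your explicit exponent bookkeeping and the separate $\dr>\delta_0$ case only spell out what the paper compresses into ``choosing the incoming defect below the source gap''; for completeness you could also note that $\dr=0$ on a NO instance is excluded by the exact equivalence, and that a YES simplex is automatically $F_i$-invariant by convexity since $s_i\in T$.
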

\begin{proof}
Lemma~\ref{lem:endpoint-block} charges $q$ nonempty disjoint support cells and
gives the lower bound $d\ge kq$.
The source-faithful factorization and (C4) give the matching local realization
on $T_w\times[q]$. The explicit full-union factorization in (C3)
gives the matching static upper bound.

At the saturated budget, all cells contain exactly $k$ states.  A complete macro sends each endpoint cell to the synchronized code cell
with the same phase; subsequent macros remain in that cell.  The calibration and post-macro
payload equations, together with the source-faithfulness clause of (C4),
therefore recover one simplex $T$ in that cell and force
$F_i(T)\subseteq T$ on the $k$ vertices of $T$.  Since $F_i$ is affine and
$T$ is the convex hull of those vertices, this vertex inclusion implies
$F_i(T)\subseteq T$. For the contraction sources used here, iterating each
$F_i$ inside the compact invariant simplex converges to its center $s_i$, so
$S\subseteq T\subseteq P$. Conversely, barycentric coordinates
of any such invariant $T$, copied into all phase cells and using (C4), realize
every legal chronology.  This proves the exact equivalence.

For approximation, the explicit extraction inequality in (C5) converts the
endpoint and payload defects directly into source containment error at most
$P_{\rm C5}(N_{\rm src})\dr^{\alpha_{\rm C5}}$.
Choosing the incoming defect below the inverse-polynomial source gap proves
the target gap.
\end{proof}

\subsection{A total five-letter parser and its response menus}\label{sec:literal-parser}

Using the previously defined relation $\rho=1-\eta$, use the fixed
alphabet
\begin{equation}
                        \Gamma=\{\pl0,\pl1,\#,\pl{P},\pl{T}\}.\label{eq:alphabet}
\end{equation}
The five symbols are fixed once and for all, independently of $m,n$.  Let
$h=\lceil\log_2(m+n)\rceil$ and assign distinct $h$-bit words $c_i$ to the
$m$ source controls and $d_j$ to the $n$ response coordinates.  The charged
controller is the explicit product $Q=Q_{\rm phase}\times Q_{\rm code}$, with
$Q_{\rm phase}=\mathbb Z_{2^h}$ and
$Q_{\rm code}=\{0,1\}^{\le h}\cup\{\bot\}$.  Thus
$q_{\rm phase}=2^h$, $q_{\rm code}=2^{h+1}$,
$q:=|Q|=2^{2h+1}=O((m+n)^2)$, and $L:=\log_2 q=2h+1$.

The latent state set is $[k]\times Q$.  Fix a bijection $\iota:Q\to\mathbb Z_q$
and let $\tau$ be the induced cycle.  Write a latent state as $(a,z)$ with
$a\in[k]$ and $z\in Q$, to distinguish controller states from the sector count
$q$.  The five primitive matrices are specified on every basis row:
$M_{\pl{T}}$ maps $(a,z)$ to $(a,\tau z)$; $M_{\pl{P}}$ resets the code
component to the empty word while preserving the phase component; and each
$M_\sigma$ appends $\sigma\in\{\pl0,\pl1\}$ while the buffer has length below $h$;
complete $h$-bit buffers are retained until $\#$, and only overflow buffers
are sent to $\bot$.  On a $\#$ row with code $c_i$,
the matrix applies the row-stochastic payload map $G_i$ and resets the code
while preserving the phase.  On code $d_j$, it applies a row-stochastic
readout map $B_j$.  Fix a payload index $a_*$ and controller states
$z_0,z_1\in Q$ with $\varphi_Q(z_0)=0$ and $\varphi_Q(z_1)=1$.  Payload basis state $a$
is routed to the complete states $(a_*,z_0)$ and $(a_*,z_1)$ with probabilities
$1-\lambda_j(a)$ and $\lambda_j(a)$, respectively.  In a simplex witness with vertices $v_a$, set
$\lambda_j(a)=(v_a)_j$; in an arbitrary candidate they are simply the corresponding
stochastic variables.  Therefore a payload distribution $x$ produces response
$\sum_a x_a\lambda_j(a)$.  Every
other $\#$ row uses a fixed default distribution and resets the code.  From
$\bot$, bits and $\pl{T}$ use fixed total rows, $\pl{P}$ resets the code to the
empty word, and $\#$ applies a fixed default map and resets the code.  Every
row of every primitive matrix is thus specified, so the parser is total on
$\Gamma^*$.

Choose the single binary-outcome effect by taking a binary de Bruijn cycle of
order $L$ on $(Q,\tau)$.  Then

\begin{equation}
 \chi(b)=(\varphi_Q(b),\varphi_Q(\tau b),\ldots,\varphi_Q(\tau^{L-1}b))\in\{0,1\}^L .\label{eq:phase-code}
\end{equation}
This runs through every binary $L$-tuple exactly once, so the $L$ literal
suffixes $\varepsilon,\pl{T},\ldots,\pl{T}^{L-1}$ distinguish all $q$ controller sectors.  The
effect on $(a,z)$ is $e(a,z)=\varphi_Q(z)$; the readout maps above route payload mass
to effect-zero and effect-one states.  No output label, extra effect, or
external type value is present.

Legal compound controls are
\begin{equation}
 \mathcal L=\{\pl{P}c_{i_1}\#\cdots \pl{P}c_{i_t}\#:t\ge0\}.\label{eq:legal-language}
\end{equation}
Every word in $\Gamma^*$ has total transition semantics; declared targets concern only legal language and calibration/probe words.  The local-optimality
quantifier is over $\mathcal L$; malformed parser strings and terminal probe
programs are not compound controls.

The legal language indexes the declared compound experiments. Totality
specifies a stochastic successor on every parser branch, including malformed
strings; the local-width promise concerns the menus indexed by legal words.

Let $C_{\rm code},D_{\rm code}\subset\{0,1\}^{h}$ be the finite disjoint control and payload code
sets.  The legal, payload probe, phase probe, and charged prefix languages are
$({\pl{P}}C_{\rm code}\#)^*$, $\pl{P}D_{\rm code}\#$, $\{\varepsilon,\pl{T},\ldots,\pl{T}^{L-1}\}$, and the prefix closure of their
concatenations.  They are regular.  A local realization chooses one root
encoding family, five primitive stochastic maps, and one effect for the
entire $U_w$; it cannot rechoose them per query.  A common realization fixes
the same tuple for the union over $w$.  Every query word is selected before
and independently of the unknown root.

Take roots $(a_\ell,b)\in A\times Q$, where $a_\ell$ is an anchor point and
$b$ is a controller state.  For $w\in\mathcal L$ define
\begin{equation}
\begin{split}
U_\varepsilon={}&\{\pl{T}^r:0\le r<L\}\cup\{\pl{P}d_j\#:j\in[n]\},\\
U_w={}&\{\pl{T}^r:0\le r<L\}\cup\{\pl{P}d_j\#:j\in[n]\}\\
 &\cup\{w\pl{T}^r:0\le r<L\}\cup\{w\pl{P}d_j\#:j\in[n]\},\qquad w\ne\varepsilon.
\end{split}\label{eq:menu-definition}
\end{equation}
For $b=(\phi,z)\in Q_{\rm phase}\times Q_{\rm code}$, write
$\operatorname{reset}(b)=(\phi,\varepsilon_{\rm code})$ for the code-reset state with the
same phase. The declared target responses are
\begin{align}
 y_{(a_\ell,b)}(\pl{T}^r)&=\varphi_Q(\tau^r b),&
 y_{(a_\ell,b)}(\pl{P}d_j\#)&=(a_\ell)_j, \nonumber\\
 y_{(a_\ell,b)}(w\pl{T}^r)&=\varphi_Q(\tau^r\operatorname{reset}(b))
 \quad(w\ne\varepsilon),&
 y_{(a_\ell,b)}(w\pl{P}d_j\#)&=(F_w(a_\ell))_j
 \quad (w\ne\varepsilon). \label{eq:parser-targets}
\end{align}
Thus the first macro resets only the code coordinate; the phase coordinate and
its signature are retained, and all subsequent macros remain synchronized.
Every query is one root-independent word over $\Gamma$. The menu has
$2(L+n)$ words and maximum length $|w|+O(\log(m+n))$.

\begin{lemma}[total compiler semantics and size]
\label{lem:analytic-compiler}
The parser over five letters defines a total transition on every word in
\(\Gamma^*\).  Equivalently, every parser state and every letter
\(\gamma\in\Gamma\) has a specified row-stochastic transition.  From the
synchronized code state in any phase sector, every valid macro
$\pl{P}c_{i_1}\#\cdots \pl{P}c_{i_t}\#$ applies
$F_{i_t}\circ\cdots\circ F_{i_1}$.  The parser size, charged sector count,
each declared suffix, and every rational number produced after a word $w$
have description length polynomial in the source instance and $|w|$.
Every declared word in \eqref{eq:menu-definition} is a legal macro sequence
followed by a phase suffix or a probe program.  Along such a word, the
letters $\pl P$ and $\pl T$ act on every code state by the reset and
the cycle $\tau$, and the letters $\pl0,\pl1,\#$ are read only from the
synchronized code state, partial buffers of length below $h$, or complete valid
buffers.  The sink $\bot$ and invalid buffers have fixed total rows under
$\pl0$, $\pl1$, and $\#$; these rows are outside the declared root--word traces.
\end{lemma}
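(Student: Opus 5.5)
The lemma is a bookkeeping statement about the explicit parser of \eqref{eq:alphabet}--\eqref{eq:menu-definition}, so I will check it by direct case analysis on the transition table, in three passes: totality, macro semantics, and size.

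\emph{Totality.} I will enumerate parser states $(\phi,z)\in Q_{\rm phase}\times Q_{\rm code}$ and letters $\gamma\in\Gamma$. For every basis row $(a,z)$ I will check that a row-stochastic image is specified.
\begin{itemize}
\item Under $\pl T$ the image is the deterministic cycle $\tau$.
\item Under $\pl P$ the image is the deterministic code reset, with the phase unchanged.
\item Under $\pl0,\pl1$: a buffer of length $<h$ appends the bit; a complete buffer is retained; $\bot$ uses its fixed default row. Only overflow goes to $\bot$.
\item Under $\#$: code $c_i$ applies the stochastic $G_i$ and resets; code $d_j$ applies the stochastic readout $B_j$; every other buffer and $\bot$ use the fixed default map and reset.
\end{itemize}
Each case is either a point mass or a convex combination with nonnegative weights summing to one, so every primitive matrix is row-stochastic. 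Totality on $\Gamma^*$ then follows by multiplying these matrices.

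\emph{Macro semantics.} I will induct on $t$. Start at the synchronized (empty) code state in phase $\phi$. The letter $\pl P$ leaves the state at empty code in the same phase. The $h$ bits of $c_i$ then fill the buffer to exactly length $h$, passing only through partial buffers of length below $h$. Reading $\#$ applies $G_{i}$, whose barycentric action on the payload is $F_i$ (here I take $G_i$ to be the payload block instantiated from the simplex witness), and resets the code. The macro therefore returns to the synchronized code state with the same phase. Because the states evolve by right multiplication, the induced affine maps compose as $F_{i_t}\circ\cdots\circ F_{i_1}$, matching the composition-order remark after Theorem~\ref{thm:rank-tight-closure}.

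The same trace argument applies to a declared word $w\pl T^r$ or $w\pl P d_j\#$. After $w$ the state is synchronized. The suffix $\pl T^r$ only cycles the phase, and $\pl P d_j\#$ only reads partial buffers and then the complete valid buffer $d_j$. Hence $\bot$ and the invalid buffers are never entered on a declared trace, and their fixed rows cannot affect any declared response.

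\emph{Size.} I will use $h=\lceil\log_2(m+n)\rceil$, which gives $q=2^{2h+1}=O((m+n)^2)$ and $L=2h+1$, and the fact that each matrix has $kq$ rows. Each declared word has length $|w|+O(\log(m+n))$. The rationals produced after $w$ are the entries of $F_w(a_\ell)$, which by \eqref{eq:word-barycenter} equal $\rho^t a_\ell+\eta\sum_{r}\rho^{t-r}s_{i_r}$. These have bit length $O(t\cdot\mathrm{bits}(\eta)+\mathrm{bits}(s))$, which is polynomial in the source size and $|w|$.

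The only delicate point is the exhaustiveness of the $\#$ case split. I must confirm that the code sets $C_{\rm code}$ and $D_{\rm code}$ are disjoint and that complete buffers are retained rather than overflowing, so that every declared $\#$ meets exactly the intended branch. I will settle this by listing, for each code component, its row under every letter.
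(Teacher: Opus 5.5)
Your proposal is correct and follows essentially the paper's route: case analysis by induction on the scanned prefix for totality and the one-macro action, induction on $t$ for the composition order $F_{i_t}\circ\cdots\circ F_{i_1}$, the counts $q=2^{2h+1}$ and $L=2h+1$, and the closed barycenter formula \eqref{eq:word-barycenter} for the bit lengths. One overstatement needs fixing. The cycle $\tau$ acts on all of $Q$, not just the phase register, and roots start at arbitrary $b\in Q$, so $\bot$-coded and invalid-buffer states can be occupied on declared traces and their $\pl T$ and $\pl P$ rows are used. Your ``never entered'' claim is therefore false as stated; what your argument actually shows, and all the lemma asserts, is that $\pl0,\pl1,\#$ are read only after a $\pl P$ reset, so the rows of those three letters at such states are never applied.
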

\begin{proof}
Induct on the scanned prefix.  After $P$ the parser is in the empty loading
state; a bit appends while the buffer is shorter than the fixed code length,
and every partial, overflow, or invalid buffer has a designated total
successor.  On $\#$, a valid control code \(c_i\) applies \(G_i\), a valid payload
code \(d_j\) applies \(B_j\), and every other branch applies the fixed default
map; all branches reset the parser. This proves the macro statement, and
induction on the number of macros proves the composition formula for arbitrary
$t$.
The code register has $h=\lceil\log_2(m+n)\rceil$ bits and $q_{\rm code}=2^{h+1}$ states (including $\bot$), while the phase register has $q_{\rm phase}=2^h$ states.  Hence
$q=q_{\rm phase}q_{\rm code}=2^{2h+1}=O((m+n)^2)$ and $L=2h+1=O(\log(m+n))$.  Every macro has length $h+2$, and \eqref{eq:menu-definition} has $2(L+n)$ words.
If the source contractions have rational bit length $B_{\rm map}$, composing $t$ of them increases
numerator and denominator bit lengths by at most $O(tB_{\rm map}+t\log t)$ under the
closed formula \eqref{eq:word-barycenter}, so every macro has a polynomial-size
description.
\end{proof}

\begin{proposition}[The five-letter parser for the companion source]
\label{prop:five-letter-generic}
For the normalized companion source $S=\widehat S_F$, $P=\widehat P_F$
constructed in Section~\ref{sec:source-geometry}, the parser of
\eqref{eq:alphabet}--\eqref{eq:menu-definition} satisfies the generic chronology-compiler contract
(C1)--(C5), with $q=|Q|=q_{\rm phase}q_{\rm code}=2^{2h+1}$ and threshold
$K=kq$.
\end{proposition}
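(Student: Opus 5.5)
The plan is to verify the five clauses in order. (C1)--(C4) are checked directly from the explicit transition table of \eqref{eq:alphabet}--\eqref{eq:menu-definition} together with Lemma~\ref{lem:analytic-compiler}. (C5) is assembled from the approximate machinery already developed: sector allocation, basis rigidity, Hoffman repair and the source rounding.

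\textbf{(C1)} Totality is exactly Lemma~\ref{lem:analytic-compiler}. A valid macro $\pl P c_i\#$ starts from the empty-code state, because $\pl P$ resets the code while keeping the phase. The bits then load $c_i$, and $\#$ resets the code again, so the macro returns to $(\phi,\varepsilon_{\rm code})$ with the same phase.

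\textbf{(C2)} The de Bruijn choice \eqref{eq:phase-code} makes $\chi$ a bijection between $Q$ and $\{0,1\}^L$, hence injective on sectors. $M_{\pl T}$ is deterministic on every basis row. The roots $(a_\ell,b)$ with declared targets $\varphi_Q(\tau^r b)$ give each signature a support witness.

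\textbf{(C3)} Restricted to a fixed endpoint signature, the probe block $\pl P d_j\#$ returns the responses $(a_\ell)_j$ for $j\in[n]$ and $\ell=0,\dots,D$. These anchor rows $\Phi(a_\ell)$ have affine rank $D=k-1$ by Lemma~\ref{lem:interior-anchor-extension}, and $\Phi$ is injective. The static witness places the $kq$ roots on basis states $[k]\times Q$ and gives each column its own effect. Lemma~\ref{lem:endpoint-block} then supplies the matching lower bound.

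\textbf{(C4)} For a legal $w$, I would take $T_w$ from Lemma~\ref{lem:interior-anchor-extension} applied to $s_w$ (normalized through $\Phi$), with slot maps $H_i^{(w)}=\rho x+(1-\rho)s_w$. The payload blocks $G_i$ are the barycentric matrices of $H_i^{(w)}$ on the vertices of $T_w$. The readout $B_j$ uses $\lambda_j(a)=(v_a)_j$. Tensoring with the parser realizes $U_w$ on $T_w\times[q]$. In the shared exact case, the post-macro probes $w\pl P d_j\#$ read $F_i$ on the anchor basis, as in the proof of Theorem~\ref{thm:generic-compiler-transfer}.

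\textbf{(C5)} This is the substantive step. Start from a candidate with $d\le K$ and defect $\dr\le\delta_0$.
\begin{enumerate}[label=(\alph*)]
\item Rerun Lemma~\ref{lem:approx-sector-allocation} with $K_0$ replaced by $k$. The anchor rows now replace the identity table, so the smallest-singular-value argument uses the anchor-minor bound $\sigma_{\min}(\widetilde A)^{-1}\le N_F^{O(1)}$ of Lemma~\ref{lem:interior-anchor-extension} in place of $\|WE-I\|<1$. This yields $q$ cells of exactly $k$ states, with leakage $O(L\dr)$.
\item In the synchronized cell, take the $k$ latent response rows on the $n$ probe coordinates and form their convex hull $T'$. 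The anchor rows lie within $\mathrm{poly}\cdot\dr$ of $T'$. The vertices satisfy $[0,1]$ and row-sum constraints, but may lie off $\operatorname{aff}(\widehat P_F)$ or violate slack inequalities only by $\mathrm{poly}\cdot\dr$.
\item Repair. Project each vertex onto $\operatorname{aff}(\widehat P_F)$. The anchor minor controls this projection, because the anchors lie inside $\operatorname{aff}(\widehat P_F)$ and span it. Then apply Lemma~\ref{lem:pf-repair} to move the vertices into $\widehat P_F$, losing a factor $C_{\rm H}D\,C_{\rm slack}$. Call the result $T\subseteq P$.
\item The one-letter macros followed by the post-macro probes show that each $F_i$ maps every vertex of $T$ to within $\mathrm{poly}(N_{\rm src})\dr$ of $T$. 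This uses total-variation contraction as in Lemma~\ref{lem:robust-boolean-gate}, with a bounded number of gates (one macro), so the constant is polynomial.
\item By the defect identity \eqref{eq:app-defect}, $\eta\,\dist_\infty(s_i,T)$ is bounded by the invariance residual. Since $\eta^{-1}=O(N_F^2)$, this gives $\dist_\infty(S,T)\le\mathrm{poly}\cdot\dr$.
\end{enumerate}
Step (d) bounds each $F_i$-image of a vertex directly, and the defect identity is exact, so the extraction exponent comes out as $\alpha_{\rm C5}=1$. The Łojasiewicz exponent $\alphaor$ only appears downstream, in Proposition~\ref{prop:analytic-source-rounding}. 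The polynomial sizes follow from $q=O((m+n)^2)$, $L=O(\log(m+n))$ and Lemma~\ref{lem:analytic-compiler}.

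I expect the main obstacle to be step (a), together with the projection in (c). Unlike the identity table, the anchor rows are not a permutation pattern, so basis rigidity cannot identify latent states with anchors. The argument must instead rely only on the affine-rank and conditioning bound, and must show that leakage out of the synchronized cell after each macro stays controlled by the declared phase suffixes $w\pl T^r$.
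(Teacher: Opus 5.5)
Your proposal is correct and follows the paper's clause-by-clause verification, with (C5) obtained from the anchor-minor/Hoffman extraction of Theorem~\ref{thm:typed-local-shared}. Your remark that the orientation lemma is not needed for (C5), so that $\alpha_{\rm C5}=1$ by the linear bound \eqref{eq:typed-source-error}, is accurate: the paper's proof lists that lemma among the (C5) ingredients, but it enters only downstream, through the source gap of Proposition~\ref{prop:analytic-source-rounding}.

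Step (d) needs a correction. The anchor roots are not close to basis states, so the total-variation argument of Lemma~\ref{lem:robust-boolean-gate}, which relies on basis rigidity, does not apply. Instead, the post-macro anchor responses and the phase-leakage bound $W\ell\le 2L\dr\one$ must be transported to the latent vertices by multiplying with $W^{-1}$ under the anchor-minor conditioning, exactly as you do in step (c). That same $W^{-1}$ argument also handles the row-sum residual, which you claimed in (b) vanishes automatically but does not.
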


\begin{proof}
Totality and reset are Lemma~\ref{lem:analytic-compiler}; universal endpoint
soundness and the $kq$ support lower bound are the typed-interface argument
preceding Theorem~\ref{thm:exact-local-typed}, giving (C1)--(C2).  The
explicit $K$-state independent static realization and its witness are proved
immediately after that theorem, giving (C3).  The execute branch is
payload-parametric, and the post-macro coordinate probes identify the fixed
$F_i$ on the anchor basis, giving the source-faithful part of (C4).  The
anchor/Hoffman estimates and fixed orientation lemma
(Lemma~\ref{lem:analytic-fixed-orientation}) give the explicit extraction
bound in (C5).  The product controller changes the size only by the polynomial
factor $q=O((m+n)^2)$, which is absorbed into the polynomial $p$ and the
universal output-gap exponent.  The size and rational bounds follow from
Lemma~\ref{lem:analytic-compiler}.  Thus the de Bruijn cycle is only one
concrete realization of the abstract endpoint interface.
\end{proof}

\begin{theorem}[calibration and payload rank sum]
Suppose $q$ sectors have distinct exact signatures under $r$ Boolean suffix
words, and the payload table in sector $\tau$ is $A_\tau$.  Every normalized
stochastic realization satisfies
\begin{equation}
r\ge\lceil\log_2q\rceil,
\qquad d\ge\sum_\tau\operatorname{rank}(A_\tau).\label{eq:rank-sum}
\end{equation}
The de Bruijn construction attains the bit bound for $q=2^r$.  For the
rank-tight product experiment, the dimensions for payload alone, phase alone,
and the combined experiment are exactly $k,q,kq$.
\end{theorem}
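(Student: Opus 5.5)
The plan has four parts, taken in order: the counting bound on $r$, the rank-sum bound on $d$, attainment by the de Bruijn cycle, and the three exact dimensions.

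\textbf{Signature bound.} Each sector is assigned its exact Boolean signature in $\{0,1\}^r$, one bit per suffix word. Distinctness makes this assignment injective, so $q\le 2^r$ and hence $r\ge\lceil\log_2 q\rceil$.

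\textbf{Rank-sum bound.} I would rerun the endpoint-separation argument of Lemma~\ref{lem:endpoint-block}.
\begin{itemize}
\item Fix any normalized realization. For each sector $\tau$, let $S_\tau$ be the set of latent states carrying positive mass from some root of that sector.
\item Because the targets are exactly $0$ or $1$, every state in $S_\tau$ must reproduce sector $\tau$'s signature on each of the $r$ suffixes. A positive-mass state with an effect value strictly between the targets would create a defect.
\item Two sectors with different signatures therefore cannot share a state, so the sets $S_\tau$ are pairwise disjoint.
\item Restricted to sector $\tau$, the payload table factors as $A_\tau=W_\tau E_\tau$. Here $W_\tau$ is the root-distribution block, supported on the columns $S_\tau$, and $E_\tau$ is the matrix of pulled-back probe effects on $S_\tau$.
\item This gives $\operatorname{rank}(A_\tau)\le|S_\tau|$, and summing over disjoint cells yields $d\ge\sum_\tau|S_\tau|\ge\sum_\tau\operatorname{rank}(A_\tau)$.
\end{itemize}
I expect this disjointness step to be the main obstacle. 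The signature constraint must be applied to the pulled-back suffix effects $M_v e$ on each positive-mass state, not merely to averages over a root's distribution. The argument works because each of those effects lies in $[0,1]$ and the target is an endpoint.

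\textbf{Attainment by de Bruijn.} For $q=2^r$, equation~\eqref{eq:phase-code} shows that a binary de Bruijn cycle of order $r$ on $(Q,\tau)$ gives distinct length-$r$ window signatures under the suffixes $\varepsilon,\pl{T},\ldots,\pl{T}^{r-1}$. This uses exactly $r=\log_2 q$ words, so the bit bound is met with equality.

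\textbf{The three exact dimensions.}
\begin{itemize}
\item \emph{Payload alone.} Within one sector the payload rows have affine rank $k-1$ and augmented linear rank $k$ by (C3). The rank-sum bound with a single sector gives $d\ge k$, and the anchor simplex $T_w$ of (C4) realizes the table on $k$ states.
\item \emph{Phase alone.} The payload is trivial, so each sector table has rank $1$ (the all-ones effect). The rank-sum bound gives $d\ge q$, and the cycle $\tau$ on $Q$ with effect $\varphi_Q$ realizes it on $q$ states.
\item \emph{Combined.} The rank-sum bound with $\operatorname{rank}(A_\tau)=k$ in every sector gives $d\ge kq$. The product realization on $[k]\times Q$ from Section~\ref{sec:literal-parser} attains this, as already recorded in Lemma~\ref{lem:endpoint-block} and Theorem~\ref{thm:generic-compiler-transfer}.
\end{itemize}
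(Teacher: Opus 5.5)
Your proposal is correct and follows the paper's own argument. You pull the $r$ suffix words back to effects in $[0,1]^d$, use the exact $0/1$ targets to confine each sector's root supports to pairwise-disjoint cells, bound each cell's size below by $\operatorname{rank}(A_\tau)$ through the restricted factorization, and cite the de Bruijn cycle and product witnesses for attainment. The only differences are details the paper leaves implicit: you read $q\le 2^r$ directly off the injectivity of the signature map into $\{0,1\}^r$, and you use the normalization column $\one$ to give each phase-only cell rank one.
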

\begin{proof}
Pull back the $r$ calibration words to effects.  Exact zero/one expectations
force each root support into the cell with its Boolean signature; distinct
signatures give disjoint cells and $q\le2^r$.  Restricting the payload
factorization to one cell gives $|Z_\tau|\ge\operatorname{rank}(A_\tau)$.
The stated constructions attain the three ablations.  The displayed ordinary ranks give lower bounds on the normalized carrier widths.
\end{proof}

\begin{theorem}[exact local typed complexity]
\label{thm:exact-local-typed}
For $K=kq$ and every $w\in\mathcal L$, including the empty word,
\begin{equation}
C_{\rm loc}(\{U_w\})=K.\label{eq:classical-local}
\end{equation}
\end{theorem}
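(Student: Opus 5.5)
We must prove that every legal menu $U_w$ needs at least $K=kq$ states and can be realized with exactly $K$ states. Since $C_{\rm loc}$ is a supremum over menus, both bounds have to hold for each $U_w$ separately, including $U_\varepsilon$.

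\textbf{Lower bound.} We fix $w\in\mathcal L$ and an arbitrary normalized realization of $U_w$, and apply Lemma~\ref{lem:endpoint-block} to the declared words already present in $U_w$. The phase suffixes $\pl{T}^r$, $0\le r<L$, evaluated at a root $(a_\ell,b)$, return the de Bruijn signature $\chi(b)$ of \eqref{eq:phase-code}. These responses are exactly $0$ or $1$, so every latent state in the support of the root must reproduce that signature. Because $\chi$ is injective on $Q$, roots with distinct controller coordinate $b$ have disjoint supports. This yields $q$ nonempty, pairwise disjoint cells.

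Inside the cell of a fixed $b$, the probes $\pl{P}d_j\#$ read the anchor rows $(a_\ell)_j$. By (C3) these $k$ rows have affine rank $k-1$. A convex factorization through $d_b$ latent states has affine dimension at most $d_b-1$, so $d_b\ge k$. Summing over the disjoint cells gives $d\ge kq$. This argument uses only the calibration block, which every $U_w$ contains, so it applies equally to $U_\varepsilon$.

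\textbf{Upper bound.} For $w=i_1\cdots i_t\ne\varepsilon$, we invoke Lemma~\ref{lem:interior-anchor-extension} with $s=s_w\in P$; this gives a $k$-vertex simplex $T_w\subseteq P$ containing the anchors and $s_w$. We use the slot maps $H^{(w)}(x)=\rho x+(1-\rho)s_w$. These map $T_w$ into itself by convexity, and their $t$-fold composition equals $F_w$ by \eqref{eq:word-barycenter}.

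The witness lives on the latent states $[k]\times Q$.
\begin{itemize}
\item Root $(a_\ell,b)$ starts at the barycentric coordinates of $a_\ell$ in $T_w$, placed in sector $b$.
\item Each $G_i$ with $i$ occurring in $w$ is the barycentric matrix of $H^{(w)}$ on the vertices of $T_w$. Repeated letters receive the same map, which is consistent because all slots share a single map.
\item The readout $B_j$ uses $\lambda_j(a)=(v_a)_j\in[0,1]$, which is valid since $T_w\subseteq P\subseteq[0,1]^n$.
\item The letters $\pl{T}$, $\pl{P}$, $\pl0$, $\pl1$, the defaults, and the effect are those of the parser.
\end{itemize}
By Lemma~\ref{lem:analytic-compiler}, the macro $w$ applies $F_w$ to the payload and resets only the code coordinate. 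Hence:
\begin{itemize}
\item $w\pl{T}^r$ returns $\varphi_Q(\tau^r\operatorname{reset}(b))$;
\item $w\pl{P}d_j\#$ returns $(F_w(a_\ell))_j$;
\item the pre-macro suffixes and probes return $\varphi_Q(\tau^r b)$ and $(a_\ell)_j$.
\end{itemize}
These match \eqref{eq:parser-targets}. For $U_\varepsilon$ we use any anchor-containing $T_\varepsilon$, and no payload map is ever read.

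\textbf{Main obstacle.} The delicate step is the lower bound: for a local menu the realization is arbitrary, with no parser structure. We must check that the exact $0/1$ phase responses genuinely force every positive-mass state into a single signature cell. The argument is that an effect pulled back through a stochastic matrix takes values in $[0,1]$, so a $0$ or $1$ expectation forces every supporting state to take that same extreme value. We must also check that the anchor probes are declared from every sector $b\in Q$, so that all $q$ cells are charged. Both facts are exactly the content of (C2)--(C3) together with Lemma~\ref{lem:endpoint-block}. Combining the two bounds gives \eqref{eq:classical-local}.
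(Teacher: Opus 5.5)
Your proposal is correct and follows the paper's proof essentially step for step. The exact $0/1$ de Bruijn phase responses give $q$ disjoint support cells, and the rank-$k$ anchor-probe table forces $k$ states per cell. The upper bound uses the interior-anchor simplex containing $s_w$ with the single contraction $x\mapsto\rho x+(1-\rho)s_w$ on every slot, whose $t$-fold iterate is $F_w$.

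The remaining differences are cosmetic. You use affine rank where the paper uses the linear rank of $A_{\rm resp}$; these are equivalent here because the anchor rows sum to one. The simplex from Lemma~\ref{lem:interior-anchor-extension} should also be transported by $\Phi$ into normalized coordinates. Finally, the macros compose the chosen payload blocks to $(\rho x+(1-\rho)s_w)^{\circ t}=F_w$, rather than literally applying the source maps $F_i$.
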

\begin{proof}
In any exact realization pull the one effect back through $T^r$.  An
expectation equal to zero or one forces every latent state in the encoding
support to have that same effect value.  The distinct codes \eqref{eq:phase-code} therefore
put different root sectors in disjoint latent support sets $Z_b$.  In one
sector, pulling the effect back through $\pl{P}d_j\#$ gives the response matrix
$A_{\rm resp}$, of rank $k$, so $|Z_b|\ge k$.  Summing gives $d\ge kq$.

For the upper bound, suppose first that $w$ contains $t>0$ control macros and
let $s_w$ be the normalized barycenter in \eqref{eq:normalized-source-maps}.  Let
$T_{s_w}\subseteq P=\widehat P_F$ be the image under $\Phi$ of the simplex supplied
by Lemma~\ref{lem:interior-anchor-extension}; it contains the normalized anchors
$A_{\rm src}=\Phi(\{a_0,\ldots,a_D\})$ and $s_w$.  Put one barycentric copy in every parser sector.  Parser letters preserve barycentric weights.  The phase transitions, terminal effect, and readout wiring are fixed by the compiler; this local witness chooses only the payload blocks for the code rows and the barycentric coordinates of $T_{s_w}$.
On every valid branch associated with a source code and constrained by this local menu, the macro executes the affine map
\begin{equation}
                   G_w(x)=\rho x+(1-\rho)s_w.\label{eq:contract-map}
\end{equation}
It preserves $T_{s_w}$, and $G_w^{\circ t}=F_w$ for $t=|w|$.  Thus the primitive spelling of
$w$, including repeated source labels, has the correct aggregate action.
The de Bruijn cycle and the two-state stochastic payload readout described
above are normalized transitions on the same $kq$ states.  Unconstrained
branches are totalized arbitrarily.  For the empty word use identity and any
anchor simplex.  The resulting tuple realizes all queries of the menu simultaneously.
\end{proof}

If chronology is erased and every word $u$ receives an independent static
effect $f_u$, the union response table still has minimum exactly $K$.  The
rank-sum theorem gives the lower bound.  For the upper bound, encode the $K$
roots as basis states and take each response table column as $f_u$.  Hence the
target hardness below is caused by requiring $f_u=M_u e$ through one
primitive action, not by a hard static width-$K$ factorization.

The formula is a uniform structural argument for infinitely many words:
there is one empty template and one nonempty template parametrized by the
rational centre $s_w$, whose bit length is polynomial in $|w|$.

The exact proof based on support cells above uses target endpoints zero and one.  If
each endpoint response is approximated within $\dr<1/2$, rounding a
pulled-back effect at $1/2$ misclassifies at most $2\dr$ encoding mass per
bit, hence at most $2L\dr$ over an $L$-bit phase code.  This endpoint concentration is the basis of the approximate extraction below.

\begin{remark}[Phase codes and the compiler interface]
The cyclic de Bruijn phase code instantiates the parser interface. More
generally, any finite phase transition/readout pair whose $L$-step binary
signature is injective may replace it; for
example, a shift-register phase graph with an injective bit signature gives
the same exact endpoint separation, payload rank bound, and invariant-simplex
transfer.

The five compiler conditions have distinct roles. Total reset prevents
malformed prefixes from carrying phase information into a later macro;
endpoint separation creates disjoint charged cells; the typed payload block
fixes the rank-tight width; source-faithful post-macro probes identify the
affine maps; and quantitative extraction transfers response error to source
geometry. These roles explain both the exact closure theorem and the robust
gap estimate.
\end{remark}
\section{Finite-core closure and quantitative transfer}
\label{sec:finite-core-upper-bound}\label{sec:finite-core}
The regular construction separates two uses of a finite core. Exact calibration
identifies the primitive affine maps and closes under all legal compositions.
Quantitative extraction uses only the finitely many calibration and one-macro
queries; its lower bound therefore also applies to the full language.

\subsection{The core and exact closure}
Write $\omega_i=\pl{P}c_i\#$ for the complete control macro associated with source
action $i\in[m]$, and define the \emph{finite core}
\begin{equation}
\mathcal{C}(I)
=
U_\varepsilon\cup\bigcup_{i=1}^m U_{\omega_i}.
\label{eq:finite-core}
\end{equation}
This explicit core comprises at most $2(m+1)(L+n)$ distinct words, each of
length $O(L+\log(m+n))$.

For a menu family $\mathcal U$, write $\operatorname{Exp}(\mathcal U)$ for
its declared root--word pairs.  The finite-core and full-family pair sets are
\[
\mathcal D_{\rm core}:=\operatorname{Exp}(\mathcal C(I)),\qquad
\mathcal D_{\rm reg}:=\operatorname{Exp}\!\left(\bigcup_{w\in\mathcal L}U_w\right),
\]
so $\mathcal D_{\rm core}\subseteq\mathcal D_{\rm reg}$.

The core records the data needed for chronological closure. The base menu
$U_\varepsilon$ contains the anchor responses and parser signatures, while
each one-macro menu $U_{\omega_i}$ records the image of every anchor under
the source map $F_i$. Since the anchors form an affine basis, these data
determine the source maps. At the rank-tight budget $K=kq$, the endpoint
lower bound leaves no latent states outside the $q$ charged cells, allowing
one-step invariance to determine responses to all legal compositions.

\begin{theorem}[finite-core closure]
\label{thm:finite-core-closure}
For a compiled instance and the threshold $K=kq$, the following are
equivalent:
\begin{enumerate}[label=(\roman*)]
\item one normalized chronological shared realization of dimension at most
$K$ realizes $U_w$ for every $w\in\mathcal L$;
\item one such realization realizes the finite core $\mathcal C(I)$.
\end{enumerate}
In particular, the infinite universal quantifier over the regular language is
not part of the rank-tight decision problem: it is replaced by the explicitly
listed core of polynomial size.
\end{theorem}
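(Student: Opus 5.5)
The implication (i)$\Rightarrow$(ii) is immediate. The core $\mathcal C(I)$ consists of the menus $U_\varepsilon$ and $U_{\omega_i}$, and each $\omega_i=\pl Pc_i\#$ is a legal word of $\mathcal L$ of length one. Hence $\mathcal D_{\rm core}\subseteq\mathcal D_{\rm reg}$, and any realization of every $U_w$ restricts to a realization of the core. All the work is in (ii)$\Rightarrow$(i). I plan to extract an invariant simplex from a core realization, using only core data, and then rebuild a full-language realization from that simplex via Theorem~\ref{thm:generic-compiler-transfer}. I will not try to show that the given core realization itself already answers every $U_w$ correctly, although that will come out along the way in the saturated cell.

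\textbf{Step 1: saturated cells and a recovered simplex.} Take a normalized realization $\mathcal R$ of dimension $d\le K$ that fits $\mathcal C(I)$ exactly. The phase suffixes $\pl T^r$, $0\le r<L$, all belong to $U_\varepsilon$, so the endpoint-separation part of Lemma~\ref{lem:endpoint-block} applies with only core data. The de Bruijn signatures \eqref{eq:phase-code} are injective, which gives $q$ pairwise-disjoint support cells. The probes $\pl Pd_j\#$ are also in $U_\varepsilon$, so each cell carries $k$ affinely independent anchor rows and has at least $k$ states. Since $d\le kq$, every cell has exactly $k$ states and no latent state lies outside the cells. Now fix the cell of the synchronized code-reset state in a phase $\phi$. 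Its $k$ latent states have pulled-back probe rows $v_1,\dots,v_k\in[0,1]^n$, and the anchor rows are convex combinations of these. Because the anchors span an affine space of dimension $k-1$, the $v_a$ are affinely independent and $T:=\operatorname{conv}\{v_a\}$ is a $(k-1)$-simplex containing $A_{\rm src}$. It lies in $\operatorname{aff}(S)\cap[0,1]^n$, and after the slack-embedding identity $\operatorname{aff}(\widehat S_F)\cap\Delta_{f+1}=\widehat P_F$ this gives $T\subseteq P$; I still need to check that the probe rows are stochastic.

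\textbf{Step 2: one-step invariance from the core.} The menu $U_{\omega_i}$ declares $\omega_i\pl T^r$ and $\omega_i\pl Pd_j\#$. The phase responses $\varphi_Q(\tau^r\operatorname{reset}(b))$ are exact $0/1$ values, so $\omega_i$ sends all mass from the anchor roots into the synchronized cell of the same phase. The probe responses are $(F_i(a_\ell))_j$, and the latent rows of the synchronized cell are $v_a$. So the affine map that $M_{\omega_i}$ induces on barycentric coordinates of the cell takes each anchor $a_\ell$ to $F_i(a_\ell)$. The anchors form an affine basis, so this map coincides with $F_i$ on all of $\operatorname{aff}(T)$. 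The image of each vertex $v_a$ is a probability vector over the same cell, hence lies in $T$. This gives $F_i(T)\subseteq T$ for every $i$, just as in the vertex argument of Theorem~\ref{thm:rank-tight-closure}.

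\textbf{Step 3: closure.} For the contraction sources, invariance forces $s_i\in T$, since iterating $F_i$ on $T$ converges to its fixed point, and so $S\subseteq T\subseteq P$. By the exact equivalence in Theorem~\ref{thm:generic-compiler-transfer}, copying barycentric coordinates of $T$ into every phase cell and using (C4) gives a $K$-state realization of every $U_w$. Concretely, Lemma~\ref{lem:analytic-compiler} shows that the macro product applies $F_{i_t}\circ\cdots\circ F_{i_1}=F_w$, and the post-macro signatures are the reset signatures. This is (i).

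\textbf{Main obstacle.} The hard part is Step 2: showing that mass leaving an anchor root under $\omega_i$ really lands in the synchronized cell of the same phase and nowhere else. This rests on the saturation in Step 1, which leaves no spare states, together with the exact $0/1$ post-macro phase responses. A second point to handle is that the candidate has no parser structure (Remark~\ref{rem:black-box}), so the ``synchronized cell'' has to be defined purely by its signature. I would settle this first by showing that the cell reached by $\omega_i$ does not depend on $i$, since all of the $\omega_i$ share the same post-macro phase signature.
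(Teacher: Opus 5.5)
Your proposal is correct, and Steps~1--2 follow the paper's argument. Those steps are: saturation of the $q$ signature cells at $d\le kq$; extraction of $T$ from the code-reset cell of a phase; and identification of the induced block with $F_i$ on the anchor basis through the core words $\omega_i\pl{T}^r$ and $\omega_i\pl{P}d_j\#$.

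The divergence is in the last step. The paper keeps the given core realization. The first macro sends each root $(a_\ell,b)$ into the synchronized cell of $\operatorname{reset}(b)$, and all later macros stay in that cell. Induction over macros using $F_w=F_{i_t}\circ\cdots\circ F_{i_1}$ then shows that the same encodings and primitive maps answer every $U_w$. You instead discard the candidate once $T$ is extracted and rebuild the product witness $T\times[q]$ via Theorem~\ref{thm:generic-compiler-transfer} and Lemma~\ref{lem:analytic-compiler}.

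For the existential equivalence as stated, your route suffices, and it hands the multi-macro bookkeeping to an already verified construction. The paper's route proves more: a \emph{fixed} realization of dimension at most $K$ fits all $U_w$ if and only if it fits $\mathcal C(I)$. That per-realization form is how the result is restated in Theorem~\ref{thm:finite-core-membership}(i), and your version would need the induction you chose to skip in order to recover it.

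Three small points. First, landing in the synchronized cell already follows from the exact $0/1$ post-macro signatures alone, because the cell is defined by its signature; saturation is what makes the restricted block a $k\times k$ stochastic matrix on affinely independent rows. Second, to conclude that every vertex state, and not only the anchor mixtures, maps into the cell, use that the anchor encoding matrix $W$ is invertible, so each of its columns has a positive entry. Third, the same invertibility, together with $W^{-1}\one=\one$, settles your pending check: each $v_a$ is an affine combination of anchor rows, so its coordinates sum to one, which gives $T\subseteq P$.
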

\begin{proof}
The implication (i)$\Rightarrow$(ii) is immediate.  For (ii)$\Rightarrow$(i),
Theorem~\ref{thm:exact-local-typed} and the calibration suffixes force every
phase cell to contain at least $k$ latent states.  Since the total dimension is
at most $K=kq$, all $q$ cells contain exactly $k$ states.  Restricting the anchor and payload responses in each synchronized
phase cell (the code-reset state) gives a rank-tight simplex $T\subseteq P$.
The words $\omega_i\pl{T}^r$ and $\omega_i\pl{P}d_j\#$ are explicitly among
$U_{\omega_i}$. A macro started from a synchronized cell preserves its phase
and returns to that same synchronized cell, so these core words force the
induced affine map $H_i$ to agree with $F_i$ on the $k$ affinely independent
anchors.  Because the macro is a product of row-stochastic maps,
$H_i$ maps the latent simplex $T$ into itself.  Agreement on the affine basis
therefore gives $F_i(T)\subseteq T$ for every $i$.  Every legal word is $w=\omega_{i_1}\cdots \omega_{i_t}$, so its induced
map is $F_w=F_{i_t}\circ\cdots\circ F_{i_1}$ and preserves $T$ by induction.
For an initial endpoint state $b=(\phi,z)$, the first macro maps to
$\operatorname{reset}(b)=(\phi,\varepsilon_{\rm code})$; the reset-aware target definition records this signature, and all later macros
remain in the same synchronized phase cell. The same barycentric encodings and
primitive maps therefore realize every response in $U_w$, including $t=0$.
\end{proof}

\subsection{Existential-real encoding}
\begin{theorem}[Finite-core closure and algebraic membership]
\label{thm:finite-core-membership}
For the compiler of Theorem~\ref{thm:headline} and threshold $K=kq$:
\begin{enumerate}[label=(\roman*)]
\item a normalized chronological shared realization of dimension at most $K$
realizes every menu $U_w$, $w\in\mathcal{L}$, if and only if it realizes
the finite core $\mathcal{C}(I)$;

\item for any polynomially bounded dimension $d$ and rational tolerance
$\epsilon\ge0$, deciding the existence of a shared realization of dimension $d$
with pointwise defect at most $\epsilon$ on $\mathcal{C}(I)$ belongs to
$\exists\mathbb{R}$;

\item the corresponding finite-core problem with dimension at most $K$
also belongs to $\exists\mathbb{R}$ when $K$ is polynomially bounded.
\end{enumerate}
\end{theorem}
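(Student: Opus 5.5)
Item~(i) is exactly Theorem~\ref{thm:finite-core-closure}, so the plan is to cite it and only check that its hypotheses hold here. The compiler of Theorem~\ref{thm:headline} is the five-letter parser of Section~\ref{sec:literal-parser}. By Proposition~\ref{prop:five-letter-generic} it satisfies (C1)--(C4), with $q=|Q|$ and threshold $K=kq$, which is what the closure argument needs.

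The argument then runs as follows. The forward direction is trivial, since $\mathcal D_{\rm core}\subseteq\mathcal D_{\rm reg}$. For the converse:
\begin{itemize}
\item Endpoint separation (Lemma~\ref{lem:endpoint-block}) saturates the $q$ cells with exactly $k$ states each.
\item The one-macro words $\omega_i\pl T^r$ and $\omega_i\pl P d_j\#$ identify the induced affine map with $F_i$ on the anchor basis.
\item Stochasticity gives $F_i(T)\subseteq T$.
\item Induction over legal words $w=\omega_{i_1}\cdots\omega_{i_t}$ then recovers every response in $U_w$.
\end{itemize}

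For~(ii), I would repeat the encoding of Lemma~\ref{lem:finite-menu-etr-membership} with the dimension $d$ fixed. The variables are:
\begin{itemize}
\item a row probability vector $p_h\in\Delta_d$ for each root of $\mathcal C(I)$;
\item five row-stochastic matrices $M_\gamma$, $\gamma\in\Gamma$;
\item an effect $e\in[0,1]^d$;
\item an auxiliary row vector $x_{h,v}$ for every root $h$ and every prefix $v$ of a core word.
\end{itemize}
The constraints are
\[
x_{h,\varepsilon}=p_h,\qquad x_{h,v\gamma}=x_{h,v}M_\gamma,\qquad -\epsilon\le x_{h,u}e-y(h,u)\le\epsilon,
\]
together with nonnegativity, the row-sum equations, and the effect bounds.

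All of this has polynomial size. The core has at most $2(m+1)(L+n)$ words of length $O(L+\log(m+n))$, and there are $kq$ roots. By Lemma~\ref{lem:analytic-compiler}, the targets $(F_{\omega_i}(a_\ell))_j$ and the phase bits are rationals of polynomial bit length. Every constraint has degree at most two with rational coefficients, so the whole system is an existential sentence of polynomial size. The tolerance $\epsilon$ enters only linearly, so $\epsilon=0$ is covered as well.

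For~(iii), I would take the disjunction of the formulas from~(ii) over $1\le d\le K$. This is polynomial because $K=kq$ is polynomially bounded; alternatively one can pad to dimension $K$. By~(i) with $\epsilon=0$, the same formula decides exact rank-tight realizability of the full regular family.

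The only delicate step is confirming that the hypotheses of Theorem~\ref{thm:finite-core-closure} hold verbatim for this compiler. In particular, the reset-aware targets $\varphi_Q(\tau^r\operatorname{reset}(b))$ must be genuinely listed in $U_{\omega_i}$, so that phase synchronization after the first macro is forced by core words rather than assumed. Everything else is bookkeeping of sizes.
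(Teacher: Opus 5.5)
Your proposal is correct and follows the paper's proof essentially step for step. Item (i) is reduced to Theorem~\ref{thm:finite-core-closure}: saturated cells with exactly $k$ states, one-macro probes identifying $F_i$ on the anchor basis, invariance, and induction over macros. Item (ii) is the same degree-two prefix-variable encoding as \eqref{eq:etrf-formula}, and item (iii) is the finite disjunction over $1\le d\le K$. Your padding alternative and your check that the reset-aware phase targets are listed in $U_{\omega_i}$ are harmless additions.
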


\begin{proof}
Theorem~\ref{thm:finite-core-closure} in Section~\ref{sec:finite-core}
proves the exact closure assertion~(i). A core realization of dimension at most
$K$ must have exactly $k$ states in each endpoint cell. Its synchronized
payload cell determines a simplex $T$, and the one-macro probes enforce
$F_i(T)\subseteq T$ for every $i\in[m]$. Closure under composition then
reproduces every menu $U_w$ in the full family. The converse is immediate
because every pair in $\mathcal D_{\rm core}$ is declared in the full family.

The explicit algebraic encoding below translates the conditions for each
polynomially bounded dimension $d$ into a polynomial-size existential-real
formula of degree at most two, establishing~(ii)
\citep{canny1988pspace,renegar1992real,basu2006algorithms,schaefer2017fixedpoints}.
Taking the finite disjunction over $1\le d\le K$ preserves polynomial formula
size, proving~(iii).
\end{proof}

Let $\mathcal{V}$ denote the prefix closure of the finite core $\mathcal{C}(I)$.
For a polynomially bounded carrier dimension $d$, introduce root vectors
$p_h\in\Delta_d$, a row-stochastic matrix $M_a$ for each $a\in\Gamma$, a terminal
effect $e\in[0,1]^d$, and intermediate state vectors $x_{h,v}\in\Delta_d$ for
each $h\in H$ and $v\in\mathcal{V}$. Impose
\begin{equation}
x_{h,\varepsilon}=p_h,\qquad
x_{h,va}=x_{h,v}M_a,\qquad
y(h,u)-\epsilon\le x_{h,u}e\le y(h,u)+\epsilon.
\label{eq:etrf-formula}
\end{equation}
The initialization constraints apply to every root $h$; the propagation
constraints apply to every root $h$ and prefix $va\in\mathcal{V}$; and the
response constraints apply to every declared root--word pair in the core.
These form a finite conjunction of constraints on existentially quantified
real variables. Propagation and response constraints have degree at most
two, while normalization and stochasticity constraints are linear. Because
the total core word length is polynomial, the resulting sentence has
polynomial length. The prefix variables avoid expanding long matrix products
into high-degree polynomials.

For fixed $d$, the product of the root simplices, stochastic matrix
polytopes, and effect cube is compact, and the maximum core response error
is continuous. Hence its infimum is attained; in particular, infimum-zero
defect implies an exact witness. For rational $\epsilon>0$, the same encoding
expresses the finite-core error threshold. The formula uses only the
declared core constraints. When $\epsilon=0$, exact core feasibility within
budget $K$ extends to every menu $U_w$, $w\in\mathcal{L}$, by
Theorem~\ref{thm:finite-core-membership}(i).

\begin{remark}[Complexity landscape of the regular family]
\label{rem:finite-core-quantifiers}
Exact finite-core closure also places exact realization of the full regular
family in $\exists\mathbb{R}$. This upper bound complements the strong
bounded-rational $\mathsf{PromiseNP}$-hardness established in
Theorem~\ref{thm:headline}.
\end{remark}
\subsection{Recovering the source simplex from response error}

\begin{theorem}[typed locally optimal shared realization]
\label{thm:typed-local-shared}
A shared $K=kq$ state realization of all the menus \eqref{eq:menu-definition} exists if and only
if the source Intermediate Simplex instance has a $k$-vertex simplex
\begin{equation}
                         S\subseteq T\subseteq P.\label{eq:invariant-sandwich}
\end{equation}
The threshold problem is at least as hard as the source problem even under
the exact-local promise.  The interior-anchor lemma supplies the required
inverse-polynomial conditioning, so the same proof distinguishes zero
pointwise response-probability defect from defect at least
$N_{\rm out}^{-C_{\rm out}}$, where $N_{\rm out}$ is the compiled target
encoding length and $C_{\rm out}$ is a universal constant.
\end{theorem}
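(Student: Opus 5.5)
The proof splits into the exact equivalence and the quantitative gap transfer, and both reduce to results already stated.

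\emph{Exact equivalence.} We combine Theorem~\ref{thm:generic-compiler-transfer} with Proposition~\ref{prop:five-letter-generic}. The proposition certifies that the five-letter parser satisfies (C1)--(C5) for the normalized companion source $(\widehat S_F,\widehat P_F,\{F_i\})$. The transfer theorem then says that a shared $K=kq$ realization of $\bigcup_{w\in\mathcal L}U_w$ exists if and only if there is a $k$-vertex simplex $T$ with $S\subseteq T\subseteq P$ and $F_i(T)\subseteq T$ for all $i$. It remains to remove the invariance clause, which we do with the contraction form $F_i(x)=\rho x+\eta s_i$:
\begin{itemize}
\item If $S\subseteq T$ and $T$ is convex, then $F_i(x)$ is a convex combination of $x\in T$ and $s_i\in S\subseteq T$, so $F_i(T)\subseteq T$ automatically.
\item Conversely, invariance implies $S\subseteq T$ by iterating each contraction, as already noted in the compiler contract.
\end{itemize}
So \eqref{eq:invariant-sandwich} alone characterizes shared realizability. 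By \eqref{eq:phi-lipschitz}, $\Phi$ is an affine bijection onto the normalized polytope, so this sandwich is equivalent to the raw Intermediate Simplex condition $S_F\subseteq T\subseteq P_F$. Theorem~\ref{thm:source-sat-equivalence} then ties it to satisfiability. The exact-local promise holds by Theorem~\ref{thm:exact-local-typed}: every $U_w$ has local optimum $K$ whatever the source answer. The reduction is polynomial by Lemma~\ref{lem:analytic-compiler}, which gives hardness under that promise.

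\emph{Robust transfer.} The YES side is immediate: an exact sandwich gives zero defect. For the NO side we take a candidate realization with defect $\dr$ below an inverse-polynomial threshold $\delta_0$ and invoke (C5). This produces a $k$-vertex $T\subseteq P$ with
\[
\dist_\infty(S,T)\le P_{\rm C5}(N_{\rm src})\,\dr^{\alpha_{\rm C5}}.
\]
Proposition~\ref{prop:analytic-source-rounding} and the slack factor $C_{\rm slack}$ give $\gamma_{\rm src}\ge C_{\rm slack}^{-1}N_F^{-C_{\rm src}}$ on NO instances. Combining the two bounds yields
\[
\dr\ge\bigl(\gamma_{\rm src}/P_{\rm C5}(N_{\rm src})\bigr)^{1/\alpha_{\rm C5}}.
\]
Since $N_{\rm out}\le N_{\rm src}^{a_{\rm C5}}$ and $N_F\le N_{\rm src}$, the right-hand side is at least $N_{\rm out}^{-C_{\rm out}}$ for a universal $C_{\rm out}$. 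The constant $C_{\rm out}$ depends on $C_{\rm src}$, $b_{\rm C5}$, $a_{\rm C5}$, and $\alpha_{\rm C5}$ (so it absorbs $\alphaor=2^{-200}$), but not on the instance.

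\emph{Main obstacle.} The hard part is justifying (C5) concretely for the five-letter parser; the proposition asserts it from the anchor, Hoffman, and orientation estimates. I would rebuild it in four steps:
\begin{enumerate}
\item Round phase effects at $1/2$, as in Lemma~\ref{lem:approx-sector-allocation}, to get $q$ cells of exactly $k$ states, with leakage $O(L\dr)$.
\item Read the latent response rows of the synchronized cell on the probes $\pl{P}d_j\#$. These are $k$ points that approximately contain the anchors. Using the inverse-polynomial $\sigma_{\min}(\widetilde A)$ of Lemma~\ref{lem:interior-anchor-extension}, the core words $\omega_i\pl{P}d_j\#$ force the induced affine maps to agree with $F_i$ up to $\mathrm{poly}(N)\dr$.
\item The latent rows may lie slightly outside $P$. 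Project them into $\widehat P_F$ with the Hoffman modulus of Lemma~\ref{lem:pf-repair}, at polynomial loss, obtaining $T\subseteq P$ with invariance residual $\mathrm{poly}(N)\dr$.
\item The defect identity \eqref{eq:app-defect} converts this residual into $\dist_\infty(s_i,T)\le\eta^{-1}\mathrm{poly}(N)\dr$. Because $\eta^{-1}=O(N_F^2)$, this gives (C5) with $\alpha_{\rm C5}=1$.
\end{enumerate}
The exponent $\alphaor$ enters only through the source rounding. The delicate point is keeping the anchor inversion and leakage bookkeeping polynomial.
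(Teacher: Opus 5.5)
Your proposal is correct and is essentially the paper's proof. The exact direction is the same argument: product witness, cell count, then post-macro identification on the anchors. You import it through Theorem~\ref{thm:generic-compiler-transfer}, whose proof is that argument, and your remark that $S\subseteq T$ already forces $F_i(T)\subseteq T$ is what the paper uses implicitly. Your four-step rebuild of (C5) is precisely the paper's robust chain: phase rounding with an anchor-minor rank count, inversion of $A_{\rm resp}=WV+E$ through $W^{-1}$, Hoffman projection into $\widehat P_F$, and the defect identity giving the linear bound \eqref{eq:typed-source-error}.

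Treat that rebuild as mandatory rather than optional. The paper certifies (C5) for this parser in Proposition~\ref{prop:five-letter-generic} by exactly these estimates, so citing the proposition alone would be circular. Also, the final conversion to $N_{\rm out}^{-C_{\rm out}}$ needs $N_F$ (or $N_{\rm src}$) to be polynomially bounded by $N_{\rm out}$, which holds trivially. The inequality $N_{\rm out}\le N_{\rm src}^{a_{\rm C5}}$ that you quote points the wrong way for that step.
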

\begin{proof}
Equation \eqref{eq:invariant-sandwich} gives the product realization $T\times[q]$.  Conversely, a
$K$-state realization attains the local lower bound: every de Bruijn tag cell
contains exactly $k$ states.  In the empty parser cell, their $n$ pulled-back
coordinate effects form a simplex $T$.  The invertible anchor encoding
propagates normalization and every affine equation of $P$ from $A_{\rm src}$ to its
vertices, so $T\subseteq P$.  The post-word tag suffixes force the successors
under control macro $i$ into this same cell.  The post-word coordinate
suffixes identify the induced affine map with $F_i$ on the $k$ independent
anchors.  Hence $F_i(T)\subseteq T$, and iteration gives $s_i\in T$.

For robustness let $\dr$ be the maximum pointwise response error and round each
latent tag effect at $1/2$.  If a target bit is zero, Markov's inequality
bounds the encoding mass on effect values at least $1/2$ by $2\dr$; apply
the same argument to one minus the effect when the target bit is one.  A
union bound over the $L$ bits therefore leaves at most $2L\dr$ mass
outside the appropriate rounded code cell.

Let $\beta$ be the least singular value of a fixed $k$-column anchor minor.
Restrict the $k$ anchor encodings to one cell and renormalize them.  Their
coordinate table differs from the rank-$k$ anchor table by at most
$4(1+L)\dr$ entrywise, provided $2L\dr\le1/2$.  If the cell had fewer
than $k$ states, this would approximate the chosen minor by rank below $k$,
contradicting Eckart--Young whenever
\begin{equation}
 \dr<\frac{\beta}{8\sqrt{kn}(1+2L)}.\label{eq:typed-threshold}
\end{equation}
All $q=|Q|$ rounded cells partition the latent states.  The $kq$ budget
therefore gives exactly $k$ states per cell and leaves no unallocated state.

In a fixed synchronized cell write $A_{\rm resp}=WV+E$, where $W$ is the normalized restricted
anchor encoding and $V$ contains the pulled-back coordinate effects.  The
preceding restriction gives $\|E\|_\infty\le4(1+L)\dr$.  For the chosen
minor $I$, Weyl's inequality gives
$\sigma_{\min}(WV_I)\ge\beta/2$.  Since $W$ is row stochastic and the
entries of $V_I$ lie in $[0,1]$,
\begin{equation}
 \sigma_{\min}(W)\ge\frac{\beta}{2k},\qquad
 \sigma_{\min}(V_I)\ge\frac{\beta}{2\sqrt k}.\label{eq:typed-singular}
\end{equation}
Indeed the first inequality follows by dividing
$\sigma_{\min}(WV_I)$ by $\|V_I\|_2\le k$, and the second by
$\|W\|_2\le\sqrt k$.

Write the affine presentation of $P$ as
$P=\{x\ge0:\one^Tx=1, Hx=h\}$.  The anchor rows satisfy these equations.
Multiplying the normalization and $H$-residuals of $A_{\rm resp}=WV+E$ by $W^{-1}$,
then applying the rational Hoffman bound $H_P$ for the normalized polytope
$P=\widehat P_F$ (Lemma~\ref{lem:pf-repair} gives
$H_P\le C_{\rm H}D\,C$ after the slack embedding), moves every row $v_z$ of
$V$ to a point $\widehat v_z\in P$ by at most
\begin{equation}
 c_1H_Pk^{3/2}(1+L)\beta^{-1}\dr.\label{eq:typed-row-error}
\end{equation}
Let $T=\operatorname{conv}\{\widehat v_z:z\text{ lies in that synchronized cell}\}$.
By decreasing the extraction threshold once more so that the projection
perturbation is below half the selected anchor-minor margin, the projected
vertices remain affinely independent; hence $T$ is a $k$-vertex simplex.

For source macro $i$, let $\ell\in[0,1]^k$ be the vector whose coordinate
$\ell_z$ is the probability that state $z$ in that synchronized cell exits that cell.
The $k\times k$ matrix $W$ is row stochastic, and the post-word phase tests
give, coordinatewise, $W\ell\le2L\dr\one$ up to the preceding restriction
error.  Since $\|W^{-1}\|_2\le2k/\beta$, this implies
 $\|\ell\|_2\le4Lk^{3/2}\beta^{-1}\dr$.  Therefore
\begin{equation}
 \|\ell\|_\infty\le c_2Lk^{3/2}\beta^{-1}\dr.\label{eq:typed-leakage}
\end{equation}
The post-word coordinate tests identify the emitted rows after the macro.
Write $B_i$ for the resulting $k\times n$ anchor-response table after
renormalization inside the synchronized cell.  The coordinate equations and the
already bounded leakage give
\[
 \|B_i-WF_i(V)\|_\infty\le c_3n(1+L)\dr .
\]
Multiplication by $W^{-1}$ gives
$\|W^{-1}B_i-F_i(V)\|_\infty\le
2k c_3n(1+L)\beta^{-1}\dr$.
Each emitted row is a convex combination of rows of $V$, plus mass at most
$\|\ell\|_\infty$ sent outside the cell; since all response coordinates lie
in $[0,1]$, this contributes at most $2\|\ell\|_\infty$ in $\ell_\infty$ norm.
Projecting each row to $P$ using \eqref{eq:typed-row-error} and taking the convex hull therefore
proves
\begin{equation}
 \sup_{x\in T}\dist_\infty(F_i(x),T)
 \le c_4H_Pk^{3/2}n(1+L)\beta^{-2}\dr.\label{eq:typed-convex-error}
\end{equation}
Here the minor $I$ is the one selected by the Cauchy--Binet step above;
there are $k$ rows and at most $n$ response coordinates, so summing the
coordinate and leakage contributions only multiplies the displayed bound by
the explicit factor $k^{3/2}n$.
Convexity extends the vertex estimate to every $x\in T$.  Finally apply the
defect identity (\ref{eq:app-defect}).  The resulting $k$-simplex obeys
\begin{equation}
 \max_i\dist_\infty(s_i,T)
 \le C_0H_P\eta^{-1}k^{3/2}n(1+L)\beta^{-2}\dr.\label{eq:typed-source-error}
\end{equation}
All constants above are absolute.  The interior-anchor lemma gives the
polynomial anchor-minor bound.  For $P_F$, the box and epigraph inequalities
also give a direct clipping-and-raising repair map with polynomial Lipschitz
constant, which is the Hoffman bound used here.  Proposition~\ref{prop:analytic-source-rounding}
and \eqref{eq:typed-source-error} therefore prove the typed promise gap.
\end{proof}

Put $N_F=\max\{2,p+q_F\}$, so $D\le3N_F$, $k=D+1\le4N_F$, $n=f+1\le cN_F$,
and $L=O(\log N_F)\le N_F$.  The preceding proof and the source-rounding
proof permit the following conservative degree ledger.  The last column
records an admissible power of $N_F$ for each factor; none is claimed
optimal.  The first block lists the factors of the response-to-source
transfer \eqref{eq:typed-source-error} together with the change of
coordinates \eqref{eq:phi-lipschitz}; the second block lists the factors of
the source rounding in Proposition~\ref{prop:analytic-source-rounding}.
The interior-anchor estimates above give the concrete bound
$\beta^{-1}\le N_F^{24}$; we therefore set $b_\beta:=24$ in the ledger.

\begin{table}[htbp]
\centering
\footnotesize
\setlength{\tabcolsep}{3pt}
\begin{tabular}{@{}p{0.30\linewidth}p{0.50\linewidth}c@{}}
\toprule
Step & Bound used & Degree \\
\midrule
\multicolumn{3}{@{}l}{\emph{Response defect to normalized covering error, \eqref{eq:typed-source-error}}}\\
Hoffman modulus of $\widehat P_F$ & $H_{\widehat P_F}\le C_{\rm H}D\cdot C_{\rm slack}$ (Lemma~\ref{lem:pf-repair}) & $b_H=4$ \\
Anchor minor & $\beta^{-1}\le N_F^{24}$ (Lemma~\ref{lem:interior-anchor-extension}); enters squared & $b_\beta=24$ \\
Contraction & $\eta^{-1}=1024DQ_F$ & $b_\eta=2$ \\
Dimension sums & $k^{3/2}n(1+L)$ & $4$ \\
Slack embedding & $\es\le C_{\rm slack}\,\widehat\es$, $C_{\rm slack}=100(p+q_F+1)^3$, \eqref{eq:phi-lipschitz} & $3$ \\
\midrule
\multicolumn{3}{@{}l}{\emph{Source rounding, Proposition~\ref{prop:analytic-source-rounding}}}\\
Source basis &
$\|\widetilde X_0\|_2+\|\widetilde X_0^{-1}\|_2$ & $2$ \\
Candidate-vertex matrix & $\|G\|_2$ & $2$ \\
Affine/barycentric inversion &
$\sigma_{\min}(G)^{-1},\sigma_{\min}(\Lambda)^{-1}$ and
$\ell_\infty$ recovery & $6$ \\
\makecell[l]{Origin and allocation\\on the clause axis} & separation/leakage loss & $2$ \\
Cauchy--Binet selection & number of $3$-minors & $3$ \\
Selected block & inverse minor and $u_i$ normalization & $3$ \\
Cross-block allocation & selected-vertex leakage & $5$ \\
Planar normalization & $\ep/\es$ & $8$ \\
Coefficient/clause recovery & prefactor of $\ep^{\alphaor}$ in \eqref{eq:rounding-clause} & $20$ \\
Orientation exponent & $\ep\mapsto\ep^{\alphaor}$, $N_0=\lceil1/\alphaor\rceil$ & (multiplies by $N_0$) \\
\bottomrule
\end{tabular}
\caption{Explicit polynomial-degree propagation for the robust source gap.}
\label{tab:degree-ledger}
\end{table}

The ledger is assembled into one inequality chain.  Let $\dr$ be the
pointwise response defect of a normalized realization of dimension at most
$K=kq$ on the finite core, and suppose $\dr$ is below the threshold
\eqref{eq:typed-threshold}.  Write $\widehat\es$ for the covering error of
the recovered simplex in the normalized coordinates $\widehat P_F$, $\es$ for
the same quantity in the source coordinates $P_F$, $\ep$ for the planar
residual of a variable block, and $E_{\rm clause}$ for the resulting error in
a clause coordinate of $b$.  If $\beta^{-1}\le N_F^{b_\beta}$,
$H_{\widehat P_F}\le N_F^{b_H}$ and $\eta^{-1}\le N_F^{b_\eta}$, then for
absolute constants $c_a,c_b,c_c>0$,
\begin{equation}
\begin{aligned}
 \widehat\es&\le c_a\,N_F^{\,b_H+b_\eta+2b_\beta+4}\,\dr
 &&\text{by \eqref{eq:typed-source-error}},\\
 \es&\le C\,\widehat\es\le 100(N_F+1)^3\,\widehat\es
 &&\text{by \eqref{eq:phi-lipschitz}},\\
 \ep&\le c_b\,N_F^{8}\,\es
 &&\text{by the source-rounding proof},\\
 E_{\rm clause}&\le c_c\,N_F^{20}\,\ep^{\alphaor}+\es
 &&\text{by \eqref{eq:rounding-clause}}.
\end{aligned}\label{eq:ledger-chain}
\end{equation}
The clause slack to be beaten is $1/(16p)\ge1/(16N_F)$, and
$N_F^{-4}<1/(16N_F)$ for $N_F\ge3$.  It therefore suffices that each of the
two terms in the last line is at most $N_F^{-4}/2$; the single value $N_F=2$
is among the finitely many small cases covered by the constant
$c_{\rm abs}$ below.  The second term satisfies this once $\es\le N_F^{-4}/2$.
For the first term,
\[
 c_cN_F^{20}\ep^{\alphaor}\le\tfrac12N_F^{-4}
 \iff
 \ep\le(2c_c)^{-1/\alphaor}N_F^{-24/\alphaor},
\]
and by the third line this holds once
$\es\le c_b^{-1}(2c_c)^{-1/\alphaor}N_F^{-8-24/\alphaor}$.  Let
$c_{\rm abs}$ be one fixed integer large enough that
$2^{-c_{\rm abs}}$ is below every dimension-free coefficient appearing here
($c_b^{-1}$, $(2c_c)^{-1}$, the constant $c_a'$ defined after
\eqref{eq:typed-exponents}, the thresholds $\delta_*,c_3,c_5,c_{11},c_{18}$,
and the determinant threshold $c_6$) and below the covering defect
$g_{D+1}(S_F,P_F)$ for the finitely many unsatisfiable formulas with
$N_F=2$, which is positive by compactness.  Since $N_F\ge2$ and
$1/\alphaor\le N_0$, the explicit degree
\begin{equation}
 C_{\rm src}=8+c_{\rm abs}+N_0(24+c_{\rm abs})\label{eq:degree-ledger-constant}
\end{equation}
makes $\es\le N_F^{-C_{\rm src}}$ sufficient for $E_{\rm clause}\le N_F^{-4}<1/(16p)$,
which is the threshold used in Proposition~\ref{prop:analytic-source-rounding}.
Here $N_0=2^{200}$, and every polynomial loss in the proof is accounted for by
a row of the ledger.

For the typed transfer, define
\begin{equation}
B_{\rm typed}=b_H+b_\eta+2b_\beta+7,\qquad
C_{\rm typed}=\max\{b_\beta+4,\ C_{\rm src}+B_{\rm typed}\},
\label{eq:typed-exponents}
\end{equation}
where the summand $7=4+3$ collects the dimension sums and the slack
embedding.  The first two lines of \eqref{eq:ledger-chain} give
$\es\le c_a'N_F^{B_{\rm typed}}\dr$ with the absolute constant
$c_a'=100(3/2)^3c_a$, which is one of the coefficients absorbed by
$c_{\rm abs}$, and $\dr\le N_F^{-(b_\beta+4)}$
implies the threshold \eqref{eq:typed-threshold}.  Hence every structured NO
instance has pointwise typed response defect at least
$N_F^{-C_{\rm typed}}$.

\subsection{Defect transfer and the regular-family classification}
For the compiled regular family, let $J_K^{\rm reg}$ be the infimum, over
normalized shared realizations of dimension at most $K$, of the supremum
pointwise error over the full declared family. The finite-core defect
$J_K^{\rm core}$ is Definition~\ref{def:crc} applied to $\mathcal D_{\rm core}$.
Since $\mathcal D_{\rm core}\subseteq\mathcal D_{\rm reg}$, one has
$J_K^{\rm core}\le J_K^{\rm reg}$, and
Theorem~\ref{thm:finite-core-closure} equates exact feasibility.
For a recovered simplex $T$, write
$E_{\rm src}(T):=\dist_\infty(S,T)+\max_i\sup_{x\in T}
\dist_\infty(F_i(x),T)$.
\begin{theorem}[Quantitative robust pullback transfer]
\label{thm:robust-transfer}
Let $N_{\rm src}$ be the binary encoding length of a bounded structured
Intermediate Simplex instance, and let $N_{\rm out}$ be the encoding length of
its compiled target, with
$N_{\rm src}\le N_{\rm out}\le N_{\rm src}^a$. For a normalized shared
realization $\mathcal{R}$ of dimension $d\le K$, let
$\delta_{\mathcal{R}}^{\rm core}$ denote its maximum pointwise response error
on the finite core, so that
$J_K^{\rm core}=\min_{d\le K,\mathcal{R}}\delta_{\mathcal{R}}^{\rm core}$.
There exist absolute constants $A,c_*>0$, a fixed conditioning exponent
$B_{\rm typed}$ specified by the parameter ledger in
Section~\ref{sec:finite-core-upper-bound}, equation~\eqref{eq:typed-exponents},
and the conservative robust exponent
$\alpha_{\rm rob}:=2^{-200}$, equal to $\alphaor$ for the explicit compiler,
as supplied by Remark~\ref{rem:lojasiewicz-route}, such that every
realization with
$\delta_{\mathcal{R}}^{\rm core}\le c_*N_{\rm src}^{-B_{\rm typed}}$
yields a source simplex whose containment and invariance residuals satisfy
\begin{equation}
E_{\rm src}
\le
A N_{\rm src}^{20+B_{\rm typed}}
\bigl(\delta_{\mathcal{R}}^{\rm core}\bigr)^{\alpha_{\rm rob}}.
\label{eq:robust-transfer-a}
\end{equation}
Consequently, if the normalized source promise has covering gap
$\gamma_{\rm src}\ge N_{\rm src}^{-c_{\rm gap}}$ for a fixed constant
$c_{\rm gap}>0$ (the slack embedding changes the raw source gap only by the
polynomial factor in \eqref{eq:phi-lipschitz}), then every NO instance satisfies
\begin{equation}
J_K^{\rm core}
\ge
\min\!\left\{c_*N_{\rm src}^{-B_{\rm typed}},
\left(\frac{\gamma_{\rm src}}{2A N_{\rm src}^{20+B_{\rm typed}}}\right)^{1/\alpha_{\rm rob}}\right\}
\ge
N_{\rm out}^{-C_{\rm out}},
\label{eq:robust-transfer-b}
\end{equation}
for a universal constant $C_{\rm out}$ depending only on
$c_{\rm gap}$ and the compiler parameters, independently of the instance.
On YES instances, $J_K^{\rm core}=0$.
\end{theorem}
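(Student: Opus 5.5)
The plan is to assemble the theorem from three earlier ingredients: the response-to-source extraction in the proof of Theorem~\ref{thm:typed-local-shared}, the source rounding of Proposition~\ref{prop:analytic-source-rounding} together with the orientation bound of Lemma~\ref{lem:analytic-fixed-orientation}, and the exact finite-core closure of Theorem~\ref{thm:finite-core-closure}. The YES side comes first. If $\gamma_{\rm src}=0$, Theorem~\ref{thm:source-sat-equivalence} and the invariant-sandwich direction of Theorem~\ref{thm:typed-local-shared} give an exact $K$-state product realization on $T\times[q]$. This realization reproduces the full regular family, so it reproduces the core as well, and $J_K^{\rm core}=0$.

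For the extraction inequality \eqref{eq:robust-transfer-a}, I would fix a realization $\mathcal R$ with $d\le K$ and write $\delta=\delta^{\rm core}_{\mathcal R}$. The threshold $\delta\le c_*N_{\rm src}^{-B_{\rm typed}}$ is chosen to imply the typed threshold \eqref{eq:typed-threshold}; the ledger gives $\beta^{-1}\le N_F^{24}$, and $N_F\le N_{\rm src}$. Under this threshold, rounding the de Bruijn effects forces exactly $k$ states in each cell. Every probe used in that proof lies in $U_\varepsilon\cup\bigcup_iU_{\omega_i}$, so only core responses are needed. Next, I would apply the Hoffman repair of Lemma~\ref{lem:pf-repair} and the leakage bound \eqref{eq:typed-leakage} to obtain a $k$-vertex simplex $T\subseteq P$ with invariance residual \eqref{eq:typed-convex-error}. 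The defect identity \eqref{eq:app-defect} then converts this into containment residual \eqref{eq:typed-source-error}. Summing the two terms gives $E_{\rm src}(T)\le c\,N_{\rm src}^{B_{\rm typed}}\delta$. This is linear in $\delta$, and since $\delta\le1$ and $\alpha_{\rm rob}<1$ we have $\delta\le\delta^{\alpha_{\rm rob}}$. That weakens the bound to the stated form, with room to spare in the $N_{\rm src}^{20}$ factor. I would keep the factor $N_{\rm src}^{20}$ and the exponent $\alpha_{\rm rob}$ so that the statement stays compatible with the route through (C5). There the planar residual must pass through Lemma~\ref{lem:analytic-fixed-orientation}, which costs the prefactor in \eqref{eq:rounding-clause}.

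For the gap \eqref{eq:robust-transfer-b}, I would argue by contradiction on a NO instance. Suppose some $\mathcal R$ has $\delta$ below both terms of the minimum. The first term puts us in the extraction regime, and the resulting $T$ satisfies $\dist_\infty(S,T)\le E_{\rm src}\le A N_{\rm src}^{20+B_{\rm typed}}\delta^{\alpha_{\rm rob}}<\gamma_{\rm src}/2$. This contradicts $g_k(S,P)=\gamma_{\rm src}$. Because $J_K^{\rm core}$ is attained by compactness, the minimum is a genuine lower bound. To reach the final inequality, I would substitute $\gamma_{\rm src}\ge N_{\rm src}^{-c_{\rm gap}}$, which holds by Proposition~\ref{prop:analytic-source-rounding} after the slack factor \eqref{eq:phi-lipschitz}. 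The second term is then at least $(2A)^{-1/\alpha_{\rm rob}}N_{\rm src}^{-(c_{\rm gap}+20+B_{\rm typed})N_0}$. Using $N_{\rm src}\le N_{\rm out}$ converts everything to a power of $N_{\rm out}$, with the absolute constants absorbed by raising the exponent slightly for $N_{\rm out}\ge2$.

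The main obstacle is checking that the extraction really uses only core words. In particular, the leakage estimate needs the post-macro phase suffixes $\omega_i\pl T^r$, and the identification of $F_i$ needs $\omega_i\pl Pd_j\#$. Both must be declared pairs of $U_{\omega_i}$ for every anchor root, and the reset-aware targets \eqref{eq:parser-targets} must make the synchronized-cell argument valid for roots starting in arbitrary code states. I would first verify that the root set $A\times Q$ covers every endpoint sector used in the rounding, and handle the non-reset sectors through the first reset macro.
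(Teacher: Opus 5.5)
Your proof is correct. It matches the paper on the YES side and on the gap side: the product witness on $T\times[q]$ gives zero core defect, a NO witness below both terms contradicts the source promise, and the constants are absorbed using $N_{\rm src}\le N_{\rm out}$.

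Where you differ is in how you obtain \eqref{eq:robust-transfer-a}.

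\emph{The paper's route.} It feeds the whole chain \eqref{eq:ledger-chain} into that estimate, including the source-rounding steps $\ep\le c_bN_F^{8}\es$ and $E_{\rm clause}\le c_cN_F^{20}\ep^{\alphaor}+\es$. That is where the exponent $\alpha_{\rm rob}$, the prefactor $N_{\rm src}^{20}$, and the absorbed factor $N_F^{8\alpha_{\rm rob}}$ come from.

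\emph{Your route.} You observe that $E_{\rm src}(T)$, as literally defined (containment residual plus invariance residual of the recovered simplex), is already controlled linearly by \eqref{eq:typed-convex-error} and \eqref{eq:typed-source-error}. The exponent there is $b_H+b_\eta+2b_\beta+4\le B_{\rm typed}$. You then weaken the bound using $\delta\le\delta^{\alpha_{\rm rob}}$.

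\emph{What each buys.} Your route gives a strictly stronger, linear estimate. It also shows that the orientation lemma and the rounding proposition are needed only to establish the hypothesis $\gamma_{\rm src}\ge N_{\rm src}^{-c_{\rm gap}}$, which the theorem assumes, and not the transfer itself. The paper's route keeps the transfer in the $\dr^{\alpha}$ format of (C5) and lets one chain produce the clause-level contradiction directly. Strictly speaking, though, it bounds the clause-coordinate error rather than $E_{\rm src}$, so your derivation matches the statement more faithfully.

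Your planned check that every extraction probe lies in $\mathcal C(I)$ is also worth keeping. This covers $\omega_i\pl{T}^r$ and $\omega_i\pl{P}d_j\#$ for all roots in $A\times Q$, with non-reset code states handled through the first macro. The paper only asserts this.
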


\begin{proof}
The inequality chain \eqref{eq:ledger-chain}, which combines the
response-to-source estimate \eqref{eq:typed-source-error}, the change of
coordinates \eqref{eq:phi-lipschitz}, and the source-rounding estimates,
gives \eqref{eq:robust-transfer-a} after absorbing the fixed normalization
factors into $A$ and $B_{\rm typed}$.  The robust transfer uses the
fixed choice $\alpha_{\rm rob}=2^{-200}$ from
Remark~\ref{rem:lojasiewicz-route}.  With $\alpha_{\rm rob}<1$,
the factor $N_F^{8\alpha_{\rm rob}}$ from \eqref{eq:ledger-chain} is
absorbed by the displayed polynomial prefactor because $B_{\rm typed}$
contains the nonnegative ledger exponents and the fixed slack-embedding term.
If a NO witness had defect
below both terms in \eqref{eq:robust-transfer-b}, it would lie within the
extraction threshold and yield source residual smaller than
$\gamma_{\rm src}/2$, contradicting the source promise. Each term is an
inverse polynomial with a fixed exponent. The relation
$N_{\rm src}\le N_{\rm out}\le N_{\rm src}^a$ absorbs the fixed
coefficients into $C_{\rm out}$. A source YES simplex gives the exact product
witness, so the core defect is zero.
\end{proof}
\begin{theorem}[Classical local realizability does not compose]
\label{thm:headline}
There is a polynomial-time many-one reduction from the bounded structured
rank-tight Intermediate Simplex problem to controlled response instances with
threshold $K=kq$ such that:
\begin{enumerate}[label=(\roman*)]
\item every legal $w$, including the empty word, satisfies
$C_{\rm loc}(\{U_w\})=K$;

\item the independent-query static relaxation of the union table has exact
minimum $K$ on both YES and NO instances;

\item deciding whether all menus admit a single chronological shared classical
$K$-state realization is strongly bounded-rational
$\mathsf{PromiseNP}$-hard;

\item for a constant $C_{\rm out}$ determined by the source-gap exponent and
compiler, the promise problem of distinguishing
\begin{equation}
J_K^{\rm core}=0
\qquad\text{from}\qquad
J_K^{\rm core}\ge N_{\rm out}^{-C_{\rm out}}
\label{eq:headline-gap}
\end{equation}
is strongly bounded-rational $\mathsf{PromiseNP}$-hard. The same defect gap
holds for the full-language defect $J_K^{\rm reg}$;

\item the parser, sector count $q$, budget $K$, regular-language specification,
and all rational coefficients have description length polynomial in
$N_{\rm out}$, equivalently in $N_{\rm src}$. For a given macro word $w$, its menu words and rational response values
have descriptions of size polynomial in $N_{\rm out}+|w|$.
\end{enumerate}
\end{theorem}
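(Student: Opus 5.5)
The plan is to assemble Theorem~\ref{thm:headline} from results already established, with the reduction fixed as the composition of two maps. First, the direct 3-SAT construction of Section~\ref{sec:structured-source} produces $(S_F,P_F)$. Second, this is passed through the slack embedding $\Phi$ and the contraction maps $F_i(x)=\rho x+\eta s_i$ with $\eta=1/(1024DQ_F)$. Finally, the five-letter parser of Section~\ref{sec:literal-parser} is applied with $q=2^{2h+1}$ and $K=kq$. Polynomial-time computability and item (v) follow from the explicit bit bounds in Section~\ref{sec:structured-source}, from Lemma~\ref{lem:analytic-compiler}, and from the closed barycenter formula \eqref{eq:word-barycenter}. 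The last of these gives the menu words and rational targets of $U_w$ in size polynomial in $N_{\rm out}+|w|$. For strong bounded-rationality, note that all source numerators and denominators are polynomially bounded. The core menus only involve $F_i$ applied to anchors, which are rational with polynomially bounded entries by Lemma~\ref{lem:interior-anchor-extension}.

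Items (i) and (ii) come from Proposition~\ref{prop:five-letter-generic}, which verifies (C1)--(C5) for this source, followed by Theorem~\ref{thm:generic-compiler-transfer}. Equivalently, (i) is Theorem~\ref{thm:exact-local-typed}, whose upper bound uses the interior-anchor simplex $T_{s_w}$ for each word. Item (ii) is the remark after that theorem: the rank-sum lower bound together with the basis-state static witness. Neither depends on satisfiability, so both hold on YES and NO instances.

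For item (iii), the exact equivalence chains Theorem~\ref{thm:typed-local-shared}, which says a shared $K$-state realization exists iff there is a $k$-simplex $S\subseteq T\subseteq P$ invariant under the $F_i$, with Theorem~\ref{thm:source-sat-equivalence}. For the contractions, invariance follows from the sandwich by convexity, and $\Phi$ preserves the simplex sandwich. This shows that exact shared realizability holds iff the formula is satisfiable, which already gives hardness. For item (iv), YES instances have $J_K^{\rm core}=J_K^{\rm reg}=0$ via the product witness. On NO instances, Proposition~\ref{prop:analytic-source-rounding} gives $g_k\ge N_F^{-C_{\rm src}}$, and \eqref{eq:phi-lipschitz} turns this into $\gamma_{\rm src}\ge N_F^{-C_{\rm src}-3}/100$. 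Theorem~\ref{thm:robust-transfer} then yields $J_K^{\rm core}\ge N_{\rm out}^{-C_{\rm out}}$. Since $\mathcal D_{\rm core}\subseteq\mathcal D_{\rm reg}$, we get $J_K^{\rm reg}\ge J_K^{\rm core}$, so the same gap holds for the full language.

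The main obstacle is making sure the quantitative chain holds with exponents that are genuinely instance-independent. Concretely, this means checking that the polynomial relations among $N_F$, $N_{\rm src}$ and $N_{\rm out}$ (namely $N_{\rm src}=\poly(N_F)$, $N_F\le N_{\rm src}$, and $N_{\rm out}\le N_{\rm src}^{a}$) absorb the ledger exponents $C_{\rm typed}$ and the orientation exponent $\alphaor=2^{-200}$ into one universal $C_{\rm out}$. It also means checking that the hypotheses of the extraction thresholds, \eqref{eq:typed-threshold} and $c_*N_{\rm src}^{-B_{\rm typed}}$, are covered by the first term of the minimum in \eqref{eq:robust-transfer-b}. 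I would verify this by rewriting every bound in terms of $N_{\rm out}$ using $N_F\ge N_{\rm out}^{1/a'}$ for a fixed $a'$.
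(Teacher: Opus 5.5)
Your proposal is correct and follows the paper's proof essentially step for step: (i) from Theorem~\ref{thm:exact-local-typed}, (ii) from the rank-sum bound and the basis-root static factorization, (iii) from the invariant-simplex characterization of shared $K$-state realizability combined with the source satisfiability equivalence, (iv) by transferring the inverse-polynomial source covering gap to $J_K^{\rm core}$ and then using $J_K^{\rm core}\le J_K^{\rm reg}$, and (v) from the parser's size bounds. One bookkeeping slip: turning lower bounds of the form $N_F^{-C}$ or $N_{\rm src}^{-C}$ into $N_{\rm out}^{-C_{\rm out}}$ requires the upper bounds $N_F\le N_{\rm src}\le N_{\rm out}$, which hold trivially and are already built into the hypothesis of Theorem~\ref{thm:robust-transfer}; the relation $N_F\ge N_{\rm out}^{1/a'}$ you propose to use points the wrong way for this and only serves the polynomial-size claim in (v).
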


The exact realizability claim quantifies over the infinite regular language
$\mathcal{L}$. Since every shared witness for the full language also witnesses
the finite core, a defect lower bound on $\mathcal{C}(I)$ transfers directly to
the full-language defect $J_K^{\rm reg}$. The robust lower bound can therefore
be established using only the explicitly listed finite core.

\begin{proof}
Theorem~\ref{thm:exact-local-typed} supplies the lower and upper bounds in
(i). The rank-sum argument and the basis-root static factorization following
that theorem prove (ii).

At the saturated budget $d=K$, post-macro payload probes identify the induced
affine actions with the source maps $F_i$ on an affine basis and recover a
payload simplex $T$ satisfying
$F_i(T)\subseteq T$ for every $i\in[m]$. By rank-tight chronological closure,
Theorem~\ref{thm:rank-tight-closure}, shared feasibility is equivalent to the
existence of the corresponding Intermediate Simplex witness. Together with
the source problem's $\mathsf{PromiseNP}$ hardness, this proves (iii).

For robustness, Condition~(C5), certified by the analytic source-rounding
argument and the fixed-dimensional orientation bounds in
Sections~\ref{sec:source-geometry} and \ref{sec:compiler}, transfers the
inverse-polynomial source covering gap to the finite-core defect
$J_K^{\rm core}$, proving (iv). Since every core pair is declared in the full
regular family, the same lower bound holds for $J_K^{\rm reg}$. Finally, the
polynomial encoding bounds in (v) follow from the total parser construction.
\end{proof}

\begin{corollary}[Strict shared-state separation on the hard family]
\label{cor:strict-shared-separation}
Every compiled instance satisfies
$C_{\rm loc}=C_{\rm stat}=K$. On a source YES instance,
$C_{\rm seq}=K$; on a source NO instance, either
$C_{\rm seq}\ge K+1$ or no finite-dimensional exact realization exists.
\end{corollary}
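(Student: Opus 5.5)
The corollary should follow by combining the exact-width statements already established for the compiled family. No new estimates are needed.

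\emph{Step 1: the two baselines.} The identity $C_{\rm loc}=K$ is Theorem~\ref{thm:headline}(i), i.e.\ Theorem~\ref{thm:exact-local-typed}. Since the menu family is indexed by $w\in\mathcal L$, and every $U_w$ has exact optimum $K$, the supremum in $C_{\mathrm{loc}}$ is attained and equals $K$. The identity $C_{\rm stat}=K$ is Theorem~\ref{thm:headline}(ii). Its lower bound comes from Lemma~\ref{lem:endpoint-block} (the rank-sum bound), and its upper bound from the basis-root factorization with one independent effect per column. Both hold regardless of whether the source is a YES or NO instance, so the first sentence holds for every compiled instance.

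\emph{Step 2: YES instances.} If the source is YES, Theorem~\ref{thm:source-sat-equivalence}, transported through the slack embedding $\Phi$, gives a $k$-vertex simplex with $S\subseteq T\subseteq P$. For the contraction maps $F_i(x)=\rho x+\eta s_i$, convexity and $s_i\in T$ give $F_i(T)\subseteq T$. Theorem~\ref{thm:typed-local-shared} (equivalently Theorem~\ref{thm:generic-compiler-transfer}) then yields the product realization on $T\times[q]$ with $K$ states, so $C_{\rm seq}\le K$. The general inequality $C_{\rm stat}\le C_{\rm seq}$ from Section~\ref{sec:controlled-models}, together with Step~1, gives $C_{\rm seq}\ge K$. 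Hence $C_{\rm seq}=K$.

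\emph{Step 3: NO instances.} Suppose $C_{\rm seq}\le K$, so some exact shared realization of dimension $d\le K$ exists. Step~1 and $C_{\rm stat}\le C_{\rm seq}$ force $d\ge K$, so $d=K$. Theorem~\ref{thm:typed-local-shared} (or Theorem~\ref{thm:finite-core-closure} together with the exact equivalence in Theorem~\ref{thm:generic-compiler-transfer}) then produces a $k$-vertex simplex $S\subseteq T\subseteq P$. Pulling it back by the affine bijection $\Phi$ gives a $D$-simplex in $P_F$ containing $S_F$, which contradicts Theorem~\ref{thm:source-sat-equivalence}. Therefore $C_{\rm seq}>K$. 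Since $C_{\rm seq}$ is integer-valued or $+\infty$ by the convention $\min\emptyset=+\infty$, either $C_{\rm seq}\ge K+1$ with a finite witness, or no finite-dimensional exact realization exists.

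\emph{Main obstacle.} The delicate point is Step~3's use of the converse direction at dimension exactly $K$. That argument needs the phase-cell saturation (each cell has exactly $k$ states) to apply to an arbitrary realization. This is supplied by Lemma~\ref{lem:endpoint-block}, so I would cite it explicitly. I would also note that the budget saturation rules out latent states outside the charged cells.
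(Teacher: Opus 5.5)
Your proposal is correct and follows the same route the paper leaves implicit: $C_{\rm loc}=C_{\rm stat}=K$ is read off from Theorem~\ref{thm:headline}(i)--(ii). The YES/NO dichotomy then comes from the exact invariant-simplex equivalence (Theorem~\ref{thm:typed-local-shared} together with Theorem~\ref{thm:source-sat-equivalence}), combined with $C_{\rm stat}\le C_{\rm seq}$ and the integrality of $C_{\rm seq}$, with the saturation $d=K$ supplied by Lemma~\ref{lem:endpoint-block}, exactly as you describe.
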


\begin{remark}[Polynomial-time slices at fixed structural parameters]
\label{rem:fixed-parameter-boundary}
If the control-alphabet size $|\Gamma|$, carrier budget $K$, and root-set size
$|H|$ are fixed constants, then exact shared feasibility for an explicitly
listed finite core, as well as rational-threshold feasibility for
$J_K^{\rm core}$, is decidable in polynomial time in the encoded input length.
After eliminating row-sum variables, the realization is described by
$O(|\Gamma|K^2+K|H|+K)$ real variables, a fixed constant. For each explicitly
listed word $w$, the response $p_hM_we$ is a polynomial of degree at most
$|w|+2$; with a fixed number of variables, its expanded representation has
polynomial size in the explicit word length. Fixed-variable real quantifier
elimination therefore gives the stated polynomial-time bound, including
inverse-polynomial rational thresholds.

For bounded-rational Intermediate Simplex with fixed affine dimension $D$,
the $D+1$ simplex vertices are the only geometric variables, and determinant
and barycentric constraints have constant degree. The fixed-dimensional source
problem is therefore also decidable in polynomial time. The hardness results
 above use growth in at least one relevant structural parameter. In the
geometric branch, the source dimension $D$ (and hence $k=D+1$) and the sector
count $q$ grow with the source; the payload--delay and finite-menu reductions
use their own growing widths.
\end{remark}

\section{Discussion and open problems}
\label{sec:discussion}

Predictive state representations characterize controlled stochastic dynamics
through action-conditioned predictions of observable future tests rather than
unobservable latent trajectories~\citep{littman2001predictive}. CRC
(Definition~\ref{def:crc-quantity}) isolates a fundamental tension in
sequential model compression: a POMDP or predictive state representation (PSR)
may admit low-rank local predictive tests and a compact normalized static
carrier, yet fail to admit a unified normalized dynamical realization at the
same state budget. The payload--delay construction demonstrates this intrinsic
state overhead, while the finite-menu and regular-family results establish
computational hardness for both exact realization and robust promise gaps.
These results formally quantify the gap between local predictive adequacy and
global dynamic realizability
(Theorems~\ref{thm:main-product-separation},
\ref{thm:finite-menu-completeness}, and~\ref{thm:headline}).

This phenomenon connects directly to positive realization theory, which studies
the existence and minimality of positive state-space representations from a
geometric viewpoint~\citep{benvenuti2004tutorial}. Static factorization
compresses an input--output response table into a normalized carrier with
independently chosen query effects, whereas chronological realization requires
the query-effect columns to lie in the common pullback orbit of a single
terminal effect under shared transition operators. The normalized carrier
width $C_{\mathrm{stat}}$ measures this nonnegative-factorization width and is
distinct from ordinary linear algebraic rank; $C_{\mathrm{seq}}$ measures the
additional chronological compatibility requirement. This distinction suggests
model-reduction frameworks that penalize dynamic closure defects alongside
pointwise response error.

The connection with restricted nonnegative matrix factorization
\citep{chistikov2016restricted,chistikov2017irrationality} becomes precise in
the rank-tight calibrated setting of Section~\ref{sec:rank-tight-geometry}.
There, the static factorization is represented by an intermediate simplex
$\mathcal Q$ satisfying
$\mathcal R_{\mathrm{root}}\subseteq\mathcal Q\subseteq\mathcal P$, where
$\mathcal R_{\mathrm{root}}$ is the convex hull of root responses and
$\mathcal P$ is the response polytope. The vertex count of $\mathcal Q$
then agrees with the corresponding restricted static width. Rank-tight
chronological realization adds simultaneous invariance under the finitely
generated semigroup of affine stochastic maps $F_c$. Static nested-polytope
geometry represents the zero-dynamics baseline, whereas chronology enforces
pullback invariance.

The irrationality phenomena in nonnegative factorization explain why rational
certificates may fail in general. In the present framework, explicitly listed
finite menus admit an $\exists\mathbb{R}$-complete classification, while the
compiled regular family has an $\exists\mathbb{R}$ upper bound supplied by
finite-core closure, together with strong bounded-rational promise hardness.
Moreover, one-step invariance on an affine basis determines the response
behavior across an infinite regular control language, a property with no static
analogue. Although probabilistic automaton equivalence is decidable in
polynomial time~\citep{tzeng1992equivalence}, the multi-menu chronological
counterpart studied here exhibits the algebraic complexity established by our
realization theorems. The classical single-word minimal-state problem is the
surrounding automata-theoretic context~\citep{paz1971probabilistic}.

In empirical regimes, finite-sample fluctuations perturb both response tables
and estimated transition identities. Spectral learning guarantees show how
structural conditioning, such as singular-value separation, enables latent
dynamical recovery~\citep{hsu2012spectral}. Determining whether empirical
response estimates are compatible with a nearby closed stochastic model poses
a structured model-selection problem. A natural statistical theory would
connect anchor conditioning, mixing rates of the underlying operators, and
metric distances between local predictors and a common invariant carrier,
providing a theoretical bridge from worst-case metric transfer bounds to
finite-sample learning guarantees.

\medskip
Several structural and complexity questions remain open:
\begin{itemize}
\item \emph{Fixed-parameter tractability.}
The fixed-parameter boundary established in
Remark~\ref{rem:fixed-parameter-boundary} shows that the present hardness
reductions require growth in the relevant structural width parameters. It
remains open whether tractability holds under sparse, reversible, rapidly
mixing, or bounded-treewidth transition structures, or whether hardness
persists under alternative parameterizations.

\item \emph{Beyond the rank-tight regime.}
Finite-core closure currently relies on the saturated state budget $K=kq$,
where the calibrated endpoint cells exhaust the latent state space. It remains
open which observability or minimality conditions guarantee finite-core
certificates when the allowed budget is enlarged beyond this threshold,
$K>kq$, so that a candidate may contain uncalibrated states with
$kq<d\le K$.

\item \emph{Discrete versus continuous certificates.}
The finite-menu $\exists\mathbb{R}$ reduction admits continuously varying exact
witnesses, whereas the robust $\mathsf{PromiseNP}$ reduction confines
parameters near a discrete Boolean set. It remains open which dynamical
response systems guarantee polynomial-size, exactly verifiable rational or
algebraic certificates for exact realizability.

\item \emph{Sharp robust transfer exponents.}
It remains open whether stronger geometric conditioning or alternative source
reductions can improve the robustness exponents while preserving the
inverse-polynomial defect gap.

\item \emph{Noncommutative quantum extensions.}
A natural noncommutative generalization replaces state probability vectors by
density operators on a Hilbert space, stochastic updates by completely
positive trace-preserving (CPTP) quantum channels, and the terminal effect by
a POVM effect. Determining whether quantum chronological realization exhibits
analogous state-dimension separations or distinct algebraic complexity
barriers is a natural direction for future work.
\end{itemize}

\phantomsection\label{sec:main-body-end}
\section*{Acknowledgments}
The author is deeply grateful to his doctoral advisor, Fei Yan, for providing a
supportive research environment and the freedom to pursue this line of work.
The author also thanks his master's advisor, Pan Zhang, for his rigorous
training in computational theory. Finally, the author thanks his wife, Ling
Wang, for her patience, encouragement, and support throughout this work.


\bibliographystyle{elsarticle-num}
\bibliography{reference}
\end{document}